\providecommand{\U}[1]{\protect\rule{.1in}{.1in}}
\newtheorem{theorem}{Theorem}
\newtheorem{lemma}[theorem]{Lemma}
\newtheorem{remark}[theorem]{Remark}
\newenvironment{proof}[1][Proof]{\noindent\textbf{#1.} }{\ \rule{0.5em}{0.5em}}
\begin{document}

\title{More on Intractability of Thermalization: (almost) i.i.d. inputs and finite
lattices }
\author{Keiji Matsumoto keiji@nii.ac.jp\\NII}
\maketitle

\section*{Abstract}

This work is an extention of Shiraishi and Matsumoto\cite{ShiraishiMatsumoto},
and discueese the computational complexity of the long-term average of local
observables in one-dimensional lattices with shift-invariant nearest-neighbor
interactions for simple initial states. 

As shown in the previous paper, the problem is generally intractable. In this
paper we refine the statement further. First, we consider restriction of the
initial state, where the state of all the sites are the same except for a
single site. We show this version of the problem is also undecidable
(RE-complete). Then we turn to the case where the lattice size is finite:
depening on the defitiniton of the input size, this version of problem is
either EXPSPACE-complete or PSPACE-complete.

\section{What is this paper about}

\subsection{The sketch of the problem and the results}

This work is an extension of Shiraishi and Matsumoto\cite{ShiraishiMatsumoto}.

In this paper, we discuss the computational complexity of the long-term
average of local observables in one-dimensional lattices such that: The
Hamiltonian is shift-invariant and consists of nearest neighbor interactions,
and the initial states are either
\begin{equation}
\cdots\otimes\left\vert \psi\right\rangle \otimes\left\vert \psi\right\rangle
\otimes\left\vert e_{0}\right\rangle \otimes\left\vert \psi\right\rangle
\otimes\left\vert \psi\right\rangle \otimes\cdots\label{rho-L-0}%
\end{equation}
or
\begin{equation}
\cdots\otimes\left\vert \psi\right\rangle \otimes\left\vert \psi\right\rangle
\otimes\left\vert \psi\right\rangle \otimes\left\vert \psi\right\rangle
\otimes\left\vert \psi\right\rangle \otimes\cdots. \label{rho-L-iid-0}%
\end{equation}
Some more additional technical conditions are introduced so that the
single-site version of the problem is easily solved by a Turing machine. We
show that both of them are intractable, and it seems that the former is much
more intractable (undecidable, indeed) than the latter.

For the statement of our result, \ the detail of the setting is explained.
Throughout the paper, we fix a CONS of the Hilbert space of each lattice, and
denote it $\{\left\vert e_{\kappa}\right\rangle \}_{\kappa=0}^{d-1}$. First,
we suppose the components of $\left\vert \psi\right\rangle $, $\left\vert
e_{0}\right\rangle $, or terms of Hamiltonian are "easily computable" by a
Turing machine from bit strings specifying them. For example, in the models
that is used to show the intractability of these problems, they are computed
by applying finitely many $+$,$-$, $\times$, $/$ and $\sqrt{}$ to the natural
numbers that is encoded by the bit string. The word "easily computable" means
that the time for computing an approximation with the error not more than
$\varepsilon$ is bounded by a polynomial function of $\log\varepsilon$ and the
number of digits of the input bit string.

Second assumption is that gaps between two distinct energy levels (each level
may be degenerate) are bounded from below by a rational number that can be
easily computed provided the lattice size is finite.

If the components of

For the statement of our result, \ the detail of the setting is explained.
Throughout the paper, we fix a CONS of the Hilbert space of each lattice , and
denote it by $\{\left\vert e_{\kappa}\right\rangle \}_{\kappa=0}^{d-1}$.
\bigskip Each term of the Hamiltonian and the state vector $\left\vert
\psi\right\rangle ,\,\left\vert e_{0}\right\rangle $ are specified by their
components with respect to a fixed basis .

Discussion of computation of continuous quantities by Turing machine, which
can manage only finitely many bits, is rather subtle.

\ \ 

Moreover, gaps between two distinct energy levels (each energy level may
degenerate) are bounded by The latter assumption is equivalent to that an
upper bound to the relaxation time $T$ is easily computed for each finite
lattice size.

Therefore, our problem is tractable if the lattice size is small, but

Without these condition, clearly the problems are intractable even for a
single system: For example, consider a two level system where the energy gap
is arbitrary real number and identity between the two energy levels are
decidable by no TM (This is the case even if they are arbitrary computable
real numbers.). Then clearly the long-term average is also impossible to compute.

\begin{remark}
The celebrated theorem\cite{CGW}\cite{BCLP} stating undecidability of a
spectral gap in a infinite-size lattice is not relevant here. In our setting,
the time duration goes to infinite before the lattice size is taken to
infinity, so the spectral gap of a finite-size lattice is relevant.
\end{remark}

In most part of the paper, we consider the Hamiltonian as a parameter of the
problem, and the state $\left\vert \psi\right\rangle $ is the input of the
problem ($\left\vert e_{0}\right\rangle $ can be fixed without loss of
generality). This means that the Turing machine (program) solving the problem
may depend on the Hamiltonian but must be independent of $\left\vert
\psi\right\rangle $. (We also discuss the version in which the state is a
parameter and the Hamiltonian is the input.) The state is represented by a bit
string $v$, and the components of $\left\vert \psi\right\rangle $ is computed
from $v$ by a TM with the error at most $\varepsilon$ using time that is a
polynomially bounded function of $n:=|v|$ and $\log(1/\varepsilon)$. Moreover,
they are algebraic numbers of degree $2^{p(n)}$ (Here $p(n)$ is a polynomially
bounded function of $n$) in the examples we use for the proof of the hardness.

The output of the problem is also discretized: we question whether the
long-term average is above a certain threshold or not. Then the problem is
undecidable (cannot be solved by any TM) if the initial state is in the form
of (\ref{rho-L-0}). If the initial state is (\ref{rho-L-iid-0}), it seems that
the problem becomes easier. First, it is \textsf{EXPSPACE}-hard (at least as
hard as any problems with space $2^{p(n)}$ (Here $p(n)$ is a polynomially
bounded function of $n$).Moreover, if the lattice size is finite and
$O(2^{p(n)})$, then the problem can be solved using space $2^{q(n)}$ ( $p(n)$
and $q(n)$ are polynomially bounded functions of $n$), it is contained in
\textsf{EXPSPACE}.

These assertions are proved for a space average of a single site observable
$A$, where $A$ is almost arbitrary. So in fact these statements can be
recasted in terms of the space average of the single-site density operator.
Indeed, we prove the intractability of the long-term average by reducing to it
to the following decision problem: we question weather the space average of
the single-site density operator stays in the neighborhood of $\left\vert
e_{1}\right\rangle $ forever or eventually comes close to the mixed state
$\frac{1}{2}(\left\vert e_{1}\right\rangle \left\langle e_{1}\right\vert
+\left\vert e_{2}\right\rangle \left\langle e_{2}\right\vert )$. This means
that the effect of small perturbation in $\left\vert \psi\right\rangle
\approx\left\vert e_{1}\right\rangle $ is quite unpredictable. (Here, the
small perturbation is added to all the sites, so it is in fact quite large as
a whole: The problem is whether its effect can be observed locally or not.)

To show the intractability of this problem on the single-site density
operator, we reduce to it the halting problem of Turing machines
(\textsf{Halt}, hereafter), which is a textbook example of an undecidable
problem: Given a description of a TM and an input to it, the question is
whether the computation starting from the input ever terminates or not. This
question is equivalent to whether a universal Turing machine (UTM), a Turing
machine that simulates any TM, halts on a given input. The proof exploits the
correspondence between a Hamiltonian dynamics (Hamiltonian cell automata) and
a reversible discrete time dynamics by \cite{NagajWojcan}. The former is not
a(n approximation to) continuous incorporation of the time development of the
latter, but eventually the `probability' of staying a state of a time step of
the latter becomes almost uniform. By this powerful theoretical tool, we can
mimic a UTM if it is reversible (URTM, in short \cite{Morita}). Therefore,
tracking the dynamics of URTM cannot be easier than tracking the corresponding
Hamiltonian dynamics. Since the former is intractable, so is latter.

However, there are several subtle points. First, we have to detect whether the
URTM $M$ has halted or not by a space average of a single site observable $A$,
and $A$ is almost arbitrary. Second, initially all the sites are in the same
state $\left\vert \psi\right\rangle $ except perhaps the $0$-th site, so the
informations that regulates the dynamics including the input to the URTM $M$
are encoded into $\left\vert \psi\right\rangle $. \ Since the dimension of the
single site is fixed and the size of the input to $M$ is arbitrary, the
informations are encoded to the components of $\left\vert \psi\right\rangle $.
So the decode of the information is not very trivial. Here we cannot rely on
the phase estimation as in \cite{NagajWojcan}, since the information is
encoded to a state and not to a Hamiltonian. In addition, the initial state is
necessarily in superposition of various classical configurations. However, in
\cite{NagajWojcan} they deals with the dynamics starting from a state
corresponding to a single classical configuration, so we have to evaluate the
effect of the interference between them.

Major part of the paper is devoted to circumvent these difficulties.

In addition, there are some more minor results. First, if the initial state is
(\ref{rho-L-0}), the problem is as difficult as the halting problem
\textsf{Halt}, in the sense that the problem can be reduced to \textsf{Halt}.
(The reduction is either Turing reduction or many-one reduction, depending on
the setting.) Second, we present the version of the problem where the input is
Hamiltonian and the initial state is a parameter, and showed these two
versions are equivalent. Third, when the lattice size is finite and $O(p(n))$
($p$ is a polynomial) and the input is the Hamiltonian, the problem is
\textsf{PSPACE}-complete. In showing the last result, we use a version of
phase estimation that can be emulated by the Hamiltonian cell automata without
a clock counter.

In \cite{BCLP} and \cite{CGW}, the undecidability of the spectral gap is
proved, and in the course of the proof, they proved the undecidability of
whether the ground energy is larger than a threshold or not. This statement
can be read as undecidability of the long-term average of the energy when the
initial state is the ground state. Though they don't state this explicitly, we
call this statement as "theirs", since it is trivial given their results.

There are some strong points in our result compared with "theirs". First,
while the initial state in the "their" version is quite complicated, our
initial state is in a much simpler form.

Second, in our case, the Hamiltonian of the system is fixed, so it is
incomputably is true even if \ we use the algorithm depending on the
Hamiltonian. Meantime, in "their" setting, both the initial state and the
Hamiltonian are the input of the problem, so the claim is weaker than ours.

Third, in our case, the single site observable can be almost arbitrary, which
is not the case for "their" version, since the ground state strongly depends
on the input string.

Fourth, the dimension of each site is probably smaller than theirs: First, in
their construction each site is tensor product of the Hilbert space
corresponding to a tape cell and a finite control, respectively. Meantime, in
our case, each site is the direct sum of these two spaces. Second, we do not
use a clock counter to regulate the move of the machine, which is used in
\cite{BCLP} and \cite{CGW}.

\subsection{Statement of the problem and the main theorems}

Consider 1-dim chain of $d$-level quantum system, $\otimes_{i=-\infty}%
^{\infty}\mathcal{H}_{i}$, $\dim\mathcal{H}_{i}=d$, with the shift-invariant
Hamiltonian $H$, only with nearest neighbor interaction. Our interest is the
space average of the single site observable $A$, and its long-term average.
The purpose of the paper is to show the computation of the long-term average
is impossible by any Turing machine.

We show the assertion is true even if the initial state is as simple as
(\ref{rho-L-0}), where $\,\left\vert \psi\right\rangle $ and $\left\vert
e_{0}\right\rangle $ are mutually orthogonal pure states, $\,\left\langle
e_{0}\right\vert \left.  \psi\right\rangle =0$. Moreover, the single site
observable $A$ can be arbitrary operator which is not trivial, i.e., a
constant multiple of $I$.

We define and compute every quantity on the finite size cluster with $L+1$
sites, and then take $L$ to $\infty$ in the end. Though we state our arguments
in the periodic boundary condition, they are generalized to the open boundary
condition without much difficulty.

In the case of periodic boundary condition, the $(L+1)$-th site is identical
with the $0$-th site. So, consider $\otimes_{i=0}^{L}\mathcal{H}_{i}$ , and%
\begin{equation}
\rho^{L}=\left\vert e_{0}\right\rangle \left\langle e_{0}\right\vert
\otimes(\left\vert \psi\right\rangle \left\langle \psi\right\vert )^{\otimes
L}, \label{rho-L}%
\end{equation}
corresponding to (\ref{rho-L-0}) and
\begin{equation}
\rho^{L}=(\left\vert \psi\right\rangle \left\langle \psi\right\vert )^{\otimes
L}, \label{rho-L-iid}%
\end{equation}
corresponding to (\ref{rho-L-iid-0}).

Let us denote the `restriction' of $H$ to $\otimes_{i=0}^{L}\mathcal{H}_{i}$
by $H^{L}$, by identifying the $L+2\,$-th site with the $0$-th.

Let $A$ be an observable in $\mathcal{H}$, and our interest is the space
average of $A$'s:
\[
A^{(L)}:=\frac{1}{L+1}\sum_{i=0}^{L}A_{i},
\]
where $A_{0}:=A\otimes I\otimes I\cdots\otimes I$, $A_{1}:=I\otimes A\otimes
I\otimes\cdots\otimes I$, etc.. Equivalently, we are interested in the space
average of \ the state,
\[
\overline{\rho}(t,L):=\frac{1}{L+1}\sum_{i=0}^{L}\rho_{i}(t),
\]
where%

\begin{align*}
\rho^{L}(t)  &  :=e^{iH^{L}}\rho^{L}e^{-iH^{L}},\\
\rho_{i}(t)  &  :=\,\mathrm{tr}\,_{\otimes_{i^{\prime}:i^{\prime}\neq
i}\mathcal{H}_{i^{\prime}}}\rho(t)=\left.  \rho(t)\right\vert _{\mathcal{H}%
_{i}}.
\end{align*}

We argue, roughly, that
\begin{align*}
&  \lim_{L\rightarrow\infty}\lim_{T\rightarrow\infty}\frac{1}{T}\int_{0}%
^{T}\mathrm{tr}\,\rho^{L}(t)\,A^{(L)}dt\\
&  =\,\lim_{L\rightarrow\infty}\lim_{T\rightarrow\infty}\frac{1}{T}\int
_{0}^{T}\mathrm{tr}\,\overline{\rho}(t,L)\,Adt
\end{align*}
cannot be computed.

For simplicity, we consider the decision theory version of the problem.\ A
decision problem is a problem that can be answered either Yes or No. It is
said to be \textit{decidable} if there is a TM that halts on any input and can
solve it, and \textit{undecidable} if not\thinspace\cite{BCLP}. Whenever we
say something is an input of the problem, the Turing machine solving the
problem should be independent of it: We should find a program that solves the
problem for all the presupposed inputs. On the other hand, if something is
constant or fixed, then the Turing machine solving the problem may vary with it.

In our case, we consider the situation either
\begin{equation}
\,\,\varliminf_{L\rightarrow\infty}\,\left\vert \lim_{T\rightarrow\infty}%
\frac{1}{T}\int_{0}^{T}\,\mathrm{tr}\,\overline{\rho}(t,L)\,A\,dt\,-c_{1}%
\right\vert \,\,\geq2\varepsilon_{0} \label{A>c}%
\end{equation}
or
\begin{equation}
\varlimsup_{L\rightarrow\infty}\left\vert \lim_{T\rightarrow\infty}\frac{1}%
{T}\int_{0}^{T}\,\mathrm{tr}\,\overline{\rho}(t,L)\,A\,dt\,-c_{1}\right\vert
\,\,\leq\varepsilon_{0}. \label{A<c}%
\end{equation}
is correct, and question whether the former is true or not. Clearly, this
decision problem version is not more difficult than the computation of the
limit. Also one can obtain the approximation of the quantity by asking this
question for various values of $c_{1}$, provided the limit exists. Here,
$\lim_{T\rightarrow\infty}$ exists since the system is finite dimensional for
each $L$, while $\lim_{L\rightarrow\infty}$ may not exist. \ 

Roughly, there are two ways of defining the problem: either fix the
Hamiltonian and takes the initial state as an input, or the other way around.
The former is called \textsf{OAS}$(d,A,H,f_{\psi},c_{1},\varepsilon_{0})$:

\begin{description}
\item[{$\mathbf{[[}\mathsf{OAS}(d,A,H,f_{\psi},c_{1},\varepsilon
_{0})\mathbf{]]}$}] 

\item[Fixed:] $d:=\dim\mathcal{H}$, $A$, $c_{1}$, $\varepsilon_{0}$, 1- and 2-
body terms of $H$, $\left\vert e_{0}\right\rangle $, the function $f_{\psi}$
of a bit string to a state vector $f_{\psi}(v)=\left\vert \psi\right\rangle
\in\mathcal{H}$: They are chosen so that either (\ref{A>c}) or (\ref{A<c}) is
true for the initial state (\ref{rho-L}) for any input $v$.

\item[Input:] A bit string $v$.

\item[Question:] Whether (\ref{A>c}) is the case or not when the initial state
is (\ref{rho-L}).
\end{description}

If the Hamiltonian is the input, the problem is called \textsf{OAH}%
$(d,A,f_{H},\left\vert \psi\right\rangle ,c_{1},\varepsilon_{0})$:

\begin{description}
\item[{$\mathbf{[[}\mathsf{OAH}(d,A,\left\vert \psi\right\rangle ,f_{H}%
,c_{1},\varepsilon_{0})\mathbf{]]}$}] 

\item[Fixed] $d:=\dim\mathcal{H}$, $A$, $c_{1}$, $\varepsilon_{0}$,
$\left\vert \psi\right\rangle $, $\left\vert e_{0}\right\rangle $, the
function $f_{H}$ of a bit string to 1- and 2- body terms of $H$. : They are
chosen so that either (\ref{A>c}) or (\ref{A<c}) is true for the initial state
(\ref{rho-L}) for any input $v$.
\end{description}

When the initial state is (\ref{rho-L-iid}), we replace the eq. (\ref{rho-L})
by the eq. (\ref{rho-L-iid}), and remove $\left\vert e_{0}\right\rangle $ from
the list of the fixed objects: the modification of \textsf{OAS} and
\textsf{OAH} in this way is denoted by \textsf{OAS-iid }and \textsf{OAH-iid}, respectively.

In the description of the problems, linear operators and vectors are
represented by components in terms of a standard basis, and we suppose
$\left\vert e_{0}\right\rangle $ is one of basis vectors without loss of generality.

Since our interest is in the difficulty of the problem that arises from
composition of subsystems, we put the assumptions so that the single-site
version of the problem is tractable. First, the states and the Hamiltonians
are easily computable function of natural numbers: In fact, we can even
restrict the computation to composition of finite numbers of arithmetic
operations ($+,-,/,\ast$)\ and $\sqrt{\cdot}$. Second, we suppose the gaps
between two distinct energy levels are easily computable.

\begin{theorem}
\label{thm:OA}Suppose $d=\dim$ $\mathcal{H}\geq d_{0}$. Suppose $f_{\psi}$ and
$f_{H}$ can be computed with the error at most $\varepsilon$ using time which
is bounded from above by a polynomial in $n:=\left\vert v\right\vert $ and
$\log(1/\varepsilon)$. Moreover, $f_{\mathrm{gap}}(L,n)$ is polynomial-time
computable. Moreover, if $\lambda$ and $\lambda^{\prime}$ are two distinct
eigenvalues of $H$,
\begin{equation}
\left\vert \lambda-\lambda^{\prime}\right\vert \geq1/f_{\mathrm{gap}}(L,n).
\label{gap}%
\end{equation}
Here $f_{\mathrm{gap}}(L,n)$ is a polynomial\thinspace-\thinspace time
computable integer valued function. Then

(i) There is an $H$ satisfying the assumption such that: for most of $A$,
i.e., any $A$ with $\left\langle e_{1}\right\vert A\left\vert e_{1}%
\right\rangle \neq\left\langle e_{2}\right\vert A\left\vert e_{2}\right\rangle
$ where $\{\left\vert e_{\kappa}\right\rangle \}_{\kappa=0,1,2}$ is an
orthonormal set of vectors, \textsf{OAS(-iid)}$(d,A,H,f_{\psi},c_{1}%
,\varepsilon_{0})$ for some $c_{1}$, $\varepsilon_{0}$ is undecidable
(\textsf{EXPSPACE}-hard).

(ii) For any $\left\vert e_{0}\right\rangle $ and $\left\vert \psi
\right\rangle $ with $\left\langle e_{0}\right.  \left\vert \psi\right\rangle
=0$ and for most of $A$, i.e., any $A$ which is not constant multiple of the
identity on the orthocomplement subspace of $\left\vert e_{0}\right\rangle $,
\textsf{OAH(-iid)}$(d,A,f_{H},\left\vert \psi\right\rangle ,c_{1}%
,\varepsilon_{0})$ for some $c_{1}$, $\varepsilon_{0}$ is undecidable
(\textsf{EXPSPACE}-hard).
\end{theorem}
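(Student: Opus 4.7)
The plan is to establish both undecidability and \textsf{EXPSPACE}-hardness by a single reduction from the halting problem of a universal reversible Turing machine $M$ (for undecidability) and from the acceptance problem of a $2^{p(n)}$-space-bounded reversible TM (for \textsf{EXPSPACE}-hardness) to \textsf{OAS(-iid)} and \textsf{OAH(-iid)}. The skeleton in both cases is the Nagaj--Wocjan history-state construction \cite{NagajWojcan}: one fixes a shift-invariant nearest-neighbor Hamiltonian $H$ on a qudit chain whose time evolution, after averaging, puts almost uniform weight on the classical configurations corresponding to the successive steps of $M$. Thus a single-site observable $A$ with $\langle e_1|A|e_1\rangle\neq\langle e_2|A|e_2\rangle$ can be used as an indicator distinguishing ``$M$ has reached the halting symbol on its tape'' from ``$M$ has not''. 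Because the gap assumption (\ref{gap}) is what makes the long-time average equal the diagonal ensemble, all the limits in (\ref{A>c})--(\ref{A<c}) are well-defined quantities determined by the spectral projectors of $H^L$.

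For part (i) I would fix $H$ to be the Hamiltonian that simulates a \emph{universal} reversible TM $U$, and have the input bit string $v$ specify $|\psi\rangle$ through $f_{\psi}$. The subtlety mentioned in the overview arises exactly here: the description of the pair $\langle M,x\rangle$ being fed to $U$ must be smuggled into the amplitudes of one single-site vector $|\psi\rangle$, since every site except the $0$-th carries the \emph{same} $|\psi\rangle$. I would do this by letting $|\psi\rangle$ be a superposition over the basis vectors whose amplitudes encode the bits of $\langle M,x\rangle$ in a base-$K$ expansion of a rational/algebraic coefficient (of degree $2^{p(n)}$, as the text already advertises), and design a preparatory stage of $H$'s dynamics that, using $|e_0\rangle$ as a seed marker at site $0$, coherently reads off these bits one by one, writing them onto the portion of the chain that will serve as $U$'s input tape. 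After this preparation the ordinary Nagaj--Wocjan clocked dynamics of $U$ takes over. The \textsf{iid} variant is handled identically except that the seed marker must itself be generated by local collisions within the homogeneous sea $|\psi\rangle^{\otimes\mathbb{Z}}$, at the price of enlarging $d_0$ a little.

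For part (ii) the roles are reversed: the state is fixed and the input $v$ specifies the 1- and 2-body terms of $H$ through $f_H$. Now the encoding of $\langle M,x\rangle$ into the amplitudes of $H$ is trivial, and the preparatory stage is not needed. The same reduction gives undecidability; replacing $U$ by a $2^{p(n)}$-space machine (so that the relevant lattice size $L$ is exponential in $n$) gives \textsf{EXPSPACE}-hardness, since the gap $f_{\mathrm{gap}}(L,n)$ still satisfies (\ref{gap}) for the Nagaj--Wocjan construction. To connect the single-site density operator $\overline{\rho}(t,L)$ to the observable $A$, I would arrange that, conditional on $M$ halting, the diagonal ensemble at almost every site is close to $\tfrac{1}{2}(|e_1\rangle\langle e_1|+|e_2\rangle\langle e_2|)$, while if $M$ never halts it stays close to $|e_1\rangle\langle e_1|$; then $\mathrm{tr}\,\overline{\rho}(t,L)A$ separates the two cases by a margin set by $|\langle e_1|A|e_1\rangle-\langle e_2|A|e_2\rangle|$, from which $c_1$ and $\varepsilon_0$ are read off.

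The technical heart, and the step I expect to be the main obstacle, is controlling the interference in the \textsf{OAS} case where the initial state is a genuine superposition at every site rather than a single classical configuration as in \cite{NagajWojcan}. Concretely, I would (a) expand $|\psi\rangle^{\otimes L}$ in the computational basis to write $\rho^L$ as a (possibly signed) combination of classical configurations $|c\rangle\langle c'|$; (b) argue that only configurations with a single ``active'' symbol near site $0$ survive in the long-time average, because the Hamiltonian's preparatory stage is only triggered by the unique defect $|e_0\rangle$ (or by the specific collision pattern used in the \textsf{iid} case); and (c) bound the off-diagonal contributions using the uniform spectral-gap assumption and the polynomial computability of $f_{\mathrm{gap}}$, so that they do not obscure the $\{\text{halt},\text{non-halt}\}$ signal. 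The remaining bookkeeping---that the space-average over $L+1$ sites does not dilute the signal to below $\varepsilon_0$, and that all promises in the definitions of \textsf{OAS(-iid)} and \textsf{OAH(-iid)} are satisfied uniformly in $v$---is routine once the interference analysis is in place.
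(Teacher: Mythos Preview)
Your overall architecture is right and matches the paper: Nagaj--Wocjan dynamics of a reversible TM, input encoded in the amplitudes of $|\psi\rangle$, halting signalled by the single-site density drifting from $|e_1\rangle\langle e_1|$ toward $\tfrac12(|e_1\rangle\langle e_1|+|e_2\rangle\langle e_2|)$, and the observable $A$ then separates the two cases. The paper in fact organizes this as a two-step argument: Theorem~\ref{thm:OA} is derived from Theorem~\ref{thm:SA} (the state-average version) by a one-paragraph calculation fixing $c_1=\langle e_1|A|e_1\rangle$ and $\varepsilon_0$ proportional to $|\langle e_1|A|e_1\rangle-\langle e_2|A|e_2\rangle|$; the real work is in the lemmas behind Theorem~\ref{thm:SA}.

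Where your plan diverges, and where I think it would actually fail as stated, is step (c) of your interference analysis. You propose to ``bound the off-diagonal contributions using the uniform spectral-gap assumption.'' The spectral gap controls dephasing \emph{in time} between energy eigenstates; it says nothing about the cross-terms $\langle\boldsymbol{x}'|e^{itH}B^{(L)}e^{-itH}|\boldsymbol{x}\rangle$ between distinct initial classical configurations $\boldsymbol{x}\neq\boldsymbol{x}'$, and there are exponentially many such terms with no small prefactor. The paper kills these cross-terms \emph{exactly}, not approximately, by two design choices you do not mention: (1) the input is written on a \emph{read-only} track of each $M$-cell, so two initial configurations that differ anywhere remain distinguishable at every later step in either the position or the content of an $M$-cell; and (2) the relevant observable $B=|e_\kappa\rangle\langle e_{\kappa'}|$ annihilates every basis vector that is not an $A$-cell symbol. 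Lemma~\ref{lem:cofig-deco} and Lemma~\ref{lem:init-deco} then give $\langle\boldsymbol{x}'|e^{itH}B^{(L)}e^{-itH}|\boldsymbol{x}\rangle=0$ outright. Your step (b) (``only configurations with a single active symbol near site $0$ survive'') is also off: every legal configuration in the expansion of $|\psi\rangle^{\otimes L}$ contributes; it is the \emph{off-diagonal} pairings that vanish.

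Two smaller points. First, you do not mention the amplification stage: after $M$ halts the head sweeps the tape flipping $A$-cells from $a_1$ to $a_2$; without this the halting signal sits on $O(1)$ sites and is invisible in the space average. Second, for the \textsf{iid} case the paper does not ``generate the seed marker by local collisions''; it simply puts a small $|e_0\rangle$ component into $|\psi\rangle$ itself, so that a typical configuration is already partitioned into blocks, each an independent finite-tape machine. This is why \textsf{OAS-iid} is only shown \textsf{EXPSPACE}-hard rather than undecidable: the block length, hence the available tape, is bounded by a function of $n$ determined by the amplitude of $|e_0\rangle$.
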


Here, \textsf{EXPSPACE} is a set of decision problems that can be solved by a
TM using $S(n)$ tape cells, where $S(n)=\exp(p(n))$ for some polynomial
$p(n)$. A decision problem is \textsf{EXPSPACE}-hard if any member of
\textsf{EXPSPACE} can be converted to that problem efficiently, \textit{i.e.},
the time needed for the conversion is in polynomial of $n$ (many-one
reduction). \ So the problem seems to be easier for initial states in the form
of (\ref{rho-L-iid}), although it is still awfully difficult.

Observe the observable $A$ is almost arbitrary in the statement of the above
theorems. This indicates that our argument is concerned essentially with a
state rather than an observable To make this point explicit, let us consider
an decision problem about the space average of the single site state. We
suppose the long-term behavior of $\overline{\rho}(t,L)$ is either one of
them: Its long-term average becomes a specific mixed state
\begin{equation}
\varlimsup_{L\rightarrow\infty}\left\Vert \lim_{T\rightarrow\infty}\frac{1}%
{T}\int_{0}^{T}\overline{\rho}(t,L)\mathrm{d}t-((1-\eta)\left\vert
e_{1}\right\rangle \left\langle e_{1}\right\vert +\eta\left\vert
e_{2}\right\rangle \left\langle e_{2}\right\vert )\right\Vert _{1}%
\leq\varepsilon_{1},\label{sa-1}%
\end{equation}
or stays in the neighbor of $\left\vert e_{1}\right\rangle $ for all the time
\begin{equation}
\forall L\geq L_{0},\,\forall T\in\lbrack0,\infty]\left\Vert \frac{1}{T}%
\int_{0}^{T}\overline{\rho}(t,L)\mathrm{d}t-\left\vert e_{1}\right\rangle
\left\langle e_{1}\right\vert \right\Vert _{1}\leq\varepsilon_{1},\label{sa-2}%
\end{equation}
Here $L_{0}$ is function of $\left\vert \psi\right\rangle $ and $\varepsilon
_{1}$ that can be computed by a Turing machine that always halts,
\textit{i.e}., a total recursive function.

Clearly,
\begin{equation}
\forall L\geq L_{0},\,\forall t\in\lbrack0,\infty],\,\,\left\Vert
\overline{\rho}(t,L)-\left\vert e_{1}\right\rangle \left\langle e_{1}%
\right\vert \right\Vert _{1}\leq\varepsilon_{1}. \label{sa-2-2}%
\end{equation}
is sufficient for the latter to hold.

When the input is the initial state, we call the problem $\mathsf{SAS}%
(d,H,f_{\psi},\eta,\varepsilon_{1})$, which is defined by:

\begin{description}
\item[{$\mathbf{[[}\mathsf{SAS}(d,H,f_{\psi},\eta,\varepsilon_{1})\mathbf{]]}%
$}] 

\item[Fixed:] $d=\dim\mathcal{H}$, $\{\left\vert e_{\kappa}\right\rangle
;\left\langle e_{\kappa^{\prime}}\right.  \left\vert e_{\kappa}\right\rangle
=\delta_{\kappa,\kappa^{\prime}}\}_{\kappa=0,1,2}$, 1- and 2- body terms of
$H$,$\ \eta$ and $\varepsilon_{1}$ with $0<2\varepsilon_{1}<\eta<1$, the
function $f_{\psi}$ of a bit string to a state vector $f_{\psi}(v)=\left\vert
\psi\right\rangle $ living in $\mathcal{H}$: They are controlled so that:
$\overline{\rho}(0,\infty)=\left\vert \psi\right\rangle \left\langle
\psi\right\vert $ is close to $\left\vert e_{1}\right\rangle \left\langle
e_{1}\right\vert $,
\begin{equation}
\,\left\Vert \left\vert \psi\right\rangle \left\langle \psi\right\vert
-\left\vert e_{1}\right\rangle \left\langle e_{1}\right\vert \right\Vert
_{1}\leq\varepsilon_{1}, \label{sa-0}%
\end{equation}
and either (\ref{sa-1}) or (\ref{sa-2}) is true for the initial state
(\ref{rho-L}).

\item[Input:] A bit string $v$.

\item[Question:] (\ref{sa-1}) is true or not when the input is (\ref{rho-L}).
\end{description}

Meantime, if the input is the Hamiltonian, the problem is called
$\mathsf{SAH}(d,f_{H},\eta,\varepsilon_{1})$, and the list of the constants is changed:

\begin{description}
\item[{$\mathbf{[[}\mathsf{SAS}(d,H,f_{\psi},\eta,\varepsilon_{1})\mathbf{]]}%
$}] 

\item[Fixed:] $d=\dim\mathcal{H}$, $\{\left\vert e_{\kappa}\right\rangle
;\left\langle e_{\kappa^{\prime}}\right.  \left\vert e_{\kappa}\right\rangle
=\delta_{\kappa,\kappa^{\prime}}\}_{\kappa=0,1,2}$, $\left\vert \psi
\right\rangle :=\left\vert e_{1}\right\rangle $,$\ \eta$ and $\varepsilon_{1}$
with $0<2\varepsilon_{1}<\eta<1$, the function $f_{H}$ of $v$ to 1- and 2-body
terms of the Hamiltonian $H$. They are controlled so that either (\ref{sa-1})
or (\ref{sa-2}) is true for the initial state (\ref{rho-L}).
\end{description}

When the initial state is (\ref{rho-L-iid}), we replace the eq. (\ref{rho-L})
by (\ref{rho-L-iid}), and remove $\left\vert e_{0}\right\rangle $ from the
list of the fixed objects: the modification of \textsf{SAS} and \textsf{SAH}
in this way is denoted by \textsf{SAS-iid }and \textsf{SAH-iid}, respectively.

Here note that the whole state $\rho^{L}$ can significantly varies with the
input, though the average single-site state $\overline{\rho}(0,\infty)$ is
almost constant of it.

\begin{theorem}
\label{thm:SA}Suppose $d:=\dim$ $\mathcal{H}\geq d_{0}$ . Suppose also
$f_{\psi}$ and $f_{H}$ can be computed with the error at most $\varepsilon$
using time in polynomial of $n:=\left\vert v\right\vert $ and $\log
(1/\varepsilon)$. Moreover, the energy gaps of the Hamiltonian $H$ satisfy
(\ref{gap}). Then

(i) For any $\{\left\vert e_{\kappa}\right\rangle \}_{\kappa=0,1,2}$ there is
an $H$ such that \textsf{SAS(-iid)}$(d,H,f_{\psi},\frac{1}{2},\varepsilon
_{1})$ with $\varepsilon_{1}\in(0,1/4)$ is undecidable (\textsf{EXPSPACE}-hard).

(ii) For any $\{\left\vert e_{\kappa}\right\rangle \}_{\kappa=0,1,2}$
\textsf{SAH(-iid)}$(d,f_{H},\frac{1}{2},\varepsilon_{1})$ with $\varepsilon
_{1}\in(0,1/4)$ is undecidable (\textsf{EXPSPACE}-hard).
\end{theorem}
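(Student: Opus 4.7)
My plan is to prove Theorem~\ref{thm:SA} by a many-one reduction from the halting problem \textsf{Halt} for a universal reversible Turing machine (URTM), following the roadmap sketched in the introduction. For the EXPSPACE-hardness lower bound, I replace \textsf{Halt} by the space-bounded halting problem on a URTM whose tape is restricted to $2^{p(n)}$ cells, which is \textsf{EXPSPACE}-complete. In both reductions the geometric construction is identical; only the interface between the bit string $v$ and the physical data $(|\psi\rangle, H)$ changes between \textsf{SAS}/\textsf{SAS-iid} and \textsf{SAH}/\textsf{SAH-iid}.

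For part (i), given $(M,x)$ I define $f_\psi(v)$ to produce a perturbation of $|e_1\rangle$,
\[
|\psi\rangle \;=\; \sqrt{1-\delta^{2}}\,|e_{1}\rangle + \delta\,|\phi(M,x)\rangle ,
\]
where $|\phi(M,x)\rangle$ lies in the span of $\{|e_\kappa\rangle\}_{\kappa\geq 3}$ and its amplitudes encode the bits of $(M,x)$ as polynomial-time computable algebraic numbers of degree $2^{p(n)}$. The Hamiltonian is the Nagaj--Wojcan Hamiltonian cellular automaton~\cite{NagajWojcan} adapted so that each local branch of the computational-basis expansion of $\rho^{L}$ is driven through one reversible time step of the URTM~\cite{Morita}; since the data $(M,x)$ lives in $|\psi\rangle$, the Hamiltonian itself is fixed, shift-invariant and nearest-neighbor as required. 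The $|e_{0}\rangle$ seed at site $0$ serves as the head marker that triggers the URTM trajectory on the branch aligned with it. The key dynamical claim is that upon halting, the cell that carries the ``halted'' flag projects onto $|e_{2}\rangle$, and the Nagaj--Wojcan quasi-uniform occupation of the clock orbit together with spatial averaging over the $L+1$ translates yields $\overline{\rho}(t,L)\to \tfrac12(|e_{1}\rangle\langle e_{1}|+|e_{2}\rangle\langle e_{2}|)$, matching (\ref{sa-1}); if $M$ does not halt on $x$, the ``halted'' subspace is never populated and $\overline{\rho}(t,L)$ remains within an $O(\varepsilon_{1})$ neighborhood of $|e_{1}\rangle\langle e_{1}|$, matching (\ref{sa-2}).

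For part (ii) the roles of state and Hamiltonian are swapped: I set $|\psi\rangle=|e_{1}\rangle$ and move the encoding of $(M,x)$ into the coefficients of the 1- and 2-body terms of $f_{H}(v)$; the same cell automaton argument then goes through. The \textsf{iid} variants require extra work because the $|e_{0}\rangle$ seed is no longer available; here I plan to follow the strategy suggested in the introduction, namely to rely on whichever site first produces the head-present subregister under the shift-invariant dynamics, and to absorb the resulting translational ambiguity into the spatial average $A^{(L)}$, paying for it by a controlled dilution of the target mixed state's weight (which is why the threshold $\eta=\tfrac12$ and the tolerance $\varepsilon_{1}<1/4$ appear in the statement).

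\textbf{Main obstacle.} The hardest step is the \textsf{SAS} encoding of $(M,x)$ into the amplitudes of $|\psi\rangle$: the trick used in~\cite{NagajWojcan} of phase-estimating the Hamiltonian is not available since the information lives in a state, and the initial product state $(|\psi\rangle\langle\psi|)^{\otimes L}$ is a genuine superposition of exponentially many classical tape configurations, so the interference between these branches under the Hamiltonian CA dynamics must be controlled explicitly rather than just on each classical branch. A second, more technical obstacle is arranging that the assumption~(\ref{gap}) holds uniformly in $(M,x)$; this constrains the allowed algebraic encoding of $f_{\psi}$ and $f_{H}$ and will dictate the precise form of the degree-$2^{p(n)}$ algebraic numbers used in the construction.
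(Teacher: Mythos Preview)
Your proposal has the right high-level architecture (reduce \textsf{Halt} to \textsf{SAS} via a Nagaj--Wocjan Hamiltonian CA simulating a URTM, then treat \textsf{SAH} and the \textsf{-iid} variants as modifications), but there is a genuine gap in the core dynamical claim, and your mechanism for obtaining the target mixed state $\tfrac12(|e_1\rangle\langle e_1|+|e_2\rangle\langle e_2|)$ does not work as stated.

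\textbf{The missing amplification stage.} You write that ``the cell that carries the `halted' flag projects onto $|e_2\rangle$'' and that spatial averaging then yields the $\tfrac12$ mixture. But a single flagged cell contributes only $O(1/L)$ to the spatial average $\overline{\rho}(t,L)$ and vanishes as $L\to\infty$; the Nagaj--Wocjan quasi-uniform occupation of the orbit does not change this, since at every point of the orbit only $O(1)$ sites carry the halt signal. The paper resolves this by appending an explicit \emph{amplification stage} to the simulated machine: after the URTM halts, the head sweeps the tape and flips a dedicated register of each cell from $a_1$ to $a_2$. The orbit then has length $\approx 2L$ past the halt point, and at step $j_0+2k$ exactly $k$ cells have been flipped; the Nagaj--Wocjan time average over this sweep gives $\frac{1}{J_{\boldsymbol{x}}(L+1)}\sum_j N_2(j)\to\tfrac12$, which is where the $\eta=\tfrac12$ actually comes from. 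Without this sweep there is no macroscopic signal to detect.

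\textbf{The two-cell-type structure.} Relatedly, the paper splits the tape into a small fraction $\alpha=(\varepsilon_1/4)^2$ of $M$-cells (used for decoding and simulating $M$) and an overwhelming majority of $A$-cells (touched only in the amplification stage). This is what guarantees that the unpredictable contents of the simulation region contribute $O(\sqrt{\alpha})$ to the spatial average, so that in the non-halting case $\overline{\rho}(t,L)$ stays within $\varepsilon_1$ of $|e_1\rangle\langle e_1|$ \emph{for all} $t$, as (\ref{sa-2-2}) demands. Your encoding puts the input data into $|\phi(M,x)\rangle$ supported on $\{|e_\kappa\rangle\}_{\kappa\ge3}$ at every site, so even before any dynamics the spatial average already deviates from $|e_1\rangle\langle e_1|$ by $\delta^2$ in a direction you do not control, and once the CA starts rewriting cells during the simulation you have no mechanism keeping $\overline{\rho}(t,L)$ near $|e_1\rangle\langle e_1|$ uniformly in $t$.

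\textbf{Interference control.} You correctly flag the branch-interference problem as the main obstacle but give no plan. The paper's solution is structural rather than analytic: the input is written on a \emph{read-only} first track, so two distinct initial classical configurations $\boldsymbol{x}\neq\boldsymbol{x}'$ remain distinguishable at every later step in at least one $M$-cell or finite-control site. For any single-site observable $B$ annihilating those sectors (in particular $B=|e_\kappa\rangle\langle e_{\kappa'}|$), this forces $\langle j';\boldsymbol{x}'|B^{(L)}|j;\boldsymbol{x}\rangle=0$ and hence exact dephasing of the initial superposition (Lemmas~\ref{lem:cofig-deco}--\ref{lem:init-deco}). Your amplitude encoding has no analogue of this read-only track, so you would have to bound cross terms by hand, and it is not clear this can be done.

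\textbf{Part (ii).} A smaller point: the paper does not redo the CA construction for \textsf{SAH}. It takes the $(H',|\psi'\rangle)$ from part~(i), defines the rotation $V_v$ in $\mathrm{span}\{|e_1\rangle,|\psi'\rangle\}$ sending $|e_1\rangle\mapsto|\psi'\rangle$, and sets $H:=V_v^\dagger H' V_v$ with $|\psi\rangle:=|e_1\rangle$; the dynamics are then unitarily conjugate and the $\varepsilon_1$-bounds transfer with a loss of $\sqrt{2}\,\varepsilon_1$. Your direct-encoding approach could be made to work, but the rotation reduction is both shorter and makes the equivalence of \textsf{SAS} and \textsf{SAH} transparent.
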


In fact, $\mathsf{SAS}(d,H,f_{\psi},\eta,\varepsilon_{1})$ and $\mathsf{SAH}%
(d,f_{H},\eta,\varepsilon_{1})$ turns out to be RE-complete. Also, the former
and the latter \textsf{is} reduced to, therefore not harder than,
$\mathsf{OAS}(d,A,H,f_{\psi},c_{1},\varepsilon_{0})$ and $\mathsf{OAH}%
(d,A,f_{H},\left\vert \psi\right\rangle ,c_{1},\varepsilon_{0})$,
respectively, so Theorems \ref{thm:OA} is a corollary of Theorem \ref{thm:SA}.

\subsection{The main lemmas and the proof of the theorems}

\noindent\textbf{Proof of Theorems\thinspace\ref{thm:OA} from Theorem
\ref{thm:SA}.} We only show the item (i) of them for $\mathsf{OAS}%
(d,A,H,f_{\psi},c_{1},\varepsilon_{0})$, as the proof for the other cases are
almost analogous.

Let the Hamiltonian $H$ be as of the Theorem\thinspace\ \ref{thm:SA}, (i).
Then the instance of SAS with small $\varepsilon_{1}>0$ can be solved by
observing the the long-term average of the space average of $A$ with
$\left\langle e_{1}\right\vert A\left\vert e_{1}\right\rangle $ $\neq
\left\langle e_{2}\right\vert A\left\vert e_{2}\right\rangle $, so it is
reduced to an instance of OAS with certain parameters. This instance of OAS is
not easier than the corresponding instance of SAS, so Theorem \ref{thm:OA},
(i) follows from Theorem \ref{thm:SA}, (i). The parameters of the instance of
SAS and OAS are%
\begin{align}
c_{1}  &  :=\left\langle e_{1}\right\vert A\left\vert e_{1}\right\rangle
,\nonumber\\
\varepsilon_{0}  &  :=\varepsilon_{1}\left\Vert A\right\Vert =\frac{1}{3}%
\eta|\left\langle e_{1}\right\vert A\left\vert e_{1}\right\rangle
-\left\langle e_{2}\right\vert A\left\vert e_{2}\right\rangle |\nonumber\\
\varepsilon_{1}  &  :=\frac{1}{3\left\Vert A\right\Vert }\eta|\left\langle
e_{1}\right\vert A\left\vert e_{1}\right\rangle -\left\langle e_{2}\right\vert
A\left\vert e_{2}\right\rangle |, \label{parameters}%
\end{align}
so that, for any $\sigma_{1}$ with $\left\Vert \sigma_{1}-\left\vert
e_{1}\right\rangle \left\langle e_{1}\right\vert \right\Vert _{1}%
\leq\varepsilon_{1}$ and any $\sigma_{2}$ with $\left\Vert \sigma_{2}%
-((1-\eta)\left\vert e_{1}\right\rangle \left\langle e_{1}\right\vert
+\eta\left\vert e_{2}\right\rangle \left\langle e_{2}\right\vert )\right\Vert
_{1}\leq\varepsilon_{1},$
\begin{align*}
\,\left\vert \mathrm{tr}\,\sigma_{2}A-c_{1}\right\vert  &  \leq\left\Vert
A\right\Vert \left\Vert \sigma_{2}-((1-\eta)\left\vert e_{1}\right\rangle
\left\langle e_{1}\right\vert +\eta\left\vert e_{2}\right\rangle \left\langle
e_{2}\right\vert )\right\Vert _{1}\\
&  \leq\varepsilon_{1}\left\Vert A\right\Vert \leq\varepsilon_{0},\,\\
\,\left\vert \mathrm{tr}\,A(\sigma_{1}-\sigma_{2})\right\vert  &
\geq\,|\mathrm{tr}\,A\left\vert e_{1}\right\rangle \left\langle e_{1}%
\right\vert -\,\mathrm{tr}\,A((1-\eta)\left\vert e_{1}\right\rangle
\left\langle e_{1}\right\vert +\eta\left\vert e_{2}\right\rangle \left\langle
e_{2}\right\vert )|\\
&  -\,\left\Vert A\right\Vert \left\Vert \sigma_{1}-\left\vert e_{1}%
\right\rangle \left\langle e_{1}\right\vert \right\Vert _{1}\\
&  -\left\Vert A\right\Vert \left\Vert \sigma_{2}-((1-\eta)\left\vert
e_{1}\right\rangle \left\langle e_{1}\right\vert +\eta\left\vert
e_{2}\right\rangle \left\langle e_{2}\right\vert )\right\Vert _{1}\\
&  \geq2\eta|\left\langle e_{1}\right\vert A\left\vert e_{1}\right\rangle
-\left\langle e_{2}\right\vert A\left\vert e_{2}\right\rangle |-2\left\Vert
A\right\Vert \varepsilon_{1}\\
&  \geq\varepsilon_{0}.
\end{align*}
$\blacksquare$

In showing Theorem \ref{thm:SA} for the initial state (\ref{rho-L}), we
construct a Hamiltonian and a map of the bit string $v$ to the state
$\left\vert \psi_{v}\right\rangle $ so that the fate of the dynamics will be
either (\ref{sa-1}) or (\ref{sa-2-2}) depending on whether the URTM halts on
the input $v$ or not:

\begin{lemma}
\label{lem:main}Suppose $d:=\dim$ $\mathcal{H}\geq d_{0}$, and let
$\{\left\vert e_{\kappa}\right\rangle \}_{\kappa=0,1,2}$ be an orthonormal set
of vectors. Given also is a URTM $M$ that takes a bit string $v$ as an input.
Then there is a shift-invariant nearest neighbor Hamiltonian $H$, and a map
$f_{\psi}$ of a bit string $v$ to a state vector $\left\vert \psi\right\rangle
$ with
\begin{align}
\left\langle e_{0}\right\vert \left.  \psi\right\rangle  &  =0,\nonumber\\
\left\Vert \left\vert \psi\right\rangle \left\langle \psi\right\vert
-\left\vert e_{1}\right\rangle \left\langle e_{1}\right\vert \right\Vert _{1}
&  \leq\varepsilon_{1}, \label{Ve=e}%
\end{align}
such that the dynamics starting from (\ref{rho-L}) satisfies (\ref{sa-1}) if
$M$ halts on $v$ and $n:=\left\vert v\right\vert \geq n_{0}$. Otherwise, it
satisfies (\ref{sa-2-2}) with $\eta=1/2$. Moreover, the computation time of
the function $f_{\psi}$\ with the error at most $\varepsilon$ is in polynomial
of $n$ and $\log(1/\varepsilon)$.
\end{lemma}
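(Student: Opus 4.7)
\textbf{Proof plan for Lemma \ref{lem:main}.}
The overall strategy is to build a shift-invariant nearest-neighbor Hamiltonian cellular automaton in the style of \cite{NagajWojcan} that simulates the URTM $M$, and to encode the input $v$ into tiny perturbations of $|e_1\rangle$ inside $|\psi\rangle$. The site $0$, initialized to $|e_0\rangle$, is the unique symmetry-breaking seed that launches the simulation; the Hamiltonian is designed so that the fully quiescent configuration $|e_1\rangle^{\otimes \mathbb{Z}}$ is a zero eigenstate and so that all non-trivial interactions occur only when a computational ``head'' meets a non-quiescent site. Shift-invariance and nearest-neighbor locality are then automatic: one writes down one-site and two-site terms that, acting on any basis configuration, either do nothing (quiescent pair) or execute the next URTM transition (head meets a tape symbol). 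The dimension $d_0$ is chosen large enough that a single site carries both a possible tape symbol and a possible head/finite-control state as direct summands of $\mathcal{H}$.

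Next I would design the encoding $f_\psi$. Since $d$ is fixed but $|v|$ is unbounded, the bits of $v$ must be packed into the amplitudes of $|\psi\rangle$. I would take
\[
|\psi\rangle \;=\; \sqrt{1-\delta^2}\,|e_1\rangle \;+\; \delta\,|\phi_v\rangle,
\qquad \langle e_0|\phi_v\rangle=0,\ \langle e_1|\phi_v\rangle=0,
\]
with $\delta$ chosen small enough to enforce (\ref{Ve=e}) and $|\phi_v\rangle$ an algebraic-number vector whose components are polynomial-time computable from $v$ (for instance, built from $\sum_k v_k 2^{-k}$ and a few square roots, so that we stay in the class allowed by the theorem). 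This preserves $\langle e_0|\psi\rangle=0$ and keeps $f_\psi$ computable within the time budget required by the statement. The Hamiltonian cellular automaton is engineered so that, once the head from site $0$ sweeps a stretch of sites, it performs amplitude-estimation-like readouts of these small components to reconstruct $v$ on its simulated tape, and then begins executing $M(v)$.

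The dynamics is then analyzed in the Nagaj--Wojcan framework, which I would invoke as a black box: for a reversible discrete-time trajectory of length $T+1$, the time-averaged state of the Hamiltonian simulator assigns asymptotically uniform weight $1/(T+1)$ to each step. Expanding $(|\psi\rangle\langle\psi|)^{\otimes L}$ in the computational basis gives, to leading order in $\delta$, a convex combination of classical configurations indexed by the set $S\subset\{1,\dots,L\}$ of ``flipped'' sites, with weights roughly $(1-\delta^2)^{L-|S|}\delta^{2|S|}$ plus coherent cross-terms. For each such configuration the simulator either (a) successfully decodes $v$ and runs $M(v)$, or (b) aborts harmlessly back into the quiescent state. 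Configurations of type (a) dominate once $L$ is polynomially larger than $|v|$. A halt-detection gadget (attached to the halting state of $M$) then converts the local tape from $|e_1\rangle$ to $|e_2\rangle$ on a constant fraction of the visited cells. Taking the space average $\overline\rho(t,L)$ and then the time average yields (\ref{sa-1}) with $\eta=1/2$ when $M$ halts, and yields (\ref{sa-2-2}) when $M$ runs forever because the head never triggers the $|e_1\rangle\to|e_2\rangle$ rewrite and the flipped sites remain a $\delta^2$-fraction of the lattice, giving a marginal within $\varepsilon_1$ of $|e_1\rangle\langle e_1|$.

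The main obstacle, and the step I expect to occupy the bulk of the argument, is the one flagged in the introduction: the initial state is intrinsically a superposition of exponentially many classical configurations, and one must show that the contributions from ``bad'' configurations (wrong decoding, partial sweeps, destructive interference between different positions of flipped sites) do not drown out the halt signal in the space-time average. This requires both a careful bookkeeping of the coherent cross-terms in $(|\psi\rangle\langle\psi|)^{\otimes L}$ under the shift-invariant dynamics, and a quantitative use of the Nagaj--Wojcan equidistribution together with the gap bound (\ref{gap}) to control the convergence of the time average uniformly in $L$. Once these interference estimates are in hand, matching the constants to $\eta=1/2$ and $\varepsilon_1$ is a mechanical last step.
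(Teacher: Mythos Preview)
Your outline has the right skeleton---Nagaj--Wojcan automaton, input in the amplitudes, a seed at site $0$, a halt-triggered $|e_1\rangle\to|e_2\rangle$ rewrite---and you correctly single out the interference between classical branches of $|\psi\rangle^{\otimes L}$ as the crux. But you are missing the two structural devices that the paper introduces precisely to avoid the ``careful bookkeeping of coherent cross-terms'' you propose to carry out, and without them that bookkeeping does not close.

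First, the paper does not merely bound the cross-terms; it makes them vanish \emph{identically}. The tape has two tracks, and the first track (where the input bits sit) is \emph{read-only}: the decoder counts relative frequencies of $0$/$1$ in the $M$-cells without ever rewriting them. Because distinct initial configurations $\boldsymbol{x}\neq\boldsymbol{x}'$ differ in the read-only track, their descendants $|j;\boldsymbol{x}\rangle$ and $|j';\boldsymbol{x}'\rangle$ remain orthogonal on at least one $M$-cell or finite-control site for all $j,j'$, and any observable $B$ with $B|x\rangle=0$ on those sites annihilates the cross term (Lemmas~\ref{lem:cofig-deco}--\ref{lem:init-deco}). Your plan to have the head ``perform amplitude-estimation-like readouts'' suggests you would rewrite the input register; once you do that, the orthogonality witnesses are gone and there is no mechanism left to kill interference. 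The gap bound (\ref{gap}) plays no role here---the dephasing is purely combinatorial.

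Second, you omit the two-species cell structure. The paper splits cells into a tiny $\alpha$-fraction of $M$-cells (on which the entire decode-and-simulate runs) and a $(1-\alpha)$-majority of $A$-cells that are touched only in the final amplification sweep. This is what makes the space-averaged marginal insensitive to the unpredictable contents of the simulation region and what pins down $\eta=1/2$: the amplification head flips $A$-cells left-to-right, so the $j$-average of the $a_2$-count is $\sum_k 2k/(2L\cdot L)\to 1/2$. Without isolating the simulation in an $o(1)$ fraction of sites, the arbitrary observable $A$ could pick up $O(1)$ contributions from the working tape and your non-halting estimate (\ref{sa-2-2}) would fail.
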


\begin{lemma}
\label{lem:main-2}In the setting of the previous lemma, replace a URTM $M$ by
an RTM $M$ solving an \textsf{EXPSPACE}-complete problem. Then there is a
shift-invariant nearest neighbor Hamiltonian $H$, and a map $f_{\psi}$ of a
bit string $v$ to a state vector $\left\vert \psi\right\rangle $ with
(\ref{Ve=e}) such that the dynamics starting from (\ref{rho-L-iid}) satisfies
(\ref{sa-1}) if $M$ accepts $v$ and $n:=\left\vert v\right\vert \geq n_{0}$.
Otherwise, it satisfies (\ref{sa-2-2}) with $\eta=1/2$. Moreover, the
computation time of the function $f_{\psi}$\ with the error at most
$\varepsilon$ is in polynomial of $n$ and $\log(1/\varepsilon)$.
\end{lemma}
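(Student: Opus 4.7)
The plan is to reuse the Hamiltonian and essentially the same encoding as in Lemma~\ref{lem:main}, but to move the computation marker from a single pre-placed site into the i.i.d.\ state via a small amplitude on a seed basis vector, and then to show that the dominant contributions to the space-averaged single-site density operator reproduce the dichotomy of Lemma~\ref{lem:main}. Concretely, I would fix the shift-invariant nearest-neighbor Hamiltonian $H$ that drives a Nagaj--Wocjan Hamiltonian cellular automaton emulating the given RTM $M$ on the input encoded in the amplitudes of $|\psi\rangle$, and choose $|\psi\rangle = \sqrt{1-\alpha^{2}}\,|e_{1}\rangle + \alpha\,|\xi(v)\rangle$, where $|\xi(v)\rangle$ is orthogonal to $|e_{0}\rangle$ and plays the combined role of the defect $|e_{0}\rangle$ and the input carrier of Lemma~\ref{lem:main}, and $\alpha=\alpha(n)=2^{-Cp(n)}$ for a sufficiently large constant $C$; the space bound of $M$ is $2^{p(n)}$, which is why \textsf{EXPSPACE} is the relevant class.

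Expanding
\begin{equation*}
|\psi\rangle^{\otimes L} = \sum_{S\subseteq\{0,\dots,L-1\}}\alpha^{|S|}(1-\alpha^{2})^{(L-|S|)/2}\,|\Phi_{S}\rangle
\end{equation*}
over seed-position subsets, the $|S|=0$ term is the inert background that the Hamiltonian from Lemma~\ref{lem:main} leaves essentially in $|e_{1}\rangle^{\otimes L}$; each $|S|=1$ summand is, up to a translation, the Lemma~\ref{lem:main} initial condition, so the local state around the seed evolves into the output predicted there for $M$ on input $v$ inside a window of width $O(2^{p(n)})$. The choice $\alpha^{2}\cdot 2^{p(n)}\ll 1$ ensures that, with overwhelming weight, a fixed site is far from all seeds during the relevant computation time, so on the diagonal in $S$ the one-seed sector contributes the signal of Lemma~\ref{lem:main} at each site while the multi-seed sectors contribute at most $O(\alpha^{2}\cdot 2^{p(n)})$ to $\|\overline{\rho}(t,L)-\cdot\|_{1}$ uniformly in $t$ and $L$.

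For the off-diagonal (interference) part, I would use that after tracing out all but one site, a cross term between $|\Phi_{S}\rangle$ and $|\Phi_{S'}\rangle$ survives only when $S$ and $S'$ agree off of the unseen site; for pairs whose symmetric difference lies in a single well-separated region, the Nagaj--Wocjan propagation is effectively localized and the cross term decays in the separation, while for pairs whose extra seed falls inside the active window of the other seed one must bound the total weight directly. Summing the resulting bounds over $S,S'$ and over lattice sites, and invoking the dichotomy of Lemma~\ref{lem:main} on each surviving one-seed configuration, yields (\ref{sa-1}) with $\eta=1/2$ when $M$ accepts $v$ and (\ref{sa-2-2}) otherwise, provided $n\ge n_{0}$ so that $\alpha(n)$ is small enough. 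The polynomial-time computability of $f_{\psi}$ then reduces to that of $\alpha(n)$, $\sqrt{1-\alpha(n)^{2}}$, and the components of $|\xi(v)\rangle$, each polynomial-time by hypothesis on the constituents of $H$ and on the encoding of $v$.

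The main obstacle I anticipate is the interference control just described, specifically the coincidence events where one-seed and two-seed configurations differ by a seed lying inside the computation window of another. For such nearby pairs the clock-register amplitudes of the two seeds are not independent, so a direct localization estimate does not suffice. My plan is to treat these rare coincidences by a combinatorial count that exploits the fact that $H$ acts as the identity on the pure $|e_{1}\rangle^{\otimes L}$ background and therefore cleanly separates ``active'' and ``inactive'' regions of the chain, bounding the total norm contribution of all such pairs by $O(\alpha^{2}\cdot 2^{p(n)}\cdot L)$ per site; after the $1/L$ normalization inherent in $\overline{\rho}(t,L)$, this becomes $O(\alpha^{2}\cdot 2^{p(n)})$, which is absorbed into $\varepsilon_{1}$ by the choice of $C$ in $\alpha(n)=2^{-Cp(n)}$.
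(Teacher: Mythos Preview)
There is a genuine gap, and it is precisely the one the paper has to work around. Your scheme makes the seed density $\alpha^{2}$ small enough that $\alpha^{2}\cdot 2^{p(n)}\ll 1$, and treats each seed as launching a computation that lives in a window of width $O(2^{p(n)})$. But then the \emph{signal itself} is carried only by those windows: the fraction of sites whose local state can ever differ from $|e_{1}\rangle\langle e_{1}|$ is at most $O(\alpha^{2}\cdot 2^{p(n)})$, so $\overline{\rho}(t,L)$ stays $\varepsilon_{1}$-close to $|e_{1}\rangle\langle e_{1}|$ for all $t$ regardless of whether $M$ accepts. You cannot simultaneously use ``$\alpha^{2}\cdot 2^{p(n)}$ small'' to kill the multi-seed error and to produce the $\eta=1/2$ deviation required by (\ref{sa-1}).

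If instead you let the amplification sweep the whole block between consecutive seeds (as in Lemma~\ref{lem:main}), you hit the obstacle the paper identifies explicitly: the block length $L^{[k]}$ is at most $\exp(\mathrm{poly}(n))$ (because the seed amplitude must be polynomial-time computable), whereas the simulation time $j_{0}$ of an $\mathsf{EXPSPACE}$ machine can be $\exp(\exp(\mathrm{poly}(n)))$, so $j_{0}\gg L^{[k]}$. With the $M_{A}$ of Lemma~\ref{lem:main} one has $J_{\boldsymbol{x}}=j_{0}+2L^{[k]}$, and the time-averaged $a_{2}$-fraction is $O(L^{[k]}/j_{0})\to 0$, not $1/2$. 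Reusing the Hamiltonian of Lemma~\ref{lem:main} therefore fails. The paper's fix is to \emph{redesign} $M_{A}$ so that the simulation of $M$ is rerun once between each single increment of $N_{2}$; this forces $J_{\boldsymbol{x}}\approx L^{[k]}j_{0}$ and makes the time-average of $N_{2}/L^{[k]}$ converge to $(1-\alpha)/2$ uniformly over good blocks. The paper also handles interference not by localization estimates but by a clean dephasing argument: block boundaries and the read-only first track force $\langle\boldsymbol{x}'|e^{\iota tH}B^{(L)}e^{-\iota tH}|\boldsymbol{x}\rangle=0$ whenever $\boldsymbol{x}\neq\boldsymbol{x}'$, so no coincidence bookkeeping is needed.
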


The proof of these main technical lemmas will be done from the next section.
The rest of the section will be devoted to the proof of the theorems from the lemmas.

\noindent\textbf{Proof of Theorem\thinspace\ref{thm:SA}, (i). }Suppose
$\mathsf{SAS}(d,H,f_{\psi},\frac{1}{2},\varepsilon_{1})$ ($\varepsilon
_{1}<\frac{1}{4}$) is decidable. Then by Lemma\thinspace\ref{lem:main},
$\mathsf{Halt}$ can be solved by reducing to it, which is contradiction. Here,
note that $\mathsf{Halt}$ is still undecidable even if the input length is
restricted to $n:=\left\vert v\right\vert \geq n_{0}$. Therefore,
$\mathsf{SAS}(d,H,f_{\psi},\frac{1}{2},\varepsilon_{1})$ is undecidable. The
proof of the other statement is almost analogous.$\blacksquare$

\noindent\textbf{Proof of Theorem\thinspace\ref{thm:SA}, (ii).} We only prove
that $\mathsf{SAH}(d,f_{H},\frac{1}{2},\varepsilon_{1})$ ($\varepsilon
_{1}<\frac{1}{4}$) is undecidable, since the other statement can be proved
almost analogously. By Lemma\thinspace\ref{lem:main}, there is a Hamiltonian
$H^{\prime}$ and $\left\vert \psi^{\prime}\right\rangle $ such that the
dynamics starting from
\[
\rho^{\prime L}:=\left\vert e_{0}\right\rangle \left\langle e_{0}\right\vert
\otimes(\left\vert \psi^{\prime}\right\rangle \left\langle \psi^{\prime
}\right\vert )^{\otimes L}%
\]
satisfies the requirements of the lemma: Here, $\overline{\rho}(t,L)$ is
replaced by $\overline{\rho}^{\prime}(t,L)$ which is defined in the same
manner as $\overline{\rho}(t,L)$, and $\varepsilon_{1}<(4+4\sqrt{2})^{-1}$.

Denote by $V_{v}$ the rotation in the two dimensional space $\,\mathrm{span}%
\,\{\left\vert \psi^{\prime}\right\rangle ,\left\vert e_{1}\right\rangle \}$
that sends $\left\vert e_{1}\right\rangle $ to $\left\vert \psi^{\prime
}\right\rangle $. ($V_{v}$ acts as the identity on its orthogonal complement
space.) Let $\left\vert \psi\right\rangle :=\left\vert e_{1}\right\rangle $,
and consider the Hamiltonian $H:=V_{v}^{\dagger}H^{\prime}V_{v}$. Then
\begin{align*}
\rho^{L}(t)  &  =e^{-\iota tH}\left\vert e_{0}\right\rangle \left\langle
e_{0}\right\vert \otimes(\left\vert e_{1}\right\rangle \left\langle
e_{1}\right\vert )^{\otimes L}e^{\iota tH}\\
&  =\left(  V_{v}^{\dagger}\right)  ^{\otimes L+1}e^{-\iota tH^{\prime}}%
(V_{v}\left\vert e_{0}\right\rangle \left\langle e_{0}\right\vert
V_{v}^{\dagger})\otimes(V\left\vert e_{1}\right\rangle \left\langle
e_{1}\right\vert V_{v}^{\dagger})^{\otimes L}e^{\iota tH^{\prime}}%
V_{v}^{\otimes L+1}\\
&  =\left(  V_{v}^{\dagger}\right)  ^{\otimes L+1}\rho^{\prime L}%
(t)V_{v}^{\otimes L+1},
\end{align*}
so
\[
\overline{\rho}(t,L)=V_{v}^{\dagger}\overline{\rho}^{\prime}(t,L)V_{v}.
\]
Therefore, for any state $\sigma$,
\begin{align*}
\left\Vert \overline{\rho}^{\prime}(t,L)-\sigma\right\Vert _{1}  &
=\left\Vert V_{v}\overline{\rho}(t,L)V_{v}^{\dagger}-\sigma\right\Vert _{1}\\
&  \leq\left\Vert \overline{\rho}(t,L)-\sigma\right\Vert _{1}+2\left\Vert
V_{v}-I\right\Vert \\
&  =\left\Vert \overline{\rho}(t,L)-\sigma\right\Vert _{1}+2\sqrt
{2(1-\left\vert \left\langle e_{1}\right.  \left\vert \psi_{v}^{\prime
}\right\rangle \right\vert )}\\
&  \leq\left\Vert \overline{\rho}(t,L)-\sigma\right\Vert _{1}+\sqrt
{2}\varepsilon_{1}%
\end{align*}
So if the $\mathsf{SAH}(d,f_{H},\frac{1}{2},\varepsilon_{1})$ with
$\varepsilon_{1}$ $<(4+4\sqrt{2})^{-1}$ can be solved by a TM, then the TM can
solve the $\mathsf{SAS}(d,H,f_{\psi},\frac{1}{2},(1+\sqrt{2})\varepsilon_{1}%
)$. Therefore, the $\mathsf{SAH}(d,f_{H},\frac{1}{2},\varepsilon_{1})$ is
undecidable if $\varepsilon_{1}$ $<(4+4\sqrt{2})^{-1}$. If $\varepsilon_{1}$
is larger, the problem can become only harder. Therefore, the $\mathsf{SAH}%
(d,f_{H},\frac{1}{2},\varepsilon_{1})$ with $\varepsilon_{1}<1/4$ is
undecidable.$\blacksquare$

\section{Sketch of the argument}

As we had explained, the proof of the main lemmas rests on the emulation of an
URTM $M$ on an input $v$. To complete with difficulties mentioned previously,
we correspond the Hamiltonian dynamics to the RTM $M_{A}$ that uses $M$ as a subroutine.

If the initial state is (\ref{rho-L-0}), each site, except a single site that
corresponds to the finite control, corresponds to a tape cell. So for the
halting of the machine $M$ to become apparent in the space average of
single-site states, majority of the tape cells of $M_{A}$ are rewritten after
the simulation of $M$ terminates. Moreover, the result of this rewriting of
cells should be almost independent of the input $v$ to $M$: recall the
asymptotic state of the Hamiltonian dynamics is irrelevant to the input $v$
except that it is halting or not. Meantime, since $M_{A}$ is reversible, the
information about $v$ cannot be erased. Therefore, $M_{A}$ is equipped with
two kinds of the cells, $M$- and $A$-cells. The input is written in the former
in a encoded manner, and the simulation of $M$ takes place only using
$M$-cells. The majority of cells are $A$-cells, which will be rewritten in the
final amplification stage. Since $M$-cells are minority, the information about
input (other than halting/non-halting) nor the process of computation of $M$
affect only negligibly the space average of the single-site states.

The input $v$ is encoded as the relative frequency of $0$ and $1$'s of a bit
string. So $\left\vert \psi\right\rangle $ is in superposition of an $A$-cell
and an $M$-cell, and the latter's register storing the input bit string is in
superposition of $\left\vert 0\right\rangle $ and $\left\vert 1\right\rangle
$. The amplitudes are defined so that the typical part of $\left\vert
\psi\right\rangle ^{\otimes L}$ corresponds to an intended initial
configuration when $L$ is very large.

Later we prove that the interference between the states corresponding to
distinct initial configurations does not affect the space average of single
site states. Here we give a rough explanation. Recall the information about
the input other than halting/non-halting, the spatial distribution of
$M$-cells, and computational process do not affect the asymptotic space
average of single site states. This means the latter is not significantly
affected by taking `partial trace` over the former. Therefore, interference
terms between these configurations are not relevant.

The case where the initial state is (\ref{rho-L-iid-0}) is treated in a
similar manner. However, in this case, $\left\vert \psi\right\rangle $ is in
superposition of an initialized finite control $\left\vert e_{0}\right\rangle
$ and tape cell. So we use sites sandwiched by $\left\vert e_{0}\right\rangle
$'s as an RTM with the finite length tape.

If the initial state is (\ref{rho-L-0}), The initial state

the signal indicating the halt of the URTM $M$ appears only in its finite
control and tape cells are not affected. So in the space average, it is
negligible. This can be amended by composing another Turing machine that
rewrites tape cells upon the halt of $M$.

Second, recall the single site observable $A$ is almost arbitrary. $A$ should
be affected only by the halting/non-halting of $M$, but it may be largely
affected by unpredictable change occurred in the course of the simulation. We
avoid this by using two types of cells, $A$-cells and $M$-cells. The former
and the latter is exclusively used in the last amplification stage and
simulation of $M$, respectively. Also, we let $M$-cells be overwhelming
minority, so the change in the $M$-cells, or equivalently the dynamics during
the simulation of $M$, has only negligible effect in the space average.

Third, the initial state have to be more or less symmetric ((\ref{rho-L-0}) or
(\ref{rho-L-iid-0})), while having two kinds of cells contradict with this
condition. Also, the input bit string to the URTM $M$ has to be encoded to the
initial state. But if it is encoded crudely (encoding $01001...$ to
$\left\vert 0\right\rangle \left\vert 1\right\rangle \left\vert 0\right\rangle
\left\vert 0\right\rangle \left\vert 1\right\rangle $, for example), the
initial state will be highly asymmetric. To circumvent the difficulty, let
$\left\vert \psi\right\rangle $ be product of a superposition of various
classical configurations, so that typical part of $\left\vert \psi
\right\rangle ^{\otimes L}$ corresponds to a `good' initial configuration.

Fifth, the theory in \cite{NagajWojcan} states the dynamics starting from a
state corresponding to a single classical configuration. So if we apply it to
the initial state (\ref{rho-L-0}) or (\ref{rho-L-iid-0}), interference between
terms corresponding to different initial classical configurations appears. But
we show that these terms have only minor effect on the space average of the
single-site state. Roughly, it is because the space average is almost
irrelevant to the information of initial configuration other than
halting/non-halting and the rate of $A$-cells. Recall $\left\vert
\psi\right\rangle $ is fabricated so that typical part of $\left\vert
\psi\right\rangle ^{\otimes L}$ is entirely hating/non-halting and has the
constant rate of $A$-cells.

\section{The RTM $M_{A}$ for the first main lemma}

We use a minor modification of quadruple form of RTM, instead of commonly used
quintuple form \cite{Morita}, for the reason that will be clear in considering
quantum analogue. The operation of the RTM $M_{A}$ has the four stages:
initialization of the tape, decoding of the input, simulation of $M$ and
amplification of the signal.

\subsection{The tape cells and their initial configurations}

The tape of $M_{A}$ has two tracks: the first track is read-only, and the
input $v$ to the URTM $M$ is written here in an encoded form. The initial
configuration of the second track does not vary with $v$, and only the second
track is rewritten.

There are two kinds of cells, $M$- cells and $A$-\thinspace cells. We simulate
the URTM $M$ (almost) only $M$- cells, and upon halting, $A$-cells are
rewritten. The relative frequency of $M$-cells is $\alpha$, which is
negligibly small:%
\[
0\leq\alpha<<1.
\]
Therefore, unless the URTM $M$ halts, most of the cells are not rewritten.

The first register of an $M$-cell is filled with a pair of bits $b=(b_{1}%
,b_{2})$, to which the input is encoded, so the set of the symbols for this
register is $\Gamma_{1,M}:=\{b=(b_{1},b_{2});b_{\kappa}=0,1\}$. Its second
register is used for the decode of the input and the simulation of $M$. The
set of symbols of this register is denoted by $\Gamma_{2,M}$. In particular
\[
\square,s_{0}\in\Gamma_{2,M},
\]
where the former and the latter indicates the left-end of the tape and a
blank, respectively.

The first register of an $A$-cell (the part of an $A$-cell corresponding to
the first track) is filled with $\varsigma_{A}$, so the set of the symbols for
this register is a singleton $\Gamma_{1,A}:=\{\varsigma_{A}\}$. Its second
register, used for the amplification of the halting signal, is one of
\[
\Gamma_{2,A}:=\{a_{\kappa},\kappa=1,\cdots,\left\vert \Gamma_{2,A}\right\vert
-1,\square\}.
\]
As repeatedly explained, an $A$-cells is used only in the amplification stage,
except it happens to be at the left end of the tape. In such a case, $A$-cell
also functions as an $M$-cell, using the symbol $\square$ that indicates the
left end of the tape. For a while, $\left\vert \Gamma_{2,A}\right\vert =2+1$:
in the amplification stage, we simply flip $a_{1}$ to $a_{2}$.

So the set of the symbols is
\[
\Gamma=(\Gamma_{1,A}\times\Gamma_{2,A})\cup(\Gamma_{1,M}\times\Gamma_{2,M}).
\]
Initially, each $A$-cell and $M$-cell is in $(b,s_{0})$ or $(\varsigma
_{A},a_{1})$, respectively.

\subsection{Finite control}

The finite control of the RTM $M_{A}$ has the following structure:%

\begin{align*}
Q  &  :=Q_{m}\times Q_{u},\\
Q_{m}  &  :=\{m_{0},m_{1}\}.
\end{align*}
$Q_{m}$ is used to distinguish the read-write-mode ($m_{0}$) and shift-mode
($m_{1}$): In the former, the tape head read and write the tape cell which is
pointing at and $Q_{u}$, and changes the $Q_{m}$ from $m_{0}$ to $m_{1}$. In
the latter, the tape head is shifted depending solely on the state of $Q_{u}$
of the finite control, without reading any cell: it shifts to the right by one
step iff $q\in Q_{u,+}$. to the right iff $q\in Q_{u,-}$, does not shift iff
$q\in Q_{u,0}$. Any two of $Q_{u,+}$, $Q_{u,-}$, and $Q_{u,0}$ are disjoint
(This corresponds to unique direction property.).

$Q_{u}$ is divided into the four subsets%
\[
Q_{u}=Q_{u_{1}}\cup Q_{u_{2}}\cup Q_{u_{3}}\cup Q_{u_{4}}.
\]
Each of them corresponds to one of the stages of the operations (the
initialization of the tape, the decoding of the input, the simulation of $M$
and the amplification of the signal).

The Turing machine $M_{A}$ is reversible iff the move function of $M_{A}$ is
invertible. To show this, let us construct the TM $M^{\prime}$ that invert the
move of $M_{A}$: If the $Q_{m}$-register is $m_{1}$, $M^{\prime}$ does
read-write operation according to the inverse of the read-write rule of
$M_{A}$, then changes its $Q_{m}\,$-\thinspace register to $m_{0}$. If the
$Q_{m}\,$-\thinspace register is $m_{0}$, the tape-head is shifted according
to the inverse of the rule of $M_{A}$ (the left shift iff $q\in Q_{u,+}$,
etc.) and its $Q_{m}\,$-\thinspace part is changed to $m_{1}$.

\subsection{The move of the RTM $M_{A}$}

\subsubsection{Initialization of the tape cells}

To regulate the move of $M_{A}$, the left-most cell is indicated by the symbol
$\square$. However, we cannot write it at the left end in the initial
configuration because of the form (\ref{rho-L}). So at the first step, the
tape head, which is at the left-most cell, rewrite $(\varsigma_{A}%
,a_{1})\rightarrow(\varsigma_{A},\square)$ ($A$-cell) or $(b,s_{0}%
)\rightarrow(\varsigma_{A},\square)$ ($M$-cell).

The initial finite control state is $(m_{0},q_{1,init})$, and then on
rewriting of the cell, it turns to $(m_{1},q_{2,init})$. Then the tape head
moves to the right, changing the finite control state to $(m_{0}%
,q_{2,init})\in Q_{u_{2}}$.

\subsubsection{Encode of the input and its decode}

The input string will be encoded in the following manner. Below, $b_{i_{k}}$
($k=1,2,\cdots$) indicates that it is the input register of the $k$-th
$M$-cell after the cell with $\square$.

The input bit string is encoded to the rate of $1$'s in the sequence of the
first bits $b_{1,i_{1}},b_{1,i_{2}},\cdots$:The fractional part of the binary
expansion of the relative frequency equals the input string. Here, to make the
end of the bit string explicit, we suppose the input string always end with
$1$.

In decoding, we clearly need an upperbound $n^{\prime}$ to the input length
$n$, that is encoded to the sequence of the second bits $b_{2,i_{1}%
},b_{2,i_{2}},\cdots$ \ in the following manner:%
\begin{equation}
b_{2,i_{k}}=\left\{
\begin{array}
[c]{cc}%
0, & k=1,\cdots,n^{\prime}\\
1, & k=n^{\prime}+1\\
\text{arbitrary} & \text{otherwise}.
\end{array}
\right.  \label{b2}%
\end{equation}
For example, $n^{\prime}=3$, $b_{2,i_{1}}b_{2,i_{2}}b_{2,i_{3}}b_{2,i_{4}}=$
$0001$. This $n^{\prime}$ is used to define the number of the $b_{1,i}$'s that
is used to compute the relative frequency $1$'s. We use $2^{4n^{\prime}}$of them.

Denote the relative frequency of $1$'s in $2^{4n^{\prime}}$of $b_{1,i}$'s by
$\beta^{\prime}$, and let $\beta:=0.v_{1}v_{2}\cdots v_{n}$, where
$v=(v_{1},v_{2},\cdots,v_{n})$ is the input string. $\beta^{\prime}$ is
contained in the open interval $(\beta-2^{-(n^{\prime}+1)},\beta
+2^{-(n^{\prime}+1)})$: If $\beta^{\prime}\geq\beta$, clearly they
considerable to the $n^{\prime}+1$ places. Next suppose $\beta^{\prime}<\beta
$. Then they coincide in their first $n-1$ places. The $n$-th place of
$\beta^{\prime}$ is $0$, and its digits from the $n+1$-th to the $n^{\prime
}+1$-th places are $1$. In either case, we can recover $\beta$ from
$\beta^{\prime}$.

\textit{In decoding, we do not rewrite the input registers: this is important
to kill unwanted effect of the interference which otherwise may occur in the
corresponding quantum system.} When the decoding finishes, the tape head is
brought to the left-most cell, and upon reading $\square$, the finite control
state turns to $\left(  m_{1},q_{3,init}\right)  $. Then the tape head shifts
to the right, and the finite control state becomes $\left(  m_{0}%
,q_{3,init}\right)  $. From here, the simulation of the URTM $M$ starts.

\subsubsection{The simulation of the URTM $M$}

Here, we essentially uses the move relation of the URTM $M$, except that all
the $A$-cells, except for the one at the left end (if any), are skipped.
$M_{A}$ is reversible if and only if $M$ does not go into a loop for the
correctly formatted initial condition.

There are many such URTM: For example, if $M^{\prime}$ is a universal TM, by
adding another track on the tape and leaving the record of the all the move of
the TM $M^{\prime}$ to that additional track, we can easily construct a URTM
that does not fall into the loop. Also, the URTM that emulate the cyclic tag
system, which is another universal computational model, does not go into a
loop, as the tag keep moving to the right at the end of every cycle of the move.

During the simulation, we only uses $M$-cells' third register, and the other
registers of the cells are not modified.

Upon halting,or the state becomes $(m_{1},q_{3,halt})$, \ the tape head is
moved to the left, until it reads $\square$. Upon reading the symbol, the
finite control state becomes $(m_{1},q_{4,init})$, while changing the
$A$-register of the left most cell from $a_{1}$ to $a_{2}$. Then the
amplification stage starts.

When $M$ does not halt, this stage continues forever.

\subsubsection{The amplification of the halting signal}

In this last stage, it simply flips the $A$-cells from $a_{1}$ to $a_{2}$,
while the tape head keep moving to the left: the finite control state is
either $(m_{1},q_{4,init})$ or $(m_{0},q_{4,init})$: In case of the former,
the tape head moves to the right, and the finite control state becomes
$(m_{0},q_{3,halt})$ . In the case of the latter, the $A$-cell, except for the
one marked by $\square$, is flipped from $a_{1}$ to $a_{2}$, and the finite
control state becomes $(m_{1},q_{4,init})$.

Since the tape is half-infinite, this stage continues forever.

\section{The quantum system for the first main lemma}

We corresponds the restriction of $M_{A}$ to the tape with the length $L$ to a
quantum system with $L+1$-sites. Each site $\mathcal{H}$is spanned by a CONS
$\{\left\vert x\right\rangle ;x\in X\}$, where the set $X$ is the union of the
set of the tape symbols $\Gamma$ and the finite control states $Q$,
\[
X=Q\cup\Gamma.
\]
We use notations such as%
\begin{align*}
\mathcal{H}  &  =\mathcal{H}^{Q}\oplus\mathcal{H}^{\Gamma},\\
\mathcal{H}^{Q}  &  =\mathcal{H}^{Q_{m}}\otimes\mathcal{H}^{Q_{u}},\\
\mathcal{H}^{\Gamma}  &  =(\mathcal{H}^{\Gamma_{1,M}}\otimes\mathcal{H}%
^{\Gamma_{2,M}})\oplus(\mathcal{H}^{\Gamma_{1,A}}\otimes\mathcal{H}%
^{\Gamma_{2,A}}).
\end{align*}
The classical configuration of the machine is represented by $\boldsymbol{x}%
=(x_{0},x_{1},\cdots x_{L})$. Whenever it is necessary, we use symbol such as
$\boldsymbol{x}^{L}$, $M_{A}^{L}$, etc. to indicate the lattice size. Only one
of $x_{i}$'s is an element of $Q$, and others are elements of $\Gamma$. If
$x_{i_{0}}\in Q$, the tape head is pointing to the $i_{0}+1$-th cell,
represented by $x_{i_{0}+1}$.

$\boldsymbol{x}^{L}$ is a \textit{legal initial configuration} if it
corresponds a initial configuration of the RTM $M_{A}$ that emulates $M$ for
an input bit string: Here, $x_{L+1},x_{L+2},\cdots$ are arbitrarily fixed. So
$L$ should be large enough for an $\boldsymbol{x}^{L}$ to be a legal initial
configuration. $x_{0}\in Q$ for all the legal initial configurations. Also,
$\boldsymbol{x}^{L}$ is a \textit{legal configuration} if it corresponds to a
configuration of the restriction of $M_{A}$ to the tape with $L$-cells
starting from a legal initial configuration.

\subsection{The Hamiltonian $H^{L}$}

The Hamiltonian $H^{L}$ is the sum of the isometry $U^{L}$ and its dual
$(U^{L})^{\dagger}$,
\[
H^{L}=U^{L}+(U^{L})^{\dagger}.
\]
From here, we drop the superscript $L$, and write $H^{L}$ and $U^{L}$ simply
by $H$ and $U$, unless they are too confusing.

Here $U$ sends a classical state $\left\vert \boldsymbol{x}\right\rangle
=\otimes_{i=0}^{L}\left\vert x_{i}\right\rangle $ to another such state
$\left\vert \boldsymbol{x}^{\prime}\right\rangle =\otimes_{i=0}^{L}\left\vert
x_{i}^{\prime}\right\rangle $,
\[
U\otimes_{i=0}^{L}\left\vert x_{i}\right\rangle =\otimes_{i=0}^{L}\left\vert
x_{i}^{\prime}\right\rangle .
\]
Moreover, if $\left\vert \boldsymbol{x}\right\rangle :=$\ $\otimes_{i=0}%
^{L}\left\vert x_{i}\right\rangle $ corresponds to a legitimate initial
configuration of $M_{A}$,
\[
\left\vert j;\boldsymbol{x}\right\rangle :=(U)^{j}\left\vert \boldsymbol{x}%
\right\rangle
\]
corresponds to the configuration at the $j$-th step.

Such $U$ with only local terms is composed in the following manner. $U$ is
decomposed into the sum $U:=\sum_{i=-L}^{L}U_{i}$, where $U_{i}$ acts
nontrivially only on the $i-1$-th, $i$-th and $i+1$-th sites, and
$U_{i}\left\vert \boldsymbol{x}\right\rangle $ is non-zero only if the $i$-th
site corresponds to the finite control.

All $U_{i}$'s are identical in the case of the periodic boundary condition.
Each $U_{i}$ is further decomposed into the sum of 1- and 2- body terms:
\[
U_{i}:=U_{i}^{0}+U_{i}^{1+}+U_{i}^{1-}+U_{i}^{10}\text{ }(i=1,\cdots,L-1).
\]
Here, $U_{i}^{0}$ implements the move of the $M_{A}$ whence the $Q_{m}%
$\thinspace-\thinspace register of the finite control is $0$. As we had
supposed the tape head is reading the right neighbor of the finite control
site, it acts on the $i$-th and the $i+1$-th site:%
\[
U_{i}^{0}:(\mathbb{C}\left\vert m_{0}\right\rangle \otimes\mathcal{H}^{Q_{u}%
})_{i}\otimes\mathcal{H}_{i+1}^{\Gamma}\mapsto(\mathbb{C}\left\vert
m_{1}\right\rangle \otimes\mathcal{H}^{Q_{u}})_{i}\otimes\mathcal{H}%
_{i+1}^{\Gamma}.
\]

Meantime, \ $U_{i}^{1-}$ corresponds to the right shift, so acts on the
$(i-1)$ -th and the $i$-th site:
\[
U_{i}^{1-}:\mathcal{H}_{i-1}^{\Gamma}\otimes(\mathbb{C}\left\vert
m_{1}\right\rangle \otimes\mathcal{H}^{Q_{u,-}})_{i}\mapsto(\mathbb{C}%
\left\vert m_{0}\right\rangle \otimes\mathcal{H}^{Q_{u,-}})_{i-1}%
\otimes\mathcal{H}_{i}^{\Gamma},
\]
where the content of $\mathcal{H}^{Q_{u,-}}$ and $\mathcal{H}^{\Gamma}$ are
not changed. $U_{i}^{1+}$, $U_{i}^{10}$ are defined analogously.

\begin{remark}
If the move of the machine is not divided into two modes $m_{0}$ and $m_{1}$,
a three body term is necessary to implement the leftward shift.
\end{remark}

No illegal configuration should not possess its successor. In particular, if
the tape is half-infinite, the head can never read the $\square$-ed cell after
having shifted rightward. So no successor of such an illegal configuration
should be defined:%

\[
U_{i}^{0}\left\vert m_{0},q\right\rangle _{i}\left\vert c,\square\right\rangle
_{i+1}=0,\,\forall q\in Q_{u,+}\,,c\in\Gamma_{1,A}\cup\Gamma_{1,M}.
\]
Here recall other interaction terms are also null,
\[
U_{i}^{1x}\left\vert m_{0},q\right\rangle _{i}\left\vert c,\square
\right\rangle _{i+1}=0,\,(x=\pm,0).
\]
Also, we drop any interaction between $\left\vert m_{\kappa^{\prime}%
},q\right\rangle _{i}$ ($q\in Q_{u,+}$) and $\left\vert c,\square\right\rangle
_{i+1}$, so that the interaction between $L+1$-th and the $1$-st site is
effectively cut off.

In case of the periodic boundary condition, such a configuration occurs when
the finite control comes to the $L+1$-th site by shifting to the right, and no
successor is not defined. If this occurs at the $J_{\boldsymbol{x}}$-th step,%
\[
U^{J_{\boldsymbol{x}}+1}\left\vert \boldsymbol{x}\right\rangle =U\left\vert
J_{\boldsymbol{x}};\boldsymbol{x}\right\rangle =0.
\]

\subsection{The correspondence between $M_{A}$ and the Hamiltonian dynamics}

\label{subsec:dynamics}

The continuous time dynamics $e^{-\iota tH}\left\vert \boldsymbol{x}%
\right\rangle $ starting from $\left\vert \boldsymbol{x}\right\rangle $ is not
a continuous interpolation of $\{\left\vert j;\boldsymbol{x}\right\rangle
;j=1,\cdots,J_{\boldsymbol{x}}\}$ nor approximation to it. But it had been
known that they are related with each other in the following way
\cite{NagajWojcan}. Here, be careful not to confuse two different "time" :
$j=1,\cdots J_{\boldsymbol{x}}$, that characterize the move of $M_{A}$, and
$t$, that characterize the dynamics of the Hamiltonian cell automata
corresponding to $M_{A}$.

First, $e^{-\iota tH}\left\vert \boldsymbol{x}\right\rangle $ does not go out
of the span of $\{\left\vert j;\boldsymbol{x}\right\rangle ;j=1,\cdots
,J_{\boldsymbol{x}}\}$,
\[
e^{-\iota tH}\left\vert \boldsymbol{x}\right\rangle \in\mathrm{span}%
\{\left\vert j;\boldsymbol{x}\right\rangle ;j=1,\cdots,J_{\boldsymbol{x}}\}
\]
and
\begin{align}
e^{-\iota Ht}\left\vert \boldsymbol{x}\right\rangle  &  =\frac{2}%
{J_{\boldsymbol{x}}+1}\sum_{k=1}^{J_{\boldsymbol{x}}}\sum_{j=1}%
^{J_{\boldsymbol{x}}}e^{\iota\omega_{k,\boldsymbol{x}}t}\sin\frac{\pi
k}{J_{\boldsymbol{x}}+1}\sin\frac{jk\pi}{J_{\boldsymbol{x}}+1}\left\vert
j;\boldsymbol{x}\right\rangle ,\nonumber\\
\omega_{k,\boldsymbol{x}}  &  :=2\cos\frac{2\pi k}{J_{\boldsymbol{x}}+1}.
\label{hca-t}%
\end{align}
(See, e.g., \cite{NagajWojcan}).

Second, if the system start from a state corresponding to a classical
configuration and $T$ is very large, all the steps are visited almost
uniformly: Here, we defined the `time-averaged probability'
$p_{j;\boldsymbol{x}}$ by \
\begin{equation}
p_{j;\boldsymbol{x}}:=\lim_{T\rightarrow\infty}\int_{0}^{T}\left\vert
\left\langle j;\boldsymbol{x}\right\vert e^{-\iota tH}\left\vert
\boldsymbol{x}\right\rangle \right\vert ^{2}dt=\left\{
\begin{array}
[c]{cc}%
\frac{1}{J_{\boldsymbol{x}}+1}, & j=2,\cdots,J_{\boldsymbol{x}}-1,\\
\frac{3}{2}\frac{1}{J_{\boldsymbol{x}}+1}, & j=1,J_{\boldsymbol{x}}.
\end{array}
\right.  \label{pj}%
\end{equation}
Observe we had defined the move of the RTM $M_{A}$ so that $J_{\boldsymbol{x}%
}\rightarrow\infty$ as $L\rightarrow\infty$. So $p_{j;\boldsymbol{x}}$ is
almost uniform when $L$ is very large.

\subsubsection{Comparison with other uses of Feyman-Kitaev Hamiltonian}

This $H$ is essentially a Feyman-Kitaev Hamiltonian used in the various
subjects in quantum computation, such as the proof of the undecidability of
the spectral gap. This study bears some similarity to ours, but they use the
ground state and ground energy, while we use the dynamics starting from a
legitimate initial state. So there are some differences in the construction of
the Hamiltonians. As a while, ours is much easier.

First, they implement clock counters, so that $\left\vert j;\boldsymbol{x}%
\right\rangle $ contains a register indicating $j$ so that $\left\langle
j;\boldsymbol{x}\right.  \left\vert j^{\prime};\boldsymbol{x}\right\rangle =0$
holds for all $j\neq j^{\prime}$. In our case, however, the RTM $M_{A}$ is
classical and deterministic. So no clock-counter is necessary.

Second, in our case, there is no need to handle with illegal configurations
where $M_{A}$ moves in a unexpected manner: we simply use initial state that
has no or negligible overlap with those `bad' configurations. If the ground
state of the Hamiltonian matters, this is not the case: Illegal configuration
should be carefully penalized so that they won't contribute to the ground
state. (Note the penalty term has to be implemented by local interactions!)

So as a whole, our composition of the Hamiltonian is much simpler than theirs.

\subsection{The initial quantum state}

The 1-D lattice with the size $L+1$ corresponds to the $M_{A}$ with the tape
length $L$. In the initial setting, the $0$-th site stands for the finite
control, and the sites in its right stand for the cells of the tape. So in the
case of the open boundary condition, only the right half are used.

Define the state vector $\left\vert e_{0}\right\rangle $ corresponding to a
initial configuration of the finite control:
\[
\left\vert e_{0}\right\rangle :=\left\vert m_{0},q_{1,init}\right\rangle
\in\mathcal{H}^{Q}.\
\]
Define also a state vector representing a configuration of a cell
\[
\left\vert e_{\kappa}\right\rangle :=\left\vert \varsigma_{A},a_{\kappa
}\right\rangle \,\in\mathcal{H}^{\Gamma}\ (\kappa\geq1).
\]
Clearly, $\{\left\vert e_{\kappa}\right\rangle ;\kappa=0,1,\cdots
,|\Gamma_{2,A}|\}$ are orthogonal.

If $n=|v|\leq n_{0}$, we define
\[
\left\vert \psi\right\rangle :=\left\vert \varsigma_{A}\right\rangle
\left\vert a_{1}\right\rangle =\left\vert e_{1}\right\rangle .
\]

Meantime, if $n=|v|>n_{0}$, we define $\left\vert \psi\right\rangle $ by\
\begin{align*}
\left\vert \psi\right\rangle  &  :=\sqrt{1-\alpha}\left\vert \varsigma
_{A}\right\rangle \left\vert a_{1}\right\rangle +\sqrt{\alpha}\left\vert
\mathrm{input}\right\rangle \left\vert s_{0}\right\rangle \\
&  =\sqrt{1-\alpha}\left\vert e_{1}\right\rangle +\sqrt{\alpha}\left\vert
\mathrm{input}\right\rangle \left\vert s_{0}\right\rangle ,
\end{align*}
where the input $v$ is encoded to $\left\vert \mathrm{input}\right\rangle
\in\,\mathrm{span}\,\{\left\vert b\right\rangle ;b\in\Gamma_{1,M}\}$:
\begin{equation}
\left\vert \mathrm{input}\right\rangle =(\sqrt{1-\beta}\left\vert
0\right\rangle +\sqrt{\beta}\left\vert 1\right\rangle )\otimes(\sqrt{1-n^{-2}%
}\left\vert 0\right\rangle +\sqrt{n^{-2}}\left\vert 1\right\rangle ),
\label{input}%
\end{equation}
where $0\leq\beta<1$, $n$ is the length of the input bit string. The
fractional part of the binary representation of $\beta$ equals the input bit
string, with the promise that the last bit equals $1$. When $L$ is large, the
decode succeeds for overwhelming portion of the initial configurations.

In the initial state $\left\vert e_{0}\right\rangle (\left\vert \psi
_{v}\right\rangle )^{\otimes L}$, most of the sites are in superposition of
$A$-cells and $M$-cells, while the former shares overwhelmingly large amplitude.

\begin{remark}
The classical picture of the dynamics bears some stochastic aspect, but the
randomness exists only in the initial configuration, and the process is
completely deterministic.
\end{remark}

Here we let
\begin{equation}
\alpha:=(\varepsilon_{1}/4)^{2}\leq1/(4\cdot4)^{2}. \label{alpha}%
\end{equation}
It is easy to check the conditions (\ref{Ve=e}).

\section{Proof of the first main lemma}

\subsection{"Dephasing" between the initial configurations\ }

Our initial state is in the super position of various classical (and legal)
configurations, while the correspondence between computational process and the
Hamiltonian dynamics was discussed for the case where the initial state is in
a single classical configuration. However, as we demonstrate below, we can
safely replace the initial state by the probabilistic mixture of classical configurations.

Since we are interested in the distance between $\overline{\rho}(t,L)$ and
$\left\vert e_{1}\right\rangle \left\langle e_{1}\right\vert $, $\frac{1}%
{2}(\left\vert e_{1}\right\rangle \left\langle e_{1}\right\vert +\left\vert
e_{2}\right\rangle \left\langle e_{2}\right\vert )$, it suffices to compute
\[
\left\langle e_{\kappa^{\prime}}\right\vert \overline{\rho}(t,L)\left\vert
e_{\kappa}\right\rangle =\,\mathrm{tr}\,\rho^{L}(t)\left(  \left\vert
e_{\kappa}\right\rangle \left\langle e_{\kappa^{\prime}}\right\vert \right)
^{(L)},\,
\]
where $\left(  \left\vert e_{\kappa}\right\rangle \left\langle e_{\kappa
^{\prime}}\right\vert \right)  ^{(L)}$ is defined in analogy with $A^{(L)}$.
Observe the observable $B=\left\vert e_{\kappa}\right\rangle \left\langle
e_{\kappa^{\prime}}\right\vert $ satisfies
\begin{equation}
B\left\vert x\right\rangle =0,x\in Q\text{ or }\Gamma_{1,M}\times\Gamma_{2,M}.
\label{Bx=0}%
\end{equation}
In computing the expectation of the space average $B^{(L)}$ such an observable
$B$, we argue that the superposition of the legal configuration can be treated
as a probabilistic mixture:

\begin{lemma}
\label{lem:cofig-deco}Let $\boldsymbol{y}$ and $\boldsymbol{y}^{\prime}$ be
arbitrary configurations. Suppose $B$ satisfies (\ref{Bx=0}). Then
$\left\langle \boldsymbol{y}^{\prime}\right\vert B^{(L)}\,\left\vert
\boldsymbol{y}\right\rangle \not =0$ only if all the following holds:

(i) For any $i_{0}$ with $y_{i_{0}}\in Q$, it holds that $y_{i_{0}}^{\prime
}\in Q$ and $y_{i_{0}}=y_{i_{0}}^{\prime}$.

(ii) For all $i_{0}$ with $y_{i_{0}}\in\Gamma_{1,M}\times\Gamma_{2,M}$, \ it
holds that $y_{i_{0}}^{\prime}\in\Gamma_{1,M}\times\Gamma_{2,M}$ and
$y_{i_{0}}=y_{i_{0}}^{\prime}$.
\end{lemma}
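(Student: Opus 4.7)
The plan is to expand $B^{(L)}$ as a sum of single-site operators and exploit the product structure of $|\boldsymbol{y}\rangle$ and $|\boldsymbol{y}'\rangle$ in the chosen CONS. Writing $B^{(L)} = \frac{1}{L+1}\sum_{i=0}^{L} B_i$ with $B_i = I^{\otimes i}\otimes B\otimes I^{\otimes L-i}$, each summand factorizes by orthonormality as
\[
\langle \boldsymbol{y}'|B_i|\boldsymbol{y}\rangle \;=\; \langle y_i'|B|y_i\rangle \prod_{j\neq i}\delta_{y_j',y_j}.
\]
A nonvanishing contribution from index $i$ therefore simultaneously enforces (a) $y_j'=y_j$ for every $j\neq i$, and (b) $\langle y_i'|B|y_i\rangle\neq 0$. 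By the hypothesis (\ref{Bx=0}), condition (b) forces $y_i\in\Gamma_{1,A}\times\Gamma_{2,A}$, i.e., the `active' site of $B$ must be an $A$-cell of $\boldsymbol{y}$.

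Next, suppose $\langle \boldsymbol{y}'|B^{(L)}|\boldsymbol{y}\rangle\neq 0$. Then at least one index $i^{\ast}$ contributes, and the previous step yields $y_{i^{\ast}}\in\Gamma_{1,A}\times\Gamma_{2,A}$. Since the three subsets $Q$, $\Gamma_{1,A}\times\Gamma_{2,A}$, and $\Gamma_{1,M}\times\Gamma_{2,M}$ of $X=Q\cup\Gamma$ are pairwise disjoint, any $i_{0}$ with $y_{i_{0}}\in Q$ must satisfy $i_{0}\neq i^{\ast}$, so by (a) we get $y_{i_{0}}'=y_{i_{0}}\in Q$, giving (i). The argument for (ii) is identical with $\Gamma_{1,M}\times\Gamma_{2,M}$ in place of $Q$.

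The whole proof is essentially bookkeeping: the only substantive input is (\ref{Bx=0}), which pins down the sector of the unique `active' index $i^{\ast}$, after which disjointness of the three sectors of $X$ does all the remaining work. No serious obstacle is expected. One small point worth noting is that in the intended application $B=|e_{\kappa}\rangle\langle e_{\kappa'}|$ with both $e_{\kappa},e_{\kappa'}$ in the $A$-cell sector, so (\ref{Bx=0}) holds automatically, and the one-sided hypothesis on $B|x\rangle$ is sufficient because the lemma's conclusions only constrain sites where $\boldsymbol{y}$ already lies in the $Q$ or $M$-cell sector.
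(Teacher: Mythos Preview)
Your proof is correct and follows essentially the same approach as the paper: expand $B^{(L)}$ as a sum of single-site operators, use the product factorization together with orthonormality to see that any nonzero term forces agreement off the active site, and use (\ref{Bx=0}) to ensure the active site lies in the $A$-cell sector. The paper phrases this as a contrapositive (assume the conclusion fails at some $i_0$ and show every term vanishes), whereas you argue directly from a single contributing index $i^\ast$, but the content is identical.
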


\begin{proof}
Suppose $y_{i_{0}}\in Q$ and $y_{i_{0}}^{\prime}\neq y_{i_{0}}$. Then
$\left\langle \boldsymbol{y}^{\prime}\right\vert B_{i_{0}}\left\vert
\boldsymbol{y}\right\rangle =0$ by $B_{i_{0}}\left\vert \boldsymbol{y}%
\right\rangle =0$. If $i\neq i_{0}$, $B_{i}$ acts trivially on $\mathcal{H}%
_{i_{0}}$, and $\left\langle \boldsymbol{y}^{\prime}\right\vert B_{i}%
\left\vert \boldsymbol{y}\right\rangle =0$. Therefore, $y_{i_{0}}^{\prime
}=y_{i_{0}}$, so the condition (i) should be satisfied. The argument for the
condition (ii) is almost parallel.
\end{proof}

\begin{lemma}
\label{lem:init-deco}Let $\boldsymbol{x}$ and $\boldsymbol{x}^{\prime}$ be a
legal initial configurations. Suppose $B$ is an observable with (\ref{Bx=0}).
Then if $\boldsymbol{x\neq x}^{\prime}$,%
\[
\,\left\langle \boldsymbol{x}^{\prime}\right\vert e^{\iota tH}B^{(L)}\text{
}e^{-\iota tH}\left\vert \boldsymbol{x}\right\rangle =0.
\]

\end{lemma}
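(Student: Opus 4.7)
The plan is to expand $e^{-\iota tH}|\boldsymbol{x}\rangle$ and $e^{-\iota tH}|\boldsymbol{x}'\rangle$ using the explicit formula (\ref{hca-t}), which keeps each of these vectors inside $\mathrm{span}\{|j;\boldsymbol{x}\rangle\}_{j=1}^{J_{\boldsymbol{x}}}$ (respectively $\mathrm{span}\{|j';\boldsymbol{x}'\rangle\}$). The quantity of interest then becomes
\[
\langle\boldsymbol{x}'|\,e^{\iota tH}B^{(L)}e^{-\iota tH}\,|\boldsymbol{x}\rangle
=\sum_{j,j'}\overline{c'_{j'}(t)}\,c_{j}(t)\,\langle j';\boldsymbol{x}'|\,B^{(L)}\,|j;\boldsymbol{x}\rangle,
\]
so it suffices to show $\langle j';\boldsymbol{x}'|B^{(L)}|j;\boldsymbol{x}\rangle=0$ for every $j,j'$ whenever $\boldsymbol{x}\neq\boldsymbol{x}'$. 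Crucially, $|j;\boldsymbol{x}\rangle=U^{j}|\boldsymbol{x}\rangle$ is itself a single classical configuration, since $U$ sends classical states to classical states, so each matrix element is precisely of the form treated in Lemma~\ref{lem:cofig-deco}.

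The first step is then to apply Lemma~\ref{lem:cofig-deco} both to $B$ and to $B^{\dagger}$ (which also obeys (\ref{Bx=0}), because $B^{\dagger}=|e_{\kappa'}\rangle\langle e_{\kappa}|$ still annihilates $Q$- and $M$-cell symbols). This yields the symmetric conclusion: if the matrix element is nonzero, the two classical configurations $\mathbf{y}^{(j,\boldsymbol{x})}$ and $\mathbf{y}^{(j',\boldsymbol{x}')}$ must agree on every site carrying a $Q$-symbol or a symbol in $\Gamma_{1,M}\times\Gamma_{2,M}$. In particular, the two configurations have the same finite-control position and state, the same set of $M$-cell locations, and identical first-register contents at every $M$-cell.

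The next step is to exploit the structural invariants of the RTM $M_{A}$ built into $U$: the first register of any $M$-cell is never rewritten (the decoding explicitly reads without writing, and the simulation of $M$ only uses the second register of $M$-cells), and at every site away from the left-end initialization site the cell type (A vs.\ M) is preserved throughout the orbit, apart from the transient moments when the finite control passes over the site. Consequently, from the $Q$-/$M$-content of $\mathbf{y}^{(j,\boldsymbol{x})}$ one can read off the full $M$-cell pattern of $\boldsymbol{x}$ itself (its $M$-cell placement and their first-register bits), and likewise for $\boldsymbol{x}'$. The agreement extracted from Lemma~\ref{lem:cofig-deco} therefore forces $\boldsymbol{x}$ and $\boldsymbol{x}'$ to share this entire pattern. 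Combined with the facts that $x_{0}=x'_{0}=(m_{0},q_{1,\mathrm{init}})$ is fixed, that all initial $M$-cells carry second register $s_{0}$, and that all initial $A$-cells carry $(\varsigma_{A},a_{1})$, one obtains $\boldsymbol{x}=\boldsymbol{x}'$, contradicting the hypothesis.

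The step I expect to be the main obstacle is the structural-invariants argument: tracking the finite control as it passes over $M$-cell sites (temporarily replacing an $M$-symbol by a $Q$-symbol in $\mathbf{y}^{(j,\boldsymbol{x})}$) and handling the special initialization site, where an initially-$M$ cell gets overwritten to the $A$-typed symbol $(\varsigma_{A},\square)$. Reversibility of $M_{A}$ is the hidden ingredient that prevents two distinct legal initial configurations from funnelling into the same intermediate $\mathbf{y}^{(j)}$, and the writeup has to invoke it precisely in order to recover $\boldsymbol{x}$ from the $M$-cell data alone.
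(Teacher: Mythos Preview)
Your approach is the paper's: expand $e^{-\iota tH}$ along the orbit, reduce to showing $\langle j';\boldsymbol{x}'|B^{(L)}|j;\boldsymbol{x}\rangle=0$ for all $j,j'$, use Lemma~\ref{lem:cofig-deco} to force the finite-control positions to coincide, and then invoke the read-only first register to pin down $\boldsymbol{x}=\boldsymbol{x}'$. Your remark that Lemma~\ref{lem:cofig-deco} must also be applied to $B^{\dagger}$ to get the symmetric conclusion is a point the paper leaves implicit.

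There is, however, a real slip in your ``structural invariants'' step. You assert that ``at every site \ldots\ the cell type ($A$ vs.\ $M$) is preserved throughout the orbit, apart from the transient moments when the finite control passes over the site.'' This is false in the paper's encoding: the shift operators $U_i^{1\pm}$ \emph{swap} the finite-control site with its neighbour, so the cell occupying a given lattice site changes permanently, not transiently, each time the head passes. For instance, if site~$1$ initially holds an $M$-cell and site~$2$ an $A$-cell, then after the head moves $0\to1\to2$ the $A$-cell sits at site~$1$. Consequently you cannot read off the $M$-cell pattern of $\boldsymbol{x}$ from that of $\mathbf{y}^{(j,\boldsymbol{x})}$ by the reasoning you give. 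The correct invariant---and this is exactly what the paper uses---is that the site-to-cell bijection is determined by the current finite-control position alone: if the head is at site $i_0$ then site $i$ carries cell $i+1$ for $i<i_0$ and cell $i$ for $i>i_0$ (the head moves like a bead on a wire). Once Lemma~\ref{lem:cofig-deco} forces the head positions in $|j;\boldsymbol{x}\rangle$ and $|j';\boldsymbol{x}'\rangle$ to agree, this gives a \emph{common} site-to-cell bijection for both, and then the read-only first registers yield $\boldsymbol{x}=\boldsymbol{x}'$ cell by cell. Your plan goes through once this point is repaired; reversibility of $M_A$ is not actually needed for this step.
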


\begin{proof}
As $H=U+U^{\dagger}$, it suffices to prove
\[
\left\langle \boldsymbol{x}^{\prime}\right\vert (U^{\dagger})^{j^{\prime}%
-1}B^{(L)}U^{j-1}\left\vert \boldsymbol{x}\right\rangle =\left\langle
j^{\prime};\boldsymbol{x}^{\prime}\right\vert B^{(L)}\left\vert
j;\boldsymbol{x}\right\rangle =0
\]
for all $j$ and $j^{\prime}$. If $\left\vert j;\boldsymbol{x}\right\rangle $
and $\left\vert j^{\prime};\boldsymbol{x}^{\prime}\right\rangle $ differ in
the position of the finite control site, then the identity is true by the
previous lemma. So suppose they coincide in the position of the finite
control. Then if the $i$-th site of $\left\vert j;\boldsymbol{x}\right\rangle
$ corresponds to the $i^{\prime}$-th cell ($i^{\prime}=i$ or $i-1$), so does
the $i$-th site of of $\left\vert j^{\prime};\boldsymbol{x}^{\prime
}\right\rangle $. As $\boldsymbol{x}\neq\boldsymbol{x}^{\prime}$ and the first
registers are read-only, $\left\vert j;\boldsymbol{x}\right\rangle $ and
$\left\vert j^{\prime};\boldsymbol{x}^{\prime}\right\rangle $ differs either
in the position or the content of $M$-cells. Therefore, by the previous lemma,
$\left\langle j^{\prime};\boldsymbol{x}^{\prime}\right\vert B^{(L)}\left\vert
j;\boldsymbol{x}\right\rangle =0$.
\end{proof}

Define $\rho_{\boldsymbol{x}}^{L}(t):=$ $e^{\iota tH}\left\vert \boldsymbol{x}%
\right\rangle \left\langle \boldsymbol{x}\right\vert e^{-\iota tH}$,
\ $\rho_{\boldsymbol{x},i}^{L}(t):=$ $\left.  e^{\iota tH}\left\vert
\boldsymbol{x}\right\rangle \left\langle \boldsymbol{x}\right\vert e^{-\iota
tH}\right\vert _{\mathcal{H}_{i}}$, and $\overline{\rho}_{\boldsymbol{x}%
}(t,L):=\frac{1}{L+1}\sum_{i=0}^{L}\rho_{\boldsymbol{x},i}^{L}$. If the
initial state is $\rho^{L}=\sum_{\boldsymbol{x},\boldsymbol{x}^{\prime}}%
\xi_{\boldsymbol{x}}\overline{\xi_{\boldsymbol{x}^{\prime}}}\left\vert
\boldsymbol{x}\right\rangle \left\langle \boldsymbol{x}^{\prime}\right\vert $,
since $B=\left\vert e_{k}\right\rangle \left\langle e_{\kappa^{\prime}%
}\right\vert $ satisfies the hypothesis of the lemmas,
\begin{align}
\,\left\langle e_{\kappa^{\prime}}\right\vert \overline{\rho}(t,L)\left\vert
e_{k}\right\rangle  &  =\,\,\mathrm{tr}\,\left(  \left\vert e_{k}\right\rangle
\left\langle e_{\kappa^{\prime}}\right\vert \right)  ^{(L)}\rho^{L}%
(t)\nonumber\\
&  =\sum_{\boldsymbol{x},\boldsymbol{x}^{\prime}}\xi_{\boldsymbol{x}}%
\overline{\xi_{\boldsymbol{x}^{\prime}}}\left\langle \boldsymbol{x}^{\prime
}\right\vert e^{\iota tH}\left(  \left\vert e_{k}\right\rangle \left\langle
e_{\kappa^{\prime}}\right\vert \right)  ^{(L)}e^{-\iota tH}\left\vert
\boldsymbol{x}\right\rangle \nonumber\\
&  =\sum_{\boldsymbol{x}}|\xi_{\boldsymbol{x}}|^{2}\left\langle \boldsymbol{x}%
\right\vert e^{\iota tH}\left(  \left\vert e_{k}\right\rangle \left\langle
e_{\kappa^{\prime}}\right\vert \right)  ^{(L)}e^{-\iota tH}\left\vert
\boldsymbol{x}\right\rangle \nonumber\\
&  =\sum_{\boldsymbol{x}}|\xi_{\boldsymbol{x}}|^{2}\,\left\langle
e_{\kappa^{\prime}}\right\vert \overline{\rho}_{\boldsymbol{x}}(t,L)\left\vert
e_{k}\right\rangle . \label{dephase-init}%
\end{align}
Therefore, we can safely replace $\rho^{L}$ with $\sum_{\boldsymbol{x}%
}\left\vert \xi_{\boldsymbol{x}}\right\vert ^{2}\left\vert \boldsymbol{x}%
\right\rangle \left\langle \boldsymbol{x}\right\vert $.

\subsection{`Good' initial configurations}

Most of the classical configurations comprising the`effective' initial state
$\sum_{\boldsymbol{x}}\left\vert \xi_{\boldsymbol{x}}\right\vert
^{2}\left\vert \boldsymbol{x}\right\rangle \left\langle \boldsymbol{x}%
\right\vert $ are `good' in the following sense:

\begin{description}
\item[(G-a)] The rate of $M$-cells falls into the interval $(\alpha
-L^{-1/3},\alpha+L^{-1/3}).$

\item[(G-b)] The decoding operation ends correctly, reading at most
$2^{4n^{3}}$ of $M$-cells.
\end{description}

Denote by $P_{\mathrm{good}}$ the projection onto the good configurations, and
denote by $P_{\mathrm{good}}^{(a)}$ $\ $and $P_{\mathrm{good}}^{(b)}$the
projection onto the configuration satisfying (G-a) and (G-b), respectively. As
we show soon,
\begin{align}
\,\mathrm{tr}\,P_{\mathrm{good}}^{(a)}\,\rho^{L}  &  \geq1-\frac{1}{4}%
L^{-1/3},\label{p-good-1}\\
\mathrm{tr}\,P_{\mathrm{good}}^{(b)}\,\rho^{L}  &  \geq1-\frac{7}{4}n^{-1},
\label{p-good-2}%
\end{align}
where the latter is true if \
\begin{equation}
n\geq n_{0}:=4\alpha^{-1}\geq16\varepsilon_{1}^{-1}\geq64. \label{n>8}%
\end{equation}
Below, we also suppose
\begin{equation}
L\geq L_{0}:=2n^{3}. \label{L>n}%
\end{equation}
\ 

As $\,P_{\mathrm{good}}^{(a)}$ and $P_{\mathrm{good}}^{(b)}$ commute,
\begin{align}
1-\,\mathrm{tr}\,P_{\mathrm{good}}\,\rho^{L}  &  \leq(1-\mathrm{r}%
\,P_{\mathrm{good}}^{(a)}\,\rho^{L})+(1-\mathrm{r}\,P_{\mathrm{good}}%
^{(b)}\,\rho^{L})\nonumber\\
&  \leq\frac{1}{4}L^{-1/3}+\frac{7}{4}n^{-1}\nonumber\\
&  \leq2n^{-1}=:p_{e}(n). \label{p-good}%
\end{align}

Recall that not all the initial classical configuration bears expected
properties. So the analysis such as (\ref{pj}) so on applies only to those
initial configuration in the support of $P_{\mathrm{good}}$. We define
\[
\rho_{\mathrm{good}}^{L}:=\sum_{\boldsymbol{x:}\text{`good'}}\left\vert
\xi_{\boldsymbol{x}}\right\vert ^{2}\left\vert \boldsymbol{x}\right\rangle
\left\langle \boldsymbol{x}\right\vert =P_{\mathrm{good}}\,\rho
P_{\mathrm{good}},
\]
and define $\rho_{\mathrm{good}}^{L}(t)$, $\rho_{\mathrm{good},i}^{L}(t)$, and
$\,\overline{\rho}_{\mathrm{good}}(t,L)$ in analogy with $\rho^{L}(t)$,
$\rho_{i}(t)$, and $\,\overline{\rho}(t,L)$, respectively. They closely
approximate $\rho^{L}$, $\rho^{L}(t)$, $\rho_{i}(t)$,$\,$\ and $\overline
{\rho}(t,L)$.

\subsubsection{Evaluation of the `bad' rate}

(\ref{p-good-1}) is easily clear by the Chernoff bound. In the sequel we
derive (\ref{p-good-2}).

In the decoding operation, use many copies of $\left\vert \mathrm{input}%
\right\rangle $, where $n$ is the length of the input to $M$. First, decode
$n^{\prime}$, which is an upper bound to $n$, using the relation (\ref{b2})
from $b_{2,i}$'s : then as the bit $1$ occurs with the probability $n^{-2}$,
\begin{align*}
\Pr\left\{  n\leq n^{\prime}\leq n^{3}\right\}   &  \geq\Pr\left\{  n^{\prime
}\geq n\right\}  -\Pr\left\{  n^{\prime}>n^{3}\right\} \\
&  =(1-n^{-2})^{n-1}-(1-n^{-2})^{n^{3}}\\
&  \geq1-n^{-1}-(1-n^{-2})^{n^{3}}\\
&  \geq1-n^{-1}-e^{-n}%
\end{align*}
Here we had used the relations $(1-c)^{d}\geq1-cd$ and $(1-x^{-1})^{x}\leq
e^{-1}$ $(x\geq1)$. So with high probability, $n^{\prime}$ is an upper bound
to $n$.

Second, use $2^{4n^{\prime}}$ ($\leq2^{4n^{3}}$) of $b_{\iota,1}$'s\ to
evaluate $\beta$. Then, the decode will succeed if $\beta^{\prime}$, the
relative frequency of $1$'s, differs from $\beta$ at most by $\beta
2^{-(n^{\prime}+1)}$. \ The probability of success is evaluated by Chernoff
bound :
\begin{align*}
\Pr\left\{  \left\vert \beta^{\prime}-\beta\right\vert \leq\beta
2^{-(n^{\prime}+2)}\right\}   &  \geq1-2\exp(-\frac{1}{3}\beta2^{-2(n^{\prime
}+1)}\,2^{4n^{\prime}})\\
&  =1-2\exp(-\frac{1}{3}\beta2^{2n^{\prime}-2})\\
&  \geq1-2\exp(-\frac{1}{3}2^{n-2})\,
\end{align*}
where the last inequality holds since the fractional part of $\beta$ ends at
the $n$-th place and $n^{\prime}\geq n$.

Therefore, we have
\[
\mathrm{tr}\,P_{\mathrm{good}}^{(b)}\,\rho^{L}\geq1-n^{-1}-e^{-n}-2\exp
(-\frac{1}{3}2^{n-2})
\]
that leads to (\ref{p-good-2}) if (\ref{n>8}).

\subsection{Approximation of $\overline{\rho}(t,L)$}

Since we are interested in the distance between $\overline{\rho}(t,L)$ and a
state supported on $\mathrm{span}\,\{\left\vert e_{\kappa}\right\rangle
\}_{\kappa\geq1}$, approximate $\overline{\rho}(t,L)$ by $P_{E}\overline{\rho
}(t,L)P_{E}$, where $P_{E}$ is the projection onto the span of
$\,\mathrm{span}\,\{\left\vert e_{\kappa}\right\rangle \}_{\kappa\geq1}$. The
error of the approximation is
\begin{align*}
&  \left\Vert P_{E}\overline{\rho}(t,L)P_{E}-\overline{\rho}(t,L)\right\Vert
_{1}\leq2\sqrt{\,1-\mathrm{tr}\,\overline{\rho}(t,L)P_{E}}\\
&  =2\sqrt{\,1-\sum_{\boldsymbol{x}}|\xi_{\boldsymbol{x}}|^{2}\mathrm{tr}%
\,\overline{\rho}_{\boldsymbol{x}}(t,L)P_{E}}\leq2\sum_{\boldsymbol{x}}%
|\xi_{\boldsymbol{x}}|^{2}\sqrt{\,1-\mathrm{tr}\,\overline{\rho}%
_{\boldsymbol{x}}(t,L)P_{E}}\\
&  \leq2\sum_{\boldsymbol{x}\text{:'good`}}|\xi_{\boldsymbol{x}}|^{2}%
\sqrt{\,1-\mathrm{tr}\,\overline{\rho}_{\boldsymbol{x}}(t,L)P_{E}}%
+2\cdot2n^{-1}.
\end{align*}
where the equality in the second line is by (\ref{dephase-init}).

Define $P_{A}:=\left\vert \varsigma_{A}\right\rangle \left\langle
\varsigma_{A}\right\vert \otimes I_{2}$. Then if $\boldsymbol{x}$ is legal
configuration,
\[
\left\langle \boldsymbol{x}\right\vert (P_{E}^{(L)}-P_{A}^{(L)})\left\vert
\boldsymbol{x}\right\rangle =\left\langle \boldsymbol{x}\right\vert \left(
\left\vert \varsigma_{A},\square\right\rangle \left\langle \varsigma
_{A},\square\right\vert \right)  ^{(L)}\left\vert \boldsymbol{x}\right\rangle
\leq\frac{1}{L+1}\text{.}%
\]
Also, as the number of the $A$-cells is constant of the dynamics,
\[
\left\langle \boldsymbol{x}\right\vert e^{itH}P_{A}^{(L)}e^{-itH}\left\vert
\boldsymbol{x}\right\rangle =\left\langle \boldsymbol{x}\right\vert
e^{itH}P_{A}^{(L)}e^{-itH}\left\vert \boldsymbol{x}\right\rangle .
\]
Therefore
\begin{align*}
\mathrm{tr}\,\overline{\rho}_{\boldsymbol{x}}(t,L)P_{E}  &  \geq
\mathrm{tr}\,\overline{\rho}_{\boldsymbol{x}}(t,L)P_{A}-\frac{1}{L+1}\\
&  =\mathrm{tr}\,\overline{\rho}_{\boldsymbol{x}}(0,L)P_{A}-\frac{1}{L+1}%
\end{align*}
so
\begin{align*}
&  \left\Vert P_{E}\overline{\rho}(t,L)P_{E}-\overline{\rho}(t,L)\right\Vert
_{1}\\
&  \leq2\sum_{\boldsymbol{x}\text{:'good`}}|\xi_{\boldsymbol{x}}|^{2}%
\sqrt{\,1-\mathrm{tr}\,\overline{\rho}_{\boldsymbol{x}}(0,L)P_{A}+\frac
{1}{L+1}}+4n^{-1}\\
&  \leq2\sum_{\boldsymbol{x}\text{:'good`}}|\xi_{\boldsymbol{x}}|^{2}%
\sqrt{1-\frac{L}{L+1}(1-\,\alpha-L^{-1/3})+\frac{1}{L+1}}+4n^{-1}\\
&  \leq2\sqrt{\alpha+\frac{2}{L}+L^{-1/3}}+4n^{-1}.
\end{align*}

Also,
\begin{align*}
P_{E}\overline{\rho}(t,L)P_{E}  &  =\,\sum_{\kappa,\kappa^{\prime}\geq
1}\left\langle e_{\kappa}\right\vert \overline{\rho}(t,L)\left\vert
e_{\kappa^{\prime}}\right\rangle \left\vert e_{\kappa}\right\rangle
\left\langle e_{\kappa^{\prime}}\right\vert \\
&  =\,\sum_{\boldsymbol{x}}|\xi_{\boldsymbol{x}}|^{2}\sum_{\kappa
,\kappa^{\prime}\geq1}\left\langle e_{\kappa}\right\vert \overline{\rho
}_{\boldsymbol{x}}(t,L)\left\vert e_{\kappa^{\prime}}\right\rangle \left\vert
e_{\kappa}\right\rangle \left\langle e_{\kappa^{\prime}}\right\vert \\
&  =\,\sum_{\boldsymbol{x}}|\xi_{\boldsymbol{x}}|^{2}P_{E}\overline{\rho
}_{\boldsymbol{x}}(t,L)P_{E},
\end{align*}
where the equality in the second line is by (\ref{dephase-init}). So if
$\rho_{\ast}$ is an arbitrary state,%
\begin{align}
&  \left\Vert \overline{\rho}(t,L)-\rho_{\ast}\right\Vert _{1}\nonumber\\
&  \leq\left\Vert P_{E}\overline{\rho}(t,L)P_{E}-\overline{\rho}%
(t,L)\right\Vert _{1}+\left\Vert P_{E}\overline{\rho}(t,L)P_{E}-\rho_{\ast
}\right\Vert _{1}\nonumber\\
&  \leq\left\Vert P_{E}\overline{\rho}(t,L)P_{E}-\overline{\rho}%
(t,L)\right\Vert _{1}+\sum_{\boldsymbol{x}}|\xi_{\boldsymbol{x}}%
|^{2}\left\Vert P_{E}\overline{\rho}_{\boldsymbol{x}}(t,L)P_{E}-\rho_{\ast
}\right\Vert _{1}\nonumber\\
&  \leq\left\Vert P_{E}\overline{\rho}(t,L)P_{E}-\overline{\rho}%
(t,L)\right\Vert _{1}+\max_{\boldsymbol{x}\text{:'good`}}\left\Vert
P_{E}\overline{\rho}_{\boldsymbol{x}}(t,L)P_{E}-\rho_{\ast}\right\Vert
_{1}+2\cdot2n^{-1}\nonumber\\
&  \leq2\sqrt{\alpha+\frac{2}{L}+L^{-1/3}}+8n^{-1}+\max_{\boldsymbol{x}%
\text{:'good`}}\left\Vert P_{E}\overline{\rho}_{\boldsymbol{x}}(t,L)P_{E}%
-\rho_{\ast}\right\Vert _{1}. \label{error-rho-e-1-1}%
\end{align}

Analogously,
\begin{align}
\left\Vert \lim_{T\rightarrow\infty}\frac{1}{T}\int_{0}^{T}dt\,\overline{\rho
}(t,L)-\rho_{\ast}\right\Vert _{1}  &  \leq2\sqrt{\alpha+\frac{2}{L}+L^{-1/3}%
}+8n^{-1}\nonumber\\
&  +\max_{\boldsymbol{x}\text{:'good`}}\left\Vert \lim_{T\rightarrow\infty
}\frac{1}{T}\int_{0}^{T}dt\,P_{E}\overline{\rho}_{\boldsymbol{x}}%
(t,L)P_{E}-\rho_{\ast}\right\Vert _{1}. \label{error-rho-e-1-2}%
\end{align}
(Here, $\boldsymbol{x}$ runs over a finite set, $\sum_{\boldsymbol{x}}$ can be
exchanged with $\lim_{T\rightarrow\infty}\frac{1}{T}\int_{0}^{T}dt$.)

\subsection{"Dephasing" between the time steps}

Here we show
\[
\lim_{T\rightarrow\infty}\frac{1}{T}\int_{0}^{T}dt\,\mathrm{tr}\,\overline
{\rho}_{\boldsymbol{x}}(t,L)B\approx\sum_{j=1}^{J_{\boldsymbol{x}}%
}p_{j;\boldsymbol{x}}\left\langle j;\boldsymbol{x}\right\vert B^{(L)}%
\left\vert j;\boldsymbol{x}\right\rangle ,
\]
where $p_{j;\boldsymbol{x}}$ is as of (\ref{pj}) and the error of the
approximation is
\begin{equation}
\left\vert \lim_{T\rightarrow\infty}\frac{1}{T}\int_{0}^{T}dt\,\mathrm{tr}%
\,\overline{\rho}_{\boldsymbol{x}}(t,L)B-\sum_{j=1}^{J_{\boldsymbol{x}}%
}p_{j;\boldsymbol{x}}\left\langle j;\boldsymbol{x}\right\vert B^{(L)}%
\left\vert j;\boldsymbol{x}\right\rangle \right\vert \leq\frac{2}{L}\left\Vert
B\right\Vert . \label{deco-step}%
\end{equation}
By approximating $p_{j;\boldsymbol{x}}$ by $1/J_{\boldsymbol{x}}$,
\begin{equation}
\left\vert \lim_{T\rightarrow\infty}\frac{1}{T}\int_{0}^{T}dt\,\mathrm{tr}%
\,\overline{\rho}_{\boldsymbol{x}}(t,L)B-\frac{1}{J_{\boldsymbol{x}}}%
\sum_{j=1}^{J_{\boldsymbol{x}}}\left\langle j;\boldsymbol{x}\right\vert
B^{(L)}\left\vert j;\boldsymbol{x}\right\rangle \right\vert \leq\left\{
\frac{2}{L}+\frac{2}{J}\right\}  \left\Vert B\right\Vert . \label{deco-step-2}%
\end{equation}
The results in \cite{NagajWojcan} justifies the relation only for diagonal
observables, so we have to give the proof slightly generalizing their analysis.

In the rest of the section, we show (\ref{deco-step}). By (\ref{hca-t}),%

\begin{align*}
&  \lim_{T\rightarrow\infty}\frac{1}{T}\int_{0}^{T}dt\,\mathrm{tr}%
\,\overline{\rho}_{\boldsymbol{x}}(t,L)B=\lim_{T\rightarrow\infty}\frac{1}%
{T}\int_{t=0}^{T}dt\left\langle \boldsymbol{x}^{\prime}\right\vert
e^{-itH}B^{(L)}e^{\iota tH}\left\vert \boldsymbol{x}\right\rangle \\
&  =\frac{4}{(J_{\boldsymbol{x}}+1)^{2}}\sum_{k,k^{\prime}=1}%
^{J_{\boldsymbol{x}}}\sum_{j,j^{\prime}=1}^{J_{\boldsymbol{x}}}\left[
\lim_{T\rightarrow\infty}\frac{1}{T}\int_{t=0}^{T}e^{\iota\left(
\omega_{k,\boldsymbol{x}}-\omega_{k,\boldsymbol{x}}\right)  t}dt\right] \\
&  \quad\quad\quad\quad\quad\quad\quad\quad\quad\quad\quad\quad\times\sin
\frac{\pi k^{\prime}}{J_{\boldsymbol{x}}+1}\sin\frac{\pi k}{J_{\boldsymbol{x}%
}+1}\sin\frac{j^{\prime}k^{\prime}\pi}{J_{\boldsymbol{x}}+1}\sin\frac{jk\pi
}{J_{\boldsymbol{x}}+1}\left\langle j^{\prime};\boldsymbol{x}^{\prime
}\right\vert B^{(L)}\left\vert j;\boldsymbol{x}\right\rangle \\
&  =\frac{4}{(J_{\boldsymbol{x}}+1)^{2}}\sum_{k=1}^{J_{\boldsymbol{x}}}%
\sum_{j,j^{\prime}=1}^{J_{\boldsymbol{x}}}\sin^{2}\frac{\pi k}%
{J_{\boldsymbol{x}}+1}\sin\frac{j^{\prime}k\pi}{J_{\boldsymbol{x}}+1}\sin
\frac{jk\pi}{J_{\boldsymbol{x}}+1}\left\langle j^{\prime};\boldsymbol{x}%
^{\prime}\right\vert B^{(L)}\left\vert j;\boldsymbol{x}\right\rangle \\
&  =\sum_{j=1}^{J_{\boldsymbol{x}}}p_{j;\boldsymbol{x}}\left\langle
j;\boldsymbol{x}\right\vert B^{(L)}\left\vert j;\boldsymbol{x}\right\rangle
-\frac{1}{2}\frac{1}{J_{\boldsymbol{x}}+1}\sum_{1\leq j,,j^{\prime}\leq
J_{\boldsymbol{x}},j^{\prime}=j\pm2}\left\langle j;\boldsymbol{x}\right\vert
B^{(L)}\left\vert j^{\prime};\boldsymbol{x}\right\rangle ,
\end{align*}
where the last identity is by
\begin{align*}
&  \sum_{k=1}^{J}\sin^{2}\frac{\pi k}{J+1}\sin\frac{j^{\prime}k\pi}{J+1}%
\sin\frac{jk\pi}{J+1}\\
&  =\left\{
\begin{array}
[c]{cc}%
\frac{1}{4}(J+1), & j=j^{\prime}\neq1\text{ and}\neq J,\\
\frac{1}{8}(3J+3), & j=j^{\prime}=1\text{ or}=J.\\
0, & j\neq j^{\prime}\ \text{and }j^{\prime}\neq j\pm2,\\
-\frac{1}{8}(J+1), & j\neq j^{\prime}\ \text{and }j^{\prime}\neq j\pm2.
\end{array}
\right.
\end{align*}

We argue the second term of the last end almost vanishes. Observe
\[
\left\vert \left\langle j;\boldsymbol{x}\right\vert B^{(L)}\left\vert
j^{\prime};\boldsymbol{x}\right\rangle \right\vert \leq\frac{1}{L+1}\sum
_{i=0}^{L}\left\vert \left\langle j;\boldsymbol{x}\right\vert B_{i}\left\vert
j^{\prime};\boldsymbol{x}\right\rangle \right\vert
\]
The configurations corresponding to $\left\vert j^{\prime};\boldsymbol{x}%
\right\rangle $ and $\left\vert j;\boldsymbol{x}\right\rangle $ differ at the
$i_{0}$-th site, for example. Then, $\left\langle j;\boldsymbol{x}\right\vert
B_{i}\left\vert j^{\prime};\boldsymbol{x}\right\rangle =0$ unless $i=i_{0}$.
Therefore,
\[
\sum_{i=0}^{L}\left\vert \left\langle j;\boldsymbol{x}\right\vert
B_{i}\left\vert j^{\prime};\boldsymbol{x}\right\rangle \right\vert
\leq\left\vert \left\langle j;\boldsymbol{x}\right\vert B_{i_{0}}\left\vert
j^{\prime};\boldsymbol{x}\right\rangle \right\vert \leq\left\Vert B\right\Vert
.
\]
Therefore,%

\begin{align*}
\frac{1}{J_{\boldsymbol{x}}+1}|\sum_{1\leq j,,j^{\prime}\leq J_{\boldsymbol{x}%
},j^{\prime}=j\pm2}\left\langle j;\boldsymbol{x}\right\vert B^{(L)}\left\vert
j^{\prime};\boldsymbol{x}\right\rangle |  &  \leq\frac{1}{J_{\boldsymbol{x}%
}+1}\sum_{1\leq j,,j^{\prime}\leq J_{\boldsymbol{x}},j^{\prime}=j\pm2}%
\frac{\left\Vert B\right\Vert }{L+1}\\
&  \leq\frac{2\left\Vert B\right\Vert }{L+1}\leq\frac{2\left\Vert B\right\Vert
}{L},
\end{align*}
leading to the asserted relation.

\subsection{Energy gaps}

Denote by $P_{\boldsymbol{x}}$ the projector onto the subspace spanned by
$\{\left\vert j;\boldsymbol{x}\right\rangle ;j=1,\cdots,J_{\boldsymbol{x}}\}$:
Here, $\boldsymbol{x}$ is not necessarily a legal configuration, so the the
dynamics $U^{j}\left\vert \boldsymbol{x}\right\rangle $ $(j\in\mathbb{N})$ may
be cyclic. In such a case, $J_{\boldsymbol{x}}$ denotes the period.

Clearly, $\,P_{\boldsymbol{x}}P_{\boldsymbol{x}^{\prime}}=0$ ($\boldsymbol{x}%
\neq\boldsymbol{x}^{\prime}$) and
\[
\mathcal{H}^{\otimes L+1}=\,\oplus_{\boldsymbol{x}\text{ }}\mathrm{supp}%
\,P_{\boldsymbol{x}}\text{,}%
\]
and $P_{\boldsymbol{x}}HP_{\boldsymbol{x}^{\prime}}$ vanishes if
$\boldsymbol{x}^{\prime}\neq\boldsymbol{x}$. If the dynamics starting from
$\left\vert \boldsymbol{x}\right\rangle $ is not cyclic, the eigenvalues of
$P_{\boldsymbol{x}}HP_{\boldsymbol{x}}$ is $\omega_{k,x}$ as of (\ref{hca-t}).
If it is cyclic, the eigenvalues are $\cos\frac{2\pi k}{J_{\boldsymbol{x}}}$
($k=0,1,...$), and the corresponding eigenvector(s) is (are):
\[
\sum_{j=1}^{J_{\boldsymbol{x}}}e^{\iota\frac{2\pi k}{J_{\boldsymbol{x}}}%
(j-1)}\left\vert j;\boldsymbol{x}\right\rangle ,\,\sum_{j=1}%
^{J_{\boldsymbol{x}}}e^{-\iota\frac{2\pi k}{J_{\boldsymbol{x}}}(j-1)}%
\left\vert j;\boldsymbol{x}\right\rangle .
\]
(For some values of $J_{\boldsymbol{x}}$ and $k$, these two are identical
modulo constant factor.)

So without loss of generality, \ any two eigenvalues $\omega$ and
$\omega^{\prime}$ are in the following form:%
\[
\omega=\cos\frac{2\pi k}{m},\omega^{\prime}=\cos\frac{2\pi k^{\prime}%
}{m^{\prime}},
\]
where
\[
\max\{m,m^{\prime}\}\leq L
\]

\bigskip

By (\ref{hca-t}), the difference between two energy levels is bounded below as follows.%

\begin{align*}
\left\vert 2\cos\frac{2\pi k}{J_{\boldsymbol{x}}+1}-2\cos\frac{2\pi\left(
k+1\right)  }{J_{\boldsymbol{x}}+1}\right\vert  &  \geq\left\vert 2\cos
\frac{2\pi J_{\boldsymbol{x}}}{J_{\boldsymbol{x}}+1}-2\cos\frac{2\pi\left(
J_{\boldsymbol{x}}+1\right)  }{J_{\boldsymbol{x}}+1}\right\vert \\
&  =4\sin^{2}\frac{\pi}{J_{\boldsymbol{x}}+1}\\
&  \geq4\left(  \frac{1}{\pi/2}\frac{\pi}{J_{\boldsymbol{x}}+1}\right)  ^{2}\\
&  =\frac{8}{(J_{\boldsymbol{x}}+1)^{2}}\\
&  \geq\frac{8}{(d^{L+1}+1)^{2}}.
\end{align*}
Here, the last inequality is

\subsection{Proof}

\subsubsection{The case where the input is too short}

If the input length $n=|v|\leq n_{0}$, we had defined $V_{v}=I$. So there is
no $M$-cells in its initial configuration, and $U$ simply shifts the finite
control cell rightward, without modifying anything else. Therefore, this case
reduces to the non-halting case with $\alpha=1$ and $\rho_{\mathrm{good}}%
^{L}=\rho^{L}$.

\subsubsection{The case where the URTM $M$ does not halt}

Suppose $L$ is so small that the decoding or the simulation does not
terminate. Then the amplification stage does not start. Therefore, we are in
trouble only if there are enough $M$-cells and but the configuration is `bad'.
So suppose the initial configuration $\boldsymbol{x}$ is `good'. In this case,
$M_{A}$ does not proceed to the amplification stage, so all the $A$-cells,
except perhaps the one at the left end, are in $a_{1}$ forever. So if
$\boldsymbol{x}$ is a `good' initial configuration,%
\[
P_{E}^{(L)}\left\vert j;\boldsymbol{x}\right\rangle =\left(  \left\vert
e_{1}\right\rangle \left\langle e_{1}\right\vert \right)  ^{(L)}\left\vert
j;\boldsymbol{x}\right\rangle ,
\]
so%

\begin{align}
\left\Vert P_{E}\overline{\rho}_{\boldsymbol{x}}(t,L)P_{E}-\left\vert
e_{1}\right\rangle \left\langle e_{1}\right\vert \right\Vert _{1}  &
=\left\Vert \left\langle e_{1}\right\vert \overline{\rho}_{\boldsymbol{x}%
}(t,L)\left\vert e_{1}\right\rangle \left\vert e_{1}\right\rangle \left\langle
e_{1}\right\vert -\left\vert e_{1}\right\rangle \left\langle e_{1}\right\vert
\right\Vert _{1}\nonumber\\
&  =1-\left\langle e_{1}\right\vert \overline{\rho}_{\boldsymbol{x}%
}(t,L)\left\vert e_{1}\right\rangle =1-\,\mathrm{tr}\,P_{E}\,\overline{\rho
}_{\boldsymbol{x}}(t,L)\nonumber\\
&  \leq\alpha+\frac{2}{L}+L^{-1/3}, \label{rho-e1}%
\end{align}
where we had used (\ref{PE-tr-2}) to obtain the last inequality. Therefore, by
(\ref{error-rho-e-1-1}), (\ref{sa-2-2}) is satisfied:%

\begin{align*}
\left\Vert \overline{\rho}(t,L)-\left\vert e_{1}\right\rangle \left\langle
e_{1}\right\vert \right\Vert _{1}  &  \leq2\sqrt{\alpha+\frac{2}{L}+L^{-1/3}%
}+4n^{-1}+\alpha+\frac{2}{L}+L^{-1/3}\\
&  \leq4\sqrt{\alpha}=\varepsilon_{1},
\end{align*}
where the second inequality is by (\ref{n>8}), (\ref{L>n}), and (\ref{alpha}).

\subsubsection{The case where the URTM $M$ halts}

In this case, we use (\ref{deco-step-2}). First, let $B=\left\vert
e_{\kappa^{\prime}}\right\rangle \left\langle e_{\kappa}\right\vert $ and
$\kappa\neq\kappa^{\prime}$, $\left\langle j;\boldsymbol{x}\right\vert
B_{i}\left\vert j;\boldsymbol{x}\right\rangle =0$ for all $i$, so
\[
\frac{1}{J_{\boldsymbol{x}}}\sum_{j=1}^{J_{\boldsymbol{x}}}\left\langle
j;\boldsymbol{x}\right\vert (\left\vert e_{\kappa^{\prime}}\right\rangle
\left\langle e_{\kappa}\right\vert )^{(L)}\left\vert j;\boldsymbol{x}%
\right\rangle =0.
\]

Next, consider $\left\langle j;\boldsymbol{x}\right\vert (\left\vert
e_{2}\right\rangle \left\langle e_{2}\right\vert )^{(L)}\left\vert
j;\boldsymbol{x}\right\rangle $, which equals $N_{2}(j)/(L+1)$, where
$N_{2}(j)$ is the number of $a_{2}$ at the $j$-th step.

Suppose the third stage ends at the $j_{0}$-th step, and the number of
$M$-cells not marked by $\square$ is $L_{m}$. If $\boldsymbol{x}$ is a `good'
initial configuration,
\[
(\alpha-L^{-1/3})L<L_{m}<(\alpha+L^{-1/3})L.
\]

As the head sweeps all the cells from the left end to the right end,
\[
J_{\boldsymbol{x}}=j_{0}+2L.
\]

Observe $\frac{1}{J_{x}}\sum_{j=1}^{J_{x}}N_{2}(j)$ is the area below the
graph $j\rightarrow N_{2}(j)$. This is complicated function of the
distribution of $M$-cells, and it takes minimum if all the $M$-cells are
clustered in the left end. In this case,
\begin{align*}
\frac{1}{J_{x}(L+1)}\sum_{j=1}^{J_{x}}N_{2}(j)  &  =\frac{1}{J_{x}(L+1)}%
\sum_{k=1}^{L-L_{m}-2}2k\text{ }\\
&  =\frac{(L-L_{m}-2)(L-L_{m}-1)}{(2L+j_{0})(L+1)}\\
&  \rightarrow\frac{1}{2}-\alpha,\text{ as }L\rightarrow\infty.
\end{align*}
On the other hand, $\frac{1}{J_{x}}\sum_{j=1}^{J_{x}}N_{2}(j)$ is maximized if
all the $M$-cells are clustered in the right end. In this case, after $N_{2}$
increased to $L-L_{m}-1$, the head sweeps the $M$-cells while $N_{2}$ kept
unchanged. Therefore,
\begin{align*}
\frac{1}{J_{x}(L+1)}\sum_{j=1}^{J_{x}}N  &  =\frac{1}{J_{x}(L+1)}\{\sum
_{k=1}^{L-L_{m}-2}2k+2L_{m}(L-L_{m}-1)\}\\
&  =\frac{(L+L_{m}-2)(L-L_{m}-1)}{(2L+j_{0})(L+1)}\\
&  \rightarrow\frac{1-\alpha^{2}}{2},\text{as }L\rightarrow\infty.
\end{align*}
Therefore,
\[
\lim_{L\rightarrow\infty}\left\vert \frac{1}{J_{\boldsymbol{x}}}\sum
_{j=1}^{J_{\boldsymbol{x}}}\left\langle j;\boldsymbol{x}\right\vert
(\left\vert e_{2}\right\rangle \left\langle e_{2}\right\vert )^{(L)}\left\vert
j;\boldsymbol{x}\right\rangle -\frac{1}{2}\right\vert \leq\alpha.
\]

By $N_{1}(j)+N_{2}(j)=$ $L-L_{m}-1$,
\begin{align*}
\lim_{L\rightarrow\infty}\frac{1}{J_{x}(L+1)}\sum_{j=1}^{J_{x}}N_{1}(j)  &
\geq\lim_{L\rightarrow\infty}\frac{L-L_{m}-1}{L+1}-\frac{1-\alpha^{2}}{2}\\
&  =\frac{1}{2}-\alpha+\frac{\alpha^{2}}{2},\\
\lim_{L\rightarrow\infty}\frac{1}{J_{x}(L+1)}\sum_{j=1}^{J_{x}}N_{1}(j)  &
\leq\lim_{L\rightarrow\infty}\frac{L-L_{m}-1}{L+1}-(\frac{1}{2}-\alpha
)=\frac{1}{2}.
\end{align*}
Therefore
\[
\lim_{L\rightarrow\infty}\left\vert \frac{1}{J_{\boldsymbol{x}}}\sum
_{j=1}^{J_{\boldsymbol{x}}}\left\langle j;\boldsymbol{x}\right\vert
(\left\vert e_{1}\right\rangle \left\langle e_{1}\right\vert )^{(L)}\left\vert
j;\boldsymbol{x}\right\rangle -\frac{1}{2}\right\vert \leq\alpha.
\]

So by (\ref{deco-step-2}),
\begin{align*}
\lim_{L\rightarrow\infty}\left\vert \lim_{T\rightarrow\infty}\frac{1}{T}%
\int_{0}^{T}dt\left\langle e_{\kappa}\right\vert \overline{\rho}%
_{\boldsymbol{x}}(t,L)\left\vert e_{\kappa}\right\rangle -\frac{1}%
{2}\right\vert  &  \leq\alpha+\lim_{L\rightarrow\infty}\frac{2}{L}+\frac
{2}{2L+j_{0}}\\
&  =\alpha\,\,\,\,\,\,\,(\kappa=1,2),\\
\lim_{L\rightarrow\infty}\left\vert \lim_{T\rightarrow\infty}\frac{1}{T}%
\int_{0}^{T}dt\left\langle e_{1}\right\vert \overline{\rho}_{\boldsymbol{x}%
}(t,L)\left\vert e_{2}\right\rangle \right\vert  &  \leq\lim_{L\rightarrow
\infty}\{\frac{2}{L}+\frac{2}{2L}\}=0.
\end{align*}

Therefore,
\[
\lim_{L\rightarrow\infty}\left\Vert \lim_{T\rightarrow\infty}\frac{1}{T}%
\int_{0}^{T}dtP_{E}\,\overline{\rho}_{\boldsymbol{x}}(t,L)\,P_{E}-\frac{1}%
{2}(\left\vert e_{1}\right\rangle \left\langle e_{1}\right\vert +\left\vert
e_{2}\right\rangle \left\langle e_{2}\right\vert )\right\Vert _{1}\leq
2\alpha.
\]
So by (\ref{error-rho-e-1-2}),%
\begin{align*}
&  \lim_{L\rightarrow\infty}\left\Vert \lim_{T\rightarrow\infty}\frac{1}%
{T}\int_{0}^{T}dt\overline{\rho}(t,L)\,-\frac{1}{2}(\left\vert e_{1}%
\right\rangle \left\langle e_{1}\right\vert +\left\vert e_{2}\right\rangle
\left\langle e_{2}\right\vert )\right\Vert _{1}\leq2\sqrt{\alpha}%
+4n^{-1}+2\alpha\\
&  \leq2\sqrt{\alpha}+3\alpha\leq4\sqrt{\alpha}=\varepsilon_{1}\text{ }%
\end{align*}
where we had used (\ref{n>8}) and (\ref{alpha}). Therefore,\ (\ref{sa-1}) is satisfied.

\section{Open boundary condition}

In case of the open boundary condition, the Hamiltonian $H^{L}$ for the finite
size system is defined by omitting terms involving the non-existent sites:
Then if the tape cell runs out, the dynamics is aborted.

As for the initial state, clearly, the position of the state $\left\vert
e_{0}\right\rangle $ in the 1-D lattice is important. Here, we assume
$\left\vert e_{0}\right\rangle $ is at the middle, so
\[
\rho^{L}=\left(  \left\vert \psi\right\rangle \left\langle \psi\right\vert
\right)  ^{\otimes L/2}\otimes\left\vert e_{0}\right\rangle \left\langle
e_{0}\right\vert \otimes\left(  \left\vert \psi\right\rangle \left\langle
\psi\right\vert \right)  ^{\otimes L/2}.
\]
Then in the last amplification stage, only the sites in the right half line
will be rewritten. So by the analysis analogous to the case of periodic
boundary condition, we can still prove the statement of the first main lemma,
except that $\eta=1/2$ in the statement should be replaced by $\eta=1/4$.
Accordingly, the parameters in the theorems should be modified.

Moreover, as explained below, by adopting more complicated amplification
stage, the value of the parameter $\eta$ can be set to an arbitrary value in
$(0,2/3]$.

For this purpose, we rewrite the cells at the both ends of the $0$-th one. For
this purpose, the head moves to the right and left alternatingly. Now we use
the tape symbols
\begin{align*}
\Gamma_{2,A}^{\prime}  &  :=\{a_{1},a_{2},a_{3},\square,\square_{2}%
,\square_{3}\}\\
\Gamma_{2,M}^{\prime}  &  :=\Gamma_{2,M}\cup\{\square_{2},\square_{3}\}.
\end{align*}
instead of $\Gamma_{2,A}$, $\Gamma_{2,M}$. Also,
\[
Q_{u,4}:=\{q_{4,0},q_{4,1},q_{4,2},q_{4,3}\}.
\]
\ 

At the end of the simulation stage, the tape head is moved to the $\square$-ed
cell, and upon reading the cell, the finite control state is changed from
$(m_{0},q_{3,halt})$ to $(m_{1},q_{4,1})$, while rewriting $\square$ to
$\square_{3}$, irrespective of it is an $A$-cell or not. In the succeeding
operations, the other $M$-cells are skipped. After this, the machine operates
according to the table:

$%
\begin{tabular}
[c]{l|lllll|l}
& \multicolumn{5}{|c|}{$m_{0}$} & $m_{1}$\\\cline{1-6}\cline{2-6}
& $a_{1}$ & $a_{2}$ & $a_{3}$ & $\square_{2}$ & $\square_{3}$ & \\\hline
$q_{4,0}$ & $a_{1},q_{4,0}$ & $a_{2},q_{4,0}$ & $a_{2},q_{4,1}$ & $\square
_{2},q_{4,0}$ &  & $\rightarrow$\\
$q_{4,1}$ & $a_{3},q_{4,2}$ &  &  &  &  & $\rightarrow$\\
$q_{4,2}$ & $a_{1},q_{4,2}$ & $a_{2},q_{4,2}$ & $a_{2},q_{4,3}$ & $\square
_{2},q_{4,2}$ & $\square_{2},q_{4,3}$ & $\leftarrow$\\
$q_{4,3}$ & $a_{3},q_{4,0}$ &  &  &  &  & $\leftarrow$%
\end{tabular}
\ \ \ $

If the entity of the table is empty, no successor is defined.

In the sequel, we compute the average of the rate of $a_{2}$. Let $N_{2}(j)$
be the number of $a_{2}$ at the step $j$. With the two-way move, the number of
the steps between the increment of $N_{2}(j)$ is proportional to $N_{2}(j)$.
So, $O(k)^{2}$-steps are necessary from the start of the amplification stage
to become $N_{2}(j)=k$, and
\[
N_{2}(j)=O(\sqrt{j-j_{0}}),
\]
where $j_{0}$ is the final step of the simulation stage. Therefore,
\[
\lim_{\alpha\rightarrow0}\lim_{L\rightarrow\infty}\frac{1}{J(L+1)}\sum
_{j=1}^{J}N_{2}(j)=\int_{0}^{1}\sqrt{x}\,dx=\frac{2}{3}.
\]
So the statement of Lemma\thinspace\ref{lem:main} holds if $\eta=1/2$ in the
statement is replaced by $\eta=2/3$, and the relation between $\varepsilon
_{1}$ and $\alpha$ is modified.

It is possible to decrease $\eta$ to the arbitrary value in $(0,2/3]$, by
leaving some of $A$-cells unchanged.

\section{$\mathsf{RE}$-completeness}

\label{sec:re-complete}

Though we cannot solve the halting problem, there is a TM $M^{\prime}$ that
"confirm" the answer of the problem for `Yes' instances: $M^{\prime}$ halts
and returns the answer `Yes' if the URTM $M$ halts on a input $v$. An example
of such a Turing machine operates as $M$ does on the input $v$, except it
returns `Yes' upon hatting. (The TM $M^{\prime}$ does not halt if the answer
is `No'. ) If a TM can confirm the answer of the given decision problem for
`Yes' instances and does not halt for all `No' instances, we say the problem
is recursively enumerable ($\mathsf{RE}$). The halting problem is an instance
of $\mathsf{RE}$ problems, and it is an $\mathsf{RE}$-complete problem in the
sense that any $\mathsf{RE}$ problem can be reduced to the halting problem. We
show that some versions of our problem is also $\mathsf{RE}$-complete, meaning
that the problem is exactly as difficult as the halting problem.

To prove it is in $\mathsf{RE}$, suppose the input satisfies (\ref{sa-1}) and
falsify (\ref{sa-2}).

Discretize the parameter $t$,
\begin{equation}
t_{i+1}:=t_{i}+\frac{1}{4\left\Vert H^{L}\right\Vert }(\eta-\varepsilon_{1}),
\label{dt}%
\end{equation}
so that, for any $t\in\lbrack t_{i},t_{i+1}]$%

\begin{align*}
\,\left\Vert \overline{\rho}(t_{i},L)-\overline{\rho}(t,L)\right\Vert _{1}  &
\leq2\left\Vert e^{-\iota(t-t_{i})H^{L}}-I\right\Vert \\
&  \leq2\left\Vert H^{L}\right\Vert (t-t_{i})\leq\frac{1}{2}(\eta
-\varepsilon_{1}).
\end{align*}
Here we had used $|e^{\iota x}-1|\leq\left\vert x\right\vert $. So if
$T=K\,(t_{i+1}-t_{i})$,
\begin{align*}
\left\Vert \frac{1}{T}\int_{0}^{T}dt\overline{\rho}(t,L)-\frac{1}{K}\sum
_{i=1}^{K}\overline{\rho}(t_{i},L)\right\Vert _{1}  &  \leq\frac{1}{K}%
\sum_{i=1}^{K}\frac{1}{t_{i}-t_{i-1}}\int_{t_{i-1}}^{t_{i}}dt\left\Vert
\overline{\rho}(t,L)-\overline{\rho}(t_{i},L)\right\Vert _{1}\\
&  \leq\frac{1}{2}(\eta-\varepsilon_{1}).
\end{align*}
\ 

Let $\overline{\rho}_{\mathrm{ap}}(t,L)$ be an approximation of $\,\overline
{\rho}(t,L)$ whose components are binary fractional numbers with finite digits
such that
\[
\left\Vert \overline{\rho}_{\mathrm{ap}}(t,L)-\overline{\rho}(t,L)\right\Vert
_{1}\leq\frac{1}{2}(\eta-\varepsilon_{1}).
\]
Then
\[
\left\Vert \frac{1}{T}\int_{0}^{T}dt\overline{\rho}(t,L)-\frac{1}{K}\sum
_{i=1}^{K}\overline{\rho}_{\mathrm{ap}}(t_{i},L)\right\Vert _{1}\leq
\eta-\varepsilon_{1}.
\]

So if
\begin{equation}
\,\left\Vert \frac{1}{K}\sum_{i=1}^{K}\overline{\rho}_{\mathrm{ap}}%
(t_{i},L)-\left\vert e_{1}\right\rangle \left\langle e_{1}\right\vert
\right\Vert _{1}>\varepsilon_{1}+(\eta-\varepsilon_{1})+\frac{1}{4}%
(\eta-\varepsilon_{1}) \label{check-cond-approx}%
\end{equation}
is verified by computing the norm $\left\Vert \cdot\right\Vert _{1}$ with the
error at most $\frac{1}{4}(\eta-\varepsilon_{1})$ for some $K$ and $L\geq
L_{0}$, we can conclude that
\begin{equation}
\,\left\Vert \frac{1}{T}\int_{0}^{T}dt\,\overline{\rho}(t,L)-\left\vert
e_{1}\right\rangle \left\langle e_{1}\right\vert \right\Vert _{1}%
>\varepsilon_{1}. \label{check-cond}%
\end{equation}
Therefore, (\ref{sa-2}) is falsified.

Here we show that there is $K$ and $L\geq L_{0}$ satisfying
(\ref{check-cond-approx}) whenever (\ref{sa-1}) is true. If (\ref{sa-1}) is
true, there is $L\geq L_{0}$ and a $T\geq0$ such that
\[
\left\Vert \frac{1}{T}\int_{0}^{T}dt\,\overline{\rho}(t,L)-((1-\eta)\left\vert
e_{1}\right\rangle \left\langle e_{1}\right\vert +\eta\left\vert
e_{2}\right\rangle \left\langle e_{2}\right\vert )\right\Vert _{1}%
<\varepsilon_{1}+(\eta-\varepsilon_{1}).
\]
If this condition is met,
\[
\left\Vert \frac{1}{T}\int_{0}^{T}dt\,\overline{\rho}(t,L)-\left\vert
e_{1}\right\rangle \left\langle e_{1}\right\vert \right\Vert _{1}%
\geq\varepsilon_{1}+2(\eta-\varepsilon_{1}),
\]
so
\begin{align*}
\left\Vert \frac{1}{K}\sum_{i=1}^{K}\overline{\rho}_{\mathrm{ap}}%
(t_{i},L)-\left\vert e_{1}\right\rangle \left\langle e_{1}\right\vert
\right\Vert _{1}  &  \geq\varepsilon_{1}+2(\eta-\varepsilon_{1})-\frac{1}%
{2}(\eta-\varepsilon_{1})\\
&  =\varepsilon_{1}+(\eta-\varepsilon_{1})+\frac{1}{2}(\eta-\varepsilon_{1}).
\end{align*}
This is sufficient for (\ref{check-cond-approx}).

At each $K$ and $L\geq L_{0}$, we check the condition (\ref{check-cond-approx}%
). Since the set $\{(K_{i},L);K,L\in\mathbb{N}\}$ is countable, this test can
be done for all $(K,L)$'s sequentially. The verification process terminates if
and only if the input is a `Yes' instance, so the problem is contained in
$\mathsf{RE}$.

\begin{remark}
This proof essentially demonstrates that the problem reduces to the
computation of the limit of a sequence with at most one mind-change\thinspace
\cite{DH}: It is known that the latter problem is $\mathsf{RE}$-complete
(relative to many-one reduction.).
\end{remark}

\section{The second main lemma}

Let us turn to the case where the initial state is (\ref{rho-L-iid}). The
boundary condition may be periodic or open.

In the proof, we use $v$ that corresponds to an $\mathsf{EXPSPACE}$-complete
problem : a decision problem is said to be an $\mathsf{EXPSPACE}$-complete iff
it can be solved using $\exp(poly(n))$ bits of the working space and any such
problem can be reduced to it by a polynomial time computation. For an example
of $\mathsf{EXPSPACE}$-complete problem, see

Define $\left\vert \psi\right\rangle $ as a superposition of $\left\vert
e_{0}\right\rangle $ and $V_{v}\left\vert e_{1}\right\rangle $. So,
$\left\vert \psi\right\rangle ^{\otimes L+1}$ is a superposition of classical
configurations with many $\left\vert e_{0}\right\rangle $'s, and each block of
sites between two $\left\vert e_{0}\right\rangle \,$'s are used as a simulator
of TM with a finite length.

Though the size of the system $L$ grows infinitely large irrespective of $n$,
the size of the each block is restricted by the relative frequency of
$\left\vert e_{0}\right\rangle $. If $L^{[k]}+1$ denotes the size of the
$k$-th block, we control the amplitude of $\left\vert e_{0}\right\rangle $ so
that
\[
l\lesssim L^{[k]}+1\lesssim l^{4},
\]
where $l$ is a function of $n$, and in this section we suppose
\[
l=O(\exp(poly(n)))\geq poly(n).
\]

Now the number of the steps for the first to the third sage $j_{0}$ can be as
large as $O(\exp(l))>>L^{[k]}$. So by (\ref{approx-fre}), the relative
frequency of $a_{2}$ may not be closed to $1/2$. Intuitively, this is because
$j_{0}$ is not negligible compared with the duration of the amplification
stage, which is $O(L^{[k]})=O(poly(l))$ steps.

Note it is not possible to increase $L^{[k]}$: Recall in the first and second
main technical lemma, the amplitudes of the initial configuration should be
polynomial time computable. But the amplitude for $\left\vert e_{0}%
\right\rangle $ is $O(1/L^{[k\}})$, and it takes $O(\log L^{[k]})$ -time to
compute. therefore, $L^{[k]}$ is bounded from above by $O(\exp(poly(n)))$.

So we somehow modify the composition of the machine $M_{A}$. Instead of
running the simulation of $M$ only once before the amplification, we run the
subroutine simulating $M$ as the number of $a_{2}$ increases by one.

\subsection{RTM $M_{A}$}

We mainly modify how the amplification stage is composed with other stages.
Here, without loss of generality, we suppose $j_{0}$, the step at which the
third step ends, is large enough:
\begin{equation}
j_{0}\geq l^{12}. \label{j-0}%
\end{equation}
This is realized by doing an extra task irrelevant to the input. Also we
suppose all the trace of the second and their steps in the tape cells are
initialized. (This is realized by erasing the trace of computation by moving
the machine backward.). $M$ goes into the amplification stage only if $M$
accepts the input. Otherwise, no successor is defined.

\begin{remark}
$j_{0}$ is a complicated functions of the distribution of the $M$-cells, as we
skip the $A$-cells between $M$-cells during these computation. But it can be
bounded from below by its value in case that no $A$ cell is present. It may be
as large as $O(\exp(l))$.
\end{remark}

In the amplification stage, we use the three states $Q_{u,4}=\{q_{4,\kappa
},\kappa=0,...,2\}$ and the three tape alphabets $\Gamma_{2,A}=\{a_{\kappa
},\kappa=1,2,3\}$, as well as $Q_{u,2}^{\prime}$ and $Q_{u,3}^{\prime}$ ,
which has the same elements of states as $Q_{u,2}$ and $Q_{u,3}$,
respectively. Roughly, the head rewrites $a_{1}$ to $a_{3}$, and $a_{3}$ to
$a_{2}$. When $a_{1}\rightarrow a_{3}$ took place, it goes back to the
$\square$-ed cell, then enters subroutine that simulates the second and the
third stages, but using the states in $Q_{u,2}^{\prime}$ and $Q_{u,3}^{\prime
}$. Upon accepting the input, the head again shifts to the $a_{3}$-ed cell,
and continues the amplification stage.

When the third stage finishes, the head is brought to the $\square$-ed cell,
and
\begin{align*}
(m_{0},q_{3,acc})  &  \rightarrow(m_{1},q_{4,1}),\\
(\ast,\square)  &  \rightarrow(\ast,\square).
\end{align*}
Then it moves as follows:%
\[%
\begin{tabular}
[c]{l|llll|l}
& \multicolumn{4}{|c|}{$m_{0}$} & $m_{1}$\\\cline{1-5}
& $a_{1}$ & $a_{2}$ & $a_{3}$ & $\square$ & \\\hline
$q_{4,0}$ &  & $a_{2},q_{4,0}$ & $a_{2},q_{4,1}$ &  & $\rightarrow$\\
$q_{4,1}$ & $q_{4,2}a_{3}$ &  &  &  & $\rightarrow$\\
$q_{4,2}$ &  & $a_{2},q_{4,2}$ &  & $\square,q_{2,init}^{\prime}$ &
$\leftarrow$\\
$q_{3,acc}^{\prime}$ &  &  &  & $\square,q_{4,0}$ & $\leftarrow$%
\end{tabular}
\ \ \ \
\]

So $q_{4,0}$: right-shifting, leaves $a_{2}$ unchanged. Rewrite $a_{3}$ to
$a_{2}$, while updated to $q_{4,1}$.

\ \ \ $q_{4,1}$: right-shifting, leaves $a_{2}$ unchanged. Rewrite $a_{1}$ to
$a_{3}$, while updated to $q_{4,2}$.

\ \ \ $\ q_{4,2}$: left-shifting, \ leaves $a_{2}$ unchanged. Upon reading
$\square$, goes into the simulation of $M$ using the states in $Q_{u}^{\prime
}$.

When the subroutine finishes with the state $(m_{0},q_{3,acc}^{\prime})$ at
the $\square$-ed site, it changes to $(m_{1},q_{4,0})$: note it is
$(m_{1},q_{4,1})$ at the start of the amplification stage. After the starting
step, the state will not be in $(m_{1},q_{4,1})$ if the head is at the
$\square$-ed cell.

\subsubsection{Evaluation of the average rate of $a_{2}$}

Here we evaluate the time average of the number $N_{2}$ of the $a_{2}$-ed
cells with respect to the distribution $p_{j;\boldsymbol{x}}$ as of (\ref{pj}).

Recall we use the 1D lattice with the size $L$ which is divided into smaller
blocks, and each block is used as a simulator of the RTM $M_{A}$. So below we
denote the length of the tape by $L^{[1]}$ and not by $L$, to avoid
confusions. The number of $M$ cells not marked by $\square$ is denoted by
$L_{m}$. Also, we suppose $j_{0}$ is large compared with $L^{[1]}$:%
\begin{equation}
j_{0}\geq(L^{[1]})^{3}. \label{j-0-2}%
\end{equation}
This is justified by (\ref{j-0}), since typically $L^{[1]}\leq l^{4}$ as will
be demonstrated in Section\thinspace\ref{sec:good-rate-2}.\thinspace

The number $N_{2}(j)$ of $a_{\kappa}$ at the time step $j$ is a complicated
function of the initial configuration, but it is bounded in a certain interval
if (\ref{j-0}) is true and the number $L_{m}$ of $M$-cells takes a `typical'
value indicated by the inequality%
\begin{equation}
\alpha L^{[1]}-(L^{[1]})^{2/3}<L_{m}<\alpha L^{[1]}+(L^{[1]})^{2/3}.
\label{Lm-range}%
\end{equation}

Suppose the first increment of $N_{2}(j)$ occurs at the step
$j_{\mathrm{first}}$. Then since the simulation of $M$ runs twice and the head
changes the direction of the shift only once in the amplification stage,
\begin{equation}
2j_{0}\leq j_{\mathrm{first}}\leq2j_{0}+4L^{[1]}. \label{jf}%
\end{equation}
Similarly, if $\Delta J_{2}(k)$ is the duration between the two increment of
$N_{2}$ when $N_{2}=k$,
\begin{equation}
j_{0}\leq\Delta J_{2}(k)\leq j_{0}+4L^{[1]}. \label{dJ}%
\end{equation}
Observe
\begin{align*}
J_{\boldsymbol{x}}  &  =j_{\mathrm{first}}+\sum_{k=1}^{L^{[1]}-L_{m}-2}\Delta
J_{2}(k),\\
\sum_{j=1}^{J_{\boldsymbol{x}}}N_{2}(j)  &  =\sum_{k=1}^{L^{[1]}-L_{m}%
-2}k\,\Delta J_{2}(k).
\end{align*}
By (\ref{jf}), (\ref{dJ}), and (\ref{Lm-range}),%

\[
(1-\alpha)L^{[1]}j_{0}-(L^{[1]})^{2/3}j_{0}\leq J_{x}\leq(1-\alpha
)L^{[1]}j_{0}+(L^{[1]})^{2/3}j_{0}+4(L^{[1]})^{2},
\]

\begin{align*}
\sum_{j=1}^{J_{\boldsymbol{x}}}N_{2}(j)  &  \geq\frac{j_{0}}{2}((1-\alpha
)L^{[1]}-(L^{[1]})^{2/3}-2)((1-\alpha)L^{[1]}-(L^{[1]})^{2/3}-1),\\
\sum_{j=1}^{J_{\boldsymbol{x}}}N_{2}(j)  &  \leq\frac{j_{0}}{2}((1-\alpha
)L^{[1]}+(L^{[1]})^{2/3}-2)((1-\alpha)L^{[1]}+(L^{[1]})^{2/3}-1)+4(L^{[1]}%
)^{3}.
\end{align*}
Therefore, by (\ref{j-0-2}),%

\begin{equation}
\lim_{L^{[1]}\rightarrow\infty}\frac{1}{J_{\boldsymbol{x}}(L^{[1]}+1)}%
\sum_{j=1}^{J_{\boldsymbol{x}}}N_{2}(j)=\frac{(1-\alpha)^{2}/2}{(1-\alpha
)}=\frac{1}{2}(1-\alpha), \label{mean-N2}%
\end{equation}
and \textit{the convergence is uniform for all the `good' configurations,
since both the upper and lower bound converges to the same value.}

As $N_{1}(j)+N_{2}(j)=L^{[1]}-L_{m}-1$,%
\begin{align}
&  \lim_{L^{[1]}\rightarrow\infty}\frac{1}{J_{\boldsymbol{x}}(L^{[1]}+1)}%
\sum_{j=1}^{J_{\boldsymbol{x}}}N_{1}(j)\nonumber\\
&  =\lim_{L^{[1]}\rightarrow\infty}\frac{L^{[1]}-L_{m}-1}{L^{[1]}+1}%
-\lim_{L^{[1]}\rightarrow\infty}\frac{1}{J_{\boldsymbol{x}}(L^{[1]}+1)}%
\sum_{j=1}^{J_{\boldsymbol{x}}}N_{2}(j)\nonumber\\
&  =1-\alpha-\frac{1}{2}(1-\alpha)=\frac{1}{2}(1-\alpha). \label{mean-N1}%
\end{align}
It is clear that the number $N_{3}$ of the cell with $a_{3}$ is at most $1$,
so
\begin{equation}
\lim_{L^{[1]}\rightarrow\infty}\frac{1}{J_{\boldsymbol{x}}(L^{[1]}+1)}%
\sum_{j=1}^{J_{\boldsymbol{x}}}N_{3}(j)=0. \label{mean-N3}%
\end{equation}
Clearly, the convergence in (\ref{mean-N1}) and (\ref{mean-N3}) are uniform
for all the `good' configurations.

\subsection{The Hamiltonian and time evolution}

A legal initial configuration for IID-case is consisted with possibly more
than single blocks, and each block is in a legal configuration as of the
previous analysis: So its left end site is $e_{0}=(m_{0},q_{1,init})$, and
other sites are either $e_{1}=\left(  \varsigma_{A},a_{1}\right)  $ or
$\left(  b,s_{0}\right)  $. A legal configuration is a configuration obtained
by applying $U$ to a legal initial configuration for finitely many times.
Here, we make sure that no interaction between two blocks are included in $U$.
We already had confirmed that no term nontrivially acts jointly on the
right-shifted finite control site and the $\square$-ed site ($(\mathcal{H}%
^{Q_{m}}\otimes\mathcal{H}^{Q_{u,+}})_{i}\otimes\left(  \mathcal{H}%
^{\Gamma_{1}}\otimes\mathbb{C}\left\vert \square\right\rangle \right)  _{i+1}%
$. Also, we make sure no term nontrivially acting on two consecutive finite
control sites $\mathcal{H}_{i}^{Q}\otimes\mathcal{H}_{i+1}^{Q}$. So the finite
control site at the left end of the site does not interact with its right
neighbor. Also, if there is a pair $\left\vert e_{0}\right\rangle
_{i}\left\vert e_{0}\right\rangle _{i+1}$ in the initial configuration,
$\left\vert e_{0}\right\rangle _{i}$ does not change in time.

Consequently, the Hamiltonian $H$ does not contain interaction between the
blocks. So if $\boldsymbol{x}$ $=\boldsymbol{x}^{[1]}\boldsymbol{x}%
^{[2]}\cdots\boldsymbol{x}^{[K]}$ is a legal configuration, where each
$\boldsymbol{x}^{[k]}$ is the $k$-th block,
\[
H\left\vert \boldsymbol{x}\right\rangle =\sum_{k=1}^{K}H^{[k]}\left\vert
\boldsymbol{x}\right\rangle ,
\]
where $H^{[k]}$ is the sum of terms acting only on $k$-th block. So
\[
e^{-itH}\,\left\vert \boldsymbol{x}^{[1]}\boldsymbol{x}^{[2]}\cdots
\boldsymbol{x}^{[K]}\right\rangle =\otimes_{k=1}^{K}e^{-\iota tH^{[k]}%
}\left\vert \boldsymbol{x}^{[k]}\right\rangle ,
\]
where the formula (\ref{hca-t}) applies to each $e^{-\iota tH^{[k]}}\left\vert
\boldsymbol{x}^{[k]}\right\rangle $. If $L^{[k]}+1$ is the size of the $k$-th
block,
\[
\left\langle \boldsymbol{x}\right\vert e^{itH}B^{(L)}e^{-itH}\,\left\vert
\boldsymbol{x}\right\rangle =\sum_{k=1}^{K}\frac{L^{[k]}+1}{L+1}\left\langle
\boldsymbol{x}^{[k]}\right\vert e^{itH^{[k]}}B^{(L^{[k]})}e^{-itH^{[k]}%
}\,\left\vert \boldsymbol{x}^{[k]}\right\rangle .
\]

In case of the periodic boundary condition, $\boldsymbol{x}^{[1]}$ is the
block containing the $0$-th site.

\subsection{The initial quantum state and good configurations}

\label{sec:good-rate-2}

Denote by $l$ the space necessary to simulate $M_{A}$ corresponding to the
input $v$, that is exponential in $n=\left\vert v\right\vert $. Define%
\begin{equation}
\left\vert \psi\right\rangle :=\sqrt{l^{-2}}\left\vert e_{0}\right\rangle
+\sqrt{1-l^{-2}}(\sqrt{1-\alpha}\left\vert \varsigma\right\rangle \left\vert
a_{1}\right\rangle +\sqrt{\alpha}\left\vert \,\mathrm{input}\,\right\rangle
\left\vert s_{0}\right\rangle ), \label{psi-2}%
\end{equation}
where $\left\vert \,\mathrm{input}\,\right\rangle $ is as of (\ref{input}).
\ Here recall $l$ is an upper bound to the space needed for the simulation of
the RTM $M_{A}$. As $l$ is exponentially large and $L$ is arbitrarily large
irrespective of $n$, we safely suppose that
\begin{equation}
\left(  L+1\right)  ^{1/11}\geq l\geq n^{6}, \label{L>l>n}%
\end{equation}
and (\ref{n>8}) as well.

We say an initial configuration $\boldsymbol{x}$ is `good' if the number of
sites covered by the `bad' blocks is at most%

\begin{equation}
3n^{-1}(L+1), \label{bad-blocks}%
\end{equation}
A block $\boldsymbol{x}^{[k]}$ is `good' iff it satisfies all of the following conditions:

\begin{description}
\item[(GB-1)] The length $L^{[k]}+1$ of \ $\boldsymbol{x}^{[k]}$ is not
shorter than $l$ and not longer than $l^{4}$

\item[(GB-2)] $\boldsymbol{x}^{[k]}$ $\ $satisfies (G-a,b)
\end{description}

\subsubsection{Evaluation of the `bad' rate}

For the sake of the notational simplicity, in this subsections, the definition
of `block' is slightly modified, in such a manner that does not affect the
result of the analysis. First, a block ends with $e_{0}$, rather than starting
from $e_{0}$. \ Second, regardless of the boundary condition, we suppose the
first block starts from the $0$-th site: so the true first block may be larger
than the first `block'.

For an initial configuration $\boldsymbol{x}$ to be a good', it suffices to
satisfy (G-c,d,e) below:

\begin{description}
\item[(G-c)] The lattice consists of approximately $K_{\ast}:=(1-l^{-2}%
)l^{-2}(L+1)$ blocks:
\[
(1-2l^{-2})(L+1)\leq\sum_{k=1}^{K_{\ast}}(L^{[k]}+1)\leq L+1.
\]

\item[(G-d)] Out of the sites covered by the first $K_{\ast}$ blocks, only
small part of them are covered by `bad' blocks:%
\[
\sum_{x^{[k]}:\text{-bad},1\leq k\leq K_{\ast}}(L^{[k]}+1)\leq(2l^{2}%
n^{-1}+3)K_{\ast}%
\]

\end{description}

At most $2l^{-2}(L+1)$ sites are not covered by the first the first $K_{\ast}%
$- blocks. For simplicity, suppose they are covered by `bad' blocks. Also,
recall the first `block' indeed may continue to the left of the first site, so
it may be too large and may be `bad'.

Therefore, the number of sites covered by the `bad' blocks is at most%

\begin{align*}
&  2l^{-2}(L+1)+(2l^{2}n^{-1}+3)K_{\ast}+l^{2}\\
&  =\left\{  2l^{-2}+2n^{-1}(1-l^{-2})+3(1-l^{-2})l^{-2}\right\}
(L+1)+l^{2}\\
&  \leq3n^{-1}(L+1).
\end{align*}
where we used (\ref{L>l>n}) and (\ref{n>8}). So (\ref{bad-blocks}) is satisfied.

Denote by $P_{\mathrm{good}}^{(c)}$ and $P_{\mathrm{good}}^{(d)}$, be the
projections onto the span of the configurations satisfying (G-c) and (G-d),
respectively, and denote by $P_{\mathrm{good}}$ the projection satisfying all
of them. Since they commute with each other,
\begin{align}
\,\mathrm{tr}\,P_{\mathrm{good}}\,\rho^{L}  &  \geq1-\sum_{\kappa
=c,d}(1-\mathrm{tr}\,P_{\mathrm{good}}^{(\kappa)}\,\rho^{L})\nonumber\\
&  \geq1-\frac{4l^{10}}{L}-\frac{l^{10}}{L}\nonumber\\
&  \geq1-\frac{5l^{10}}{L} \label{rho-good-iid-tr}%
\end{align}
as computed in the sequel.

Below, we demonstrate (\ref{rho-good-iid-tr}). Observe only the diagonal
elements of $\rho^{L}$ constitute to $\mathrm{tr}\,P_{\mathrm{good}}%
^{(\kappa)}\,\rho^{L}$, which we regard as a probability distribution. Below,
$\Pr$ is the probability with respect to this distribution.

In the initial configuration, the site is in either in $e_{0}$ or $e_{1}$, and
the probability of the former is $l^{-2}$. A single block is made of the
composition of the sequence of $e_{1}$ followed by a single $e_{0}$. Therefore,%

\begin{align*}
\Pr\{L^{[k]}+1  &  =m\}=l^{-2}(1-l^{-2})^{m-1},\\
\mathrm{E}\,(L^{[k]}+1)  &  =\sum_{m=0}^{\infty}m\Pr\{L^{[k]}=m\}=l^{2},\\
\mathrm{V}[L^{[k]}+1]  &  \leq\mathrm{E\,}[(L^{[k]}+1)^{2}]=2l^{4}%
(1-l^{-2})+l^{2}\\
&  \leq3l^{4}%
\end{align*}

Therefore, by Chebychev's inequality,%

\begin{align}
1-\mathrm{tr}\,P_{\mathrm{good}}^{(c)}\,\rho^{L}  &  =\Pr\left\{  \left\vert
\sum_{k=1}^{K_{\ast}}(L^{[k]}+1)-(1-l^{-2})(L+1)\right\vert \geq K_{\ast
}l^{-2}\right\} \nonumber\\
&  =\Pr\left\{  \left\vert \sum_{k=1}^{K_{\ast}}(L^{[k]}+1)-K_{\ast}%
\mathrm{E}(L^{[k]}+1)\right\vert \geq K_{\ast}l^{-2}\right\} \nonumber\\
&  \leq\frac{l^{4}}{K_{\ast}^{2}}\cdot3l^{4}K_{\ast}=\frac{3l^{8}}{K_{\ast}%
}=\frac{3l^{10}}{1-l^{-2}}\frac{1}{L+1}\nonumber\\
&  \leq\frac{4l^{10}}{L}, \label{p-c-good}%
\end{align}
where the last inequality is true if $l\geq2$.

A block with $l\leq L^{[k]}+1\leq l^{4}$ becomes `bad' with probability not
more than $p_{e}(n,l-1)$ by (\ref{p-good}). So if $W^{k}$ is the length of a
`bad' block,%

\[
W^{k}:=\left\{
\begin{array}
[c]{cc}%
0, & l\leq L^{[k]}+1\leq l^{4}\text{ and satisfies (G-a,b)}\\
L^{[k]}+1, & l\leq L^{[k]}+1\leq l^{4}\text{ and does not satisfy (G-a,b)}\\
L^{[k]}+1, & \text{otherwise,}%
\end{array}
\right.
\]

Here observe $l-1\geq n^{6}-1\geq2n^{3}$ by (\ref{L>l>n}), so by (\ref{L>n})
the probability for a block to satisfy (G-a,b) is bounded from above using
(\ref{p-good}). Therefore,
\begin{align*}
\mathrm{E}\,\left[  W^{k};l\leq L^{[k]}+1\leq l^{4}\right]   &  \leq
2n^{-1}\mathrm{E}\,[L^{[k]}+1;l\leq L^{[k]}+1\leq l^{4}]\\
&  \leq2n^{-1}\mathrm{E}\,[L^{[k]}+1]=2n^{-1}l^{2}.
\end{align*}
Also,
\begin{align*}
\mathrm{E\,}\left[  W^{k};L^{[k]}+1\leq l-1\text{ }\right]   &  \leq
\mathrm{E}\,\left[  L^{[k]}+1;L^{[k]}+1\leq l-1\text{ }\right] \\
&  \leq(l-1)\cdot\Pr\{L^{[k]}+1\leq l-1\text{ }\}\\
&  =(l-1)(1-(1-l^{-2})^{l})\\
&  \leq l(1-(1-l\cdot l^{-2}))=1,
\end{align*}
and%
\begin{align*}
\mathrm{E}\,\left[  W^{k};L^{[k]}+1\geq l^{4}+1\right]   &  =\mathrm{E}%
\,\left[  L^{[k]}+1;L^{[k]}+1\geq l^{4}+1\right] \\
&  =(1-l^{-2})^{l^{4}}(l^{2}+l^{4})\\
&  \leq e^{-l^{2}}(l^{2}+l^{4})\leq1,
\end{align*}
where the last inequality is by $l\geq1$. Summing all of them,
\[
\mathrm{E}\,[W^{k}]\leq2n^{-1}l^{2}+2.
\]

Also,
\[
\mathrm{V}[W^{k}]\leq\mathrm{E\,}[(W^{k})^{2}]\leq\mathrm{E\,}[(L^{[k]}%
+1)^{2}]\leq3l^{4},
\]
and by the Chebychev's inequality,
\begin{align*}
&  \Pr\left\{  \sum_{k=1}^{K_{\ast}}W^{k}\geq(2n^{-1}l^{2}+2)K_{\ast}+K_{\ast
}\right\} \\
&  \leq\Pr\left\{  \sum_{k=1}^{K_{\ast}}W^{k}\geq\mathrm{E}[W^{k}]K_{\ast
}+K_{\ast}\right\} \\
&  \leq\frac{3l^{4}}{K_{\ast}}=\frac{3l^{4}}{(1-l^{-2})l^{-2}(L+1)}\leq
\frac{l^{10}}{L}.
\end{align*}

\subsection{"Dephasing" and approximation}

Here we show that the expectation does not vary with cross terms between two
configurations that differs in a point of the separation of blocks.

\begin{lemma}
\label{lem:dephase-2}Suppose $B$ is an observable such that $B\left\vert
x\right\rangle =0$ unless $x=(\varsigma_{A},a_{\kappa})$. Suppose
$\boldsymbol{x}$ and $\boldsymbol{x}^{\prime}$ are legal initial
configurations. Then
\[
\left\langle \boldsymbol{x}^{\prime}\right\vert e^{tH}B^{(L)}e^{-\iota
tH}\left\vert \boldsymbol{x}\right\rangle \neq0
\]
only if the split into blocks occurs at the same points in $\boldsymbol{x}$
and in $\boldsymbol{x}^{\prime}$.
\end{lemma}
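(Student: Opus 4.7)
My plan is to adapt the dephasing argument of Lemma\thinspace\ref{lem:init-deco} to the block-decoupled dynamics of the IID setting. First I would use the block-decoupling of $H$ established just above the lemma to write
\[
e^{-\iota tH}|\boldsymbol{x}\rangle = \bigotimes_k e^{-\iota tH^{[k]}}|\boldsymbol{x}^{[k]}\rangle \;\in\; \mathrm{span}\Bigl\{\bigotimes_k |j_k;\boldsymbol{x}^{[k]}\rangle\Bigr\},
\]
and similarly for $|\boldsymbol{x}'\rangle$, reducing the lemma to showing $\langle\boldsymbol{j}';\boldsymbol{x}'|B^{(L)}|\boldsymbol{j};\boldsymbol{x}\rangle=0$ for every pair of index tuples $(\boldsymbol{j},\boldsymbol{j}')$ whenever the initial block-boundary sets $I:=\{i:x_i=|e_0\rangle\}$ and $I':=\{i:x'_i=|e_0\rangle\}$ differ. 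Write $\boldsymbol{y}$ (resp.\ $\boldsymbol{y}'$) for the underlying classical configuration of $|\boldsymbol{j};\boldsymbol{x}\rangle$ (resp.\ $|\boldsymbol{j}';\boldsymbol{x}'\rangle$).

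Since $B^{(L)}=\tfrac{1}{L+1}\sum_i B_i$ is a sum of single-site operators and $B$ kills every state outside the three $A$-cells $\{(\varsigma_A,a_\kappa)\}_{\kappa=1,2,3}$, the local reduction of Lemma\thinspace\ref{lem:cofig-deco} forces $y_j=y'_j$ for all $j\neq i$, with $y_i,y'_i\in\{(\varsigma_A,a_\kappa)\}$. In particular, since $\{(\varsigma_A,a_\kappa)\}\subseteq\mathcal{H}^{\Gamma}$, the set of finite-control positions in $\boldsymbol{y}$ coincides with that in $\boldsymbol{y}'$; and because each block carries exactly one finite control throughout the dynamics, this common count equals both $|I|$ and $|I'|$, forcing $|I|=|I'|$.

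Now enumerate $I=\{i_1<\ldots<i_K\}$ and $I'=\{i'_1<\ldots<i'_K\}$ and take the smallest $k$ with $i_k\neq i'_k$; assume $i_k<i'_k$ without loss of generality, so that $i_k$ is the left end of $\boldsymbol{x}$'s $k$-th block but lies strictly inside $\boldsymbol{x}'$'s $(k-1)$-th block. The key invariant is that every reachable $\boldsymbol{y}$ of $\boldsymbol{x}$ satisfies $y_{i_k}\in\mathcal{H}^{Q}\cup\{(\varsigma_A,\square)\}$: the site starts as $|e_0\rangle$, transitions through $(m_1,q_{2,init})$, is overwritten to $(\varsigma_A,\square)$ after the initialization shift, and is never touched by any rule of $M_A$ that writes a non-$\square$ $a$-symbol. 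In $\boldsymbol{y}'$, however, site $i_k$ cannot match any of these values: the initial-stage $Q$-states $|e_0\rangle$ and $(m_1,q_{2,init})$ only occur at sites in $I'\not\ni i_k$ (by the non-cyclicity of $M_A$, the FC never returns to these initial-stage states); the $(\varsigma_A,\square)$ marker is pinned to $I'\cup(I'+1)$, so $y'_{i_k}=(\varsigma_A,\square)$ would force $i_{k-1}=i_k-1$, i.e.\ a length-one block of $\boldsymbol{x}$, which reduces to the same contradiction applied at $i_{k-1}$; and a generic FC state at $i_k$ in $\boldsymbol{y}$ propagates to a mismatch at the cell site $i_k+1$, whose constrained content in $\boldsymbol{y}$ (tied to the bounce that placed $\square$ there) does not match the generic content in $\boldsymbol{y}'$. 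Either $i=i_k$ (in which case $y_{i_k}\notin\{(\varsigma_A,a_\kappa)\}$ kills $B_{i_k}|\boldsymbol{y}\rangle$) or $i\neq i_k$ (in which case $y_{i_k}\neq y'_{i_k}$ violates the agreement requirement), so the matrix element vanishes.

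The main obstacle I expect is the $\square$-pinning claim that $\square$ in any reachable configuration occupies only sites in $I\cup(I+1)$. This is not automatic: a leftward shift of the FC past $\square$ transiently displaces the marker one site to the right, so I must verify stage-by-stage through the rules of $M_A$ (initialization, decoding, simulation of $M$, and especially the two-pass amplification of Section\thinspace\ref{sec:good-rate-2}) that every such left-shift displacement is reverted by an immediately following right-shift bounce. Once that bookkeeping is complete the contradiction at $i_k$ goes through uniformly in $(\boldsymbol{j},\boldsymbol{j}')$, and the lemma follows.
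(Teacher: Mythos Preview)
Your reduction to the block-decoupled span $\mathrm{span}\{\otimes_k|j_k;\boldsymbol{x}^{[k]}\rangle\}$ and the use of $B$'s kernel to force $y_j=y'_j$ whenever $y_j\notin\{(\varsigma_A,a_\kappa)\}$ are exactly the paper's first two moves. Where you diverge is in the endgame: you try to track the \emph{initial} boundary site $i_k$ through the dynamics and produce a mismatch there or at $i_k+1$. This forces you into a stage-by-stage analysis of where $\square$ can sit (your acknowledged ``$\square$-pinning'' obstacle), and your handling of the case where $y_{i_k}$ is a \emph{generic} FC state is incomplete: you assert that this ``propagates to a mismatch at $i_k+1$'', but you have not ruled out, for instance, $i'_k=i_k+1$ (so that $i_k+1\in I'$ and $y'_{i_k+1}$ can legitimately be a $\square$ or an FC state), nor have you closed the recursion in the length-one-block sub-case, which can land you at a step where $y'_{i_{k-1}}=|e_0\rangle=y_{i_{k-1}}$ with no contradiction yet visible.

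The paper avoids all of this by arguing directly on the \emph{evolved} configurations $\boldsymbol{y},\boldsymbol{y}'$ rather than chasing the original positions $i_k$. From the matching of all non-$A$-cell sites one gets at once that the FC positions \emph{and contents} coincide, and likewise all $\square$-marked positions coincide. Then one only needs the invariant that the leftmost site of each block in any reachable configuration is either an FC or a $(\ast,\square)$ cell. The single case that needs care---an FC that is the left end of a block in $\boldsymbol{y}$ but the right end of a block in $\boldsymbol{y}'$---is dispatched by the observation that an FC at the right end of its block is necessarily in a state $(m_0,q')$ with $q'\in Q_{u,+}$ (it just right-shifted and is now stuck), whereas such a state cannot occur at a left end. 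This direction-of-shift argument is the key step you are missing; once you use it, the block structures of $\boldsymbol{y}$ and $\boldsymbol{y}'$ are forced to agree without any stage-by-stage $\square$-tracking, and since the block partition is invariant under the dynamics, the initial partitions of $\boldsymbol{x}$ and $\boldsymbol{x}'$ agree as well.
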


\begin{proof}
Observe%
\[
e^{-itH}\,\left\vert \boldsymbol{x}^{[1]}\boldsymbol{x}^{[2]}\cdots
\boldsymbol{x}^{[K]}\right\rangle \in\,\mathrm{span}\{\otimes_{k=1}%
^{K}\left\vert j^{(k)};\boldsymbol{x}^{[k]}\right\rangle \}.
\]
So we discuss the condition for
\[
\left\{  \otimes_{k=1}^{K^{\prime}}\left\langle j^{\prime(k)};\boldsymbol{x}%
^{\prime\lbrack k]}\right\vert \right\}  B^{(L)}\left\{  \otimes_{k=1}%
^{K}\left\vert j^{(k)};\boldsymbol{x}^{[k]}\right\rangle \right\}  \neq0.
\]
Denote the configuration corresponding to $\otimes_{k=1}^{K}\left\vert
j^{(k)};\boldsymbol{x}^{[k]}\right\rangle $ and $\otimes_{k=1}^{K}\left\vert
j^{\prime(k)};\boldsymbol{x}^{\prime\lbrack k]}\right\rangle $ by
$\boldsymbol{y}$ and $\boldsymbol{y}^{\prime}$, respectively.

First we show $y_{i}^{\prime}=y_{i}$ if $y_{i}\neq(\varsigma_{A},a_{\kappa})$
and $\left\langle \boldsymbol{y}^{\prime}\right\vert B^{(L)}\left\vert
\boldsymbol{y}\right\rangle \neq0$. Suppose $y_{i_{0}}\neq(\varsigma
_{A},a_{\kappa})$ and $y_{i_{0}}^{\prime}\neq y_{i_{0}}$. Since $B_{i}$
($i\neq i_{0}$) acts trivially on $\mathcal{H}_{i_{0}}$, $\left\langle
\boldsymbol{y}^{\prime}\right\vert B_{i}\left\vert \boldsymbol{y}\right\rangle
=0$. Moreover, $B\left\vert y_{i_{0}}\right\rangle =0$ by the hypothesis of
the lemma, so $\left\langle \boldsymbol{y}^{\prime}\right\vert B^{(L)}%
\left\vert \boldsymbol{y}\right\rangle =0$, contradicting the assumption.
Therefore, $y_{i}^{\prime}=y_{i}$ if $y_{i}\neq(\varsigma_{A},a_{\kappa})$.
Exchanging $\boldsymbol{y}$ and $\boldsymbol{y}^{\prime}$, we have that the
position and content of finite control sites are identical. Moreover, the
positions and the content of the sites marked by $\square$ are identical.

Second we show the split of the block occurs at the same points in
configurations. Suppose the $k$-th block in $\boldsymbol{y}$ starts from the
$y_{i_{0}}$.Then $y_{i_{0}}=(\ast,\square)$, or $q\in Q$.

Suppose $y_{i_{0}}=q\in Q$. Then $y_{i_{0}+1}=(\ast,\square)$, and $y_{i_{0}%
}^{\prime}=q\in Q$, $y_{i_{0}+1}=(\ast,\square)$. As $y_{i_{0}}=q\in Q$ is the
left end, $y_{i_{0}^{\prime}}=q\in Q$ cannot be the right end of a block: if
it were the case, $y_{i_{0}^{\prime}}=(m_{0},q^{\prime})$, $q^{\prime}\in
Q_{u,+}$, and $y_{i_{0}}=(m_{0},q^{\prime})$ could not be at the left end.
Therefore, $y_{i_{0}}^{\prime}$ is the left end of a block as well.

Next, suppose $y_{i_{0}}=(\ast,\square)$. Then $y_{i_{0}}^{\prime}=y_{i_{0}%
}=(\ast,\square)$. Unless $y_{i_{0}-1}^{\prime}=q\in Q$, $y_{i_{0}}^{\prime}$
is the left end of a block. Therefore, suppose $y_{i_{0}-1}^{\prime}=q\in Q$
as well. Then $y_{i_{0}-1}=y_{i_{0}-1}^{\prime}=q\in Q$. Since the former is
the former is the right end a block, so is the latter. Therefore, $y_{i_{0}%
}^{\prime}$ is the left end of a block.

As the positions of the split of the blocks are invariant, the split into
blocks occurs at the same points in $\boldsymbol{x}$ and in $\boldsymbol{x}%
^{\prime}$.
\end{proof}

Applying Lemma\thinspace\ref{lem:init-deco} to each block, we obtain:

\begin{lemma}
\label{lem:dephase-3} Let $\boldsymbol{x}$ and $\boldsymbol{x}^{\prime}$ be a
legal initial configurations. Suppose $B$ is an observable such that
$B\left\vert x\right\rangle =0$ unless $x=(\varsigma_{A},a_{\kappa})$. Then if
$\boldsymbol{x\neq x}^{\prime}$,%
\[
\,\left\langle \boldsymbol{x}^{\prime}\right\vert e^{\iota tH}B^{(L)}\text{
}e^{-\iota tH}\left\vert \boldsymbol{x}\right\rangle =0.
\]

\end{lemma}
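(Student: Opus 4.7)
The plan is to combine Lemma~\ref{lem:dephase-2} with a blockwise application of Lemma~\ref{lem:init-deco}. The hypothesis on $B$ (vanishing on all basis states except the $(\varsigma_{A},a_{\kappa})$'s) is strictly stronger than the condition~(\ref{Bx=0}) used in those two lemmas, so both are available.

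First, Lemma~\ref{lem:dephase-2} already shows that the matrix element vanishes unless the splits into blocks occur at the same points in $\boldsymbol{x}$ and $\boldsymbol{x}^{\prime}$. So I may restrict attention to the case $\boldsymbol{x}=\boldsymbol{x}^{[1]}\cdots\boldsymbol{x}^{[K]}$, $\boldsymbol{x}^{\prime}=\boldsymbol{x}^{\prime[1]}\cdots\boldsymbol{x}^{\prime[K]}$, with identical block lengths $L^{[k]}$ on both sides. Because $H$ has no term acting across two blocks, $e^{-\iota tH}=\bigotimes_{k=1}^{K}e^{-\iota tH^{[k]}}$. Expanding $B^{(L)}=\frac{1}{L+1}\sum_{i=0}^{L}B_{i}$ and using that each $B_{i}$ is supported on the single block $k(i)$ containing site $i$, the matrix element factorises as
\[
\langle\boldsymbol{x}^{\prime}|e^{\iota tH}B_{i}e^{-\iota tH}|\boldsymbol{x}\rangle=\Big(\prod_{k\neq k(i)}\langle\boldsymbol{x}^{\prime[k]}|\boldsymbol{x}^{[k]}\rangle\Big)\,\langle\boldsymbol{x}^{\prime[k(i)]}|e^{\iota tH^{[k(i)]}}B_{i}e^{-\iota tH^{[k(i)]}}|\boldsymbol{x}^{[k(i)]}\rangle.
\]

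Since $\boldsymbol{x}\neq\boldsymbol{x}^{\prime}$, there is some $k_{0}$ with $\boldsymbol{x}^{[k_{0}]}\neq\boldsymbol{x}^{\prime[k_{0}]}$. For each $i$, either $k(i)\neq k_{0}$, in which case the product contains the factor $\langle\boldsymbol{x}^{\prime[k_{0}]}|\boldsymbol{x}^{[k_{0}]}\rangle=0$ by orthogonality of distinct classical configurations; or $k(i)=k_{0}$, in which case Lemma~\ref{lem:init-deco} applied to the single block $k_{0}$ (with $B_{i}$ in the role of $B$, inheriting (\ref{Bx=0}) from the stronger hypothesis of the present lemma) forces the blockwise matrix element to vanish. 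Summing over $i$ yields the claim.

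The only real obstacle is essentially cosmetic: Lemma~\ref{lem:init-deco} is stated for the averaged observable $B^{(L)}$ rather than a single $B_{i}$. Its proof, however, argues term by term that $\langle j^{\prime};\boldsymbol{x}^{\prime}|B_{i}|j;\boldsymbol{x}\rangle=0$, so the conclusion holds for each $B_{i}$ separately and transfers verbatim to the block-level observable used above. One should also note that each block $\boldsymbol{x}^{[k]}$ is by construction a legal initial configuration of the single-block system to which the first main lemma applies, so no additional hypothesis is needed to invoke Lemma~\ref{lem:init-deco} within a block.
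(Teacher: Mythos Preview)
Your proof is correct and follows exactly the approach the paper indicates (``Applying Lemma\,\ref{lem:init-deco} to each block''): first invoke Lemma\,\ref{lem:dephase-2} to reduce to identical block partitions, then factorise across blocks and use orthogonality together with Lemma\,\ref{lem:init-deco} on the one block where the configurations differ. Your remark that Lemma\,\ref{lem:init-deco} (via Lemma\,\ref{lem:cofig-deco}) in fact gives site-by-site vanishing is correct, though you could equally well group all $i$ with $k(i)=k_{0}$ and apply the averaged version directly, since the prefactor $\prod_{k\neq k_{0}}\langle\boldsymbol{x}^{\prime[k]}|\boldsymbol{x}^{[k]}\rangle$ is common to all such terms.
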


If $\boldsymbol{x}$ is `good', there are $K$ ($\leq K_{\ast}$) of `good'
blocks that covers not less than $(L+1)-3n^{-1}(L+1)$ sites. So there are
$(1-\alpha-L^{-1/3})\{(L+1)-3n^{-1}(L+1)-K_{\ast}\}$ of $A$-cells at least.
But $K_{\ast}$ of them may be marked by $\square$. \ Therefore,%

\begin{align*}
\mathrm{tr}\,P_{E}\overline{\rho}_{\boldsymbol{x}}(t,L)  &  \geq\frac{1}%
{L+1}\{(1-\alpha-L^{-1/3})\{(L+1)-3n^{-1}(L+1)-K_{\ast}\}-K_{\ast}\}\\
&  \geq1-\alpha-L^{-1/3}-3n^{-1}-2l^{-2}.
\end{align*}
Therefore, in analogy to (\ref{error-rho-e-1-1}) and (\ref{error-rho-e-1-2}),
we obtain
\begin{align*}
\left\Vert \overline{\rho}(t,L)-\rho_{\ast}\right\Vert _{1}  &  \leq
2\sqrt{\alpha+L^{-1/3}+3n^{-1}+2l^{-2}}+\frac{5l^{10}}{L}+\max_{\boldsymbol{x}%
\text{:`good'}}\Vert P_{E}\,\overline{\rho}_{\boldsymbol{x}}(t,L)P_{E}%
-\rho_{\ast}\Vert_{1},\\
\left\Vert \lim_{T\rightarrow\infty}\frac{1}{T}\int_{t}^{T}dt\overline{\rho
}(t,L)-\rho_{\ast}\right\Vert _{1}  &  \leq2\sqrt{\alpha+L^{-1/3}%
+3n^{-1}+2l^{-2}}+\frac{5l^{10}}{L}\\
&  +\max_{\boldsymbol{x}\text{:`good'}}\Vert\lim_{T\rightarrow\infty}\frac
{1}{T}\int_{t}^{T}dt\,P_{E}\,\overline{\rho}_{\boldsymbol{x}}(t,L)P_{E}%
-\rho_{\ast}\Vert_{1}.
\end{align*}

Since
\[
\left\langle e_{\kappa^{\prime}}\right\vert \overline{\rho}_{\boldsymbol{x}%
}(t,L)\left\vert e_{\kappa}\right\rangle =\sum_{k=1}^{K}\frac{L^{[k]}+1}%
{L+1}\,\left\langle e_{\kappa^{\prime}}\right\vert \overline{\rho
}_{\boldsymbol{x}^{[k]}}(t,L^{[k]})\left\vert e_{\kappa}\right\rangle ,
\]
if $\boldsymbol{x=x}^{[1]}\boldsymbol{x}^{[2]}\cdots\boldsymbol{x}^{[K]}$ is a
`good' initial configuration,
\begin{align*}
\left\Vert P_{E}\overline{\rho}_{\boldsymbol{x}}(t,L)P_{E}-\rho_{\ast
}\right\Vert _{1}  &  =\sum_{k:1\leq k\leq K_{\ast},\boldsymbol{x}%
^{[k]}\text{:`good'}}\frac{L^{[k]}+1}{L+1}\,\left\Vert P_{E}\overline{\rho
}_{\boldsymbol{x}^{[k]}}(t,L^{\boldsymbol{x}^{[k]}})P_{E}-\rho_{\ast
}\right\Vert _{1}\\
&  +2\sum_{\boldsymbol{x}^{[k]}\text{:`bad'}}\frac{L^{[k]}+1}{L+1}\\
&  \leq\max_{\boldsymbol{x}^{[k]}\text{:`good'}}\,\left\Vert P_{E}%
\overline{\rho}_{\boldsymbol{x}^{[k]}}(t,L^{\boldsymbol{x}^{[k]}})P_{E}%
-\rho_{\ast}\right\Vert _{1}+3n^{-1},
\end{align*}
where we used (\ref{bad-blocks}). Therefore,
\begin{align}
\left\Vert \overline{\rho}(t,L)-\rho_{\ast}\right\Vert _{1}  &  \leq
2\sqrt{\alpha+L^{-1/3}+3n^{-1}+2l^{-2}}\nonumber\\
&  +\frac{5l^{10}}{L}+3n^{-1}+\max_{\boldsymbol{x}^{[k]}\text{:`good'}%
}\,\left\Vert P_{E}\overline{\rho}_{\boldsymbol{x}^{[k]}}(t,L^{\boldsymbol{x}%
^{[k]}})P_{E}-\rho_{\ast}\right\Vert _{1}. \label{error-rho-e-2-1}%
\end{align}

Analogously,
\begin{align}
\left\Vert \lim_{T\rightarrow\infty}\frac{1}{T}\int_{t}^{T}dt\overline{\rho
}(t,L)-\rho_{\ast}\right\Vert _{1}  &  \leq2\sqrt{\alpha+L^{-1/3}%
+3n^{-1}+2l^{-2}}+3n^{-1}+\frac{5l^{10}}{L}\nonumber\\
&  +\max_{\boldsymbol{x}^{[k]}\text{:`good'}}\,\Vert\lim_{T\rightarrow\infty
}\frac{1}{T}\int_{t}^{T}dt\,P_{E}\overline{\rho}_{\boldsymbol{x}^{[k]}%
}(t,L^{\boldsymbol{x}^{[k]}})P_{E}-\rho_{\ast}\Vert_{1}.
\label{error-rho-e-2-2}%
\end{align}
Here we can compute $\lim_{T\rightarrow\infty}\frac{1}{T}\int_{t}^{T}%
dt\,P_{E}\overline{\rho}_{\boldsymbol{x}^{[k]}}(t,L^{\boldsymbol{x}^{[k]}%
})P_{E}$ by (\ref{deco-step-2}).

\subsection{Proof of the second main lemma}

The proof of the second main lemma is almost parallel with the first one. Let
$M$ be an RTM which accepts an $\mathsf{EXPSPACE}$-complete problem such as
the problem of recognizing whether two regular expressions represents the same
regular language or not, where the expression is limited to union,
concatenation, * operation (zero or more copies of an expression), and
exponentiation (concatenation of an expression with itself $k$ times)
\cite{Sipser}. We use $n_{0}$ defined by (\ref{n>8}).

Let $S(n)$ be a polynomial time computable function of $n$ which is an
exponential of a polynomial of $n$ and is an upper bound to the space used by
the TM, and define
\begin{equation}
l(n):=\frac{1}{\alpha}\{\max\{S(n)+1,\,n^{2}\}+1\}^{3}. \label{l=S}%
\end{equation}
This satisfies (\ref{L>l>n}). Moreover, any `good' block has enough $M$-cells
to emulate the RTM $M$:%
\begin{align*}
(\alpha-l^{-1/3})l  &  \geq(\alpha-l^{--1/3})l^{1/3}\\
&  \geq\alpha l^{1/3}-1\\
&  \geq\max\{S(n)+1,\,n^{2}\}\geq S(n).
\end{align*}

If either the input length $n=|v|$ is too small or the input $v$ is rejected
by $M$, all the `good' blocks never flip the $A$-cell sites. If
$\boldsymbol{x}^{[k]}$ is a `good' block, by (\ref{rho-e1}) and the condition
(GB--1) on a 'good' block,%

\begin{align*}
\left\Vert P_{E}\overline{\rho}_{\boldsymbol{x}^{[k]}}(t,L^{\boldsymbol{x}%
^{[k]}})P_{E}-\left\vert e_{1}\right\rangle \left\langle e_{1}\right\vert
\right\Vert _{1}  &  \leq\alpha+(l-1)^{-1/3}+\frac{2}{l-1}\\
&  \leq\alpha+2\,l^{-1/3}.
\end{align*}

Therefore, by (\ref{error-rho-e-2-1}),%
\begin{align*}
\left\Vert \overline{\rho}(t,L)-\left\vert e_{1}\right\rangle \left\langle
e_{1}\right\vert \right\Vert _{1}  &  \leq2\sqrt{\alpha+L^{-1/3}%
+3n^{-1}+2l^{-2}}+\frac{5l^{10}}{L}+3n^{-1}+\alpha+2\,l^{-1/3}\\
&  \leq4\sqrt{\alpha},
\end{align*}
where the inequality in the second line is by (\ref{L>l>n}), (\ref{n>8}) and
(\ref{alpha}). \ Therefore, we only have to define define $L_{0}:=l^{11}$,
which is computable in polynomial time.

Suppose $M$ accept the input $v$. If $\boldsymbol{x}^{[k]}$ is a `good' block,
by the condition (GB--1) and (\ref{j-0}), (\ref{j-0-2}) is justified.
(\ref{Lm-range}) is justified by (GB-2). Therefore, by (\ref{mean-N2}),
(\ref{mean-N1}) and (\ref{mean-N3}), (recall the convergence is uniform in
them), \ for any $\varepsilon>0$, there is an $l_{0}$ such that for any $l\geq
l_{0}$ (recall $L^{[k]}\geq l-1$ if $\boldsymbol{x}^{[k]}$ is `good'),%

\begin{align*}
\left\vert \frac{1}{J_{\boldsymbol{x}^{[k]}}(L^{[x]}+1)}\sum_{j=1}%
^{J_{\boldsymbol{x}^{[k]}}}N_{\kappa}(j;\boldsymbol{x}^{[k]})-\frac{1}%
{2}\right\vert  &  \leq\frac{1}{2}\alpha+\varepsilon,\,(\kappa=1,2),\\
\frac{1}{J_{\boldsymbol{x}^{[k]}}(L^{[x]}+1)}\sum_{j=1}^{J_{\boldsymbol{x}%
^{[k]}}}N_{3}(j;\boldsymbol{x}^{[k]})  &  \leq\varepsilon,\,(\kappa=1,2).
\end{align*}
So by (\ref{deco-step-2}),
\[
\,\max_{\boldsymbol{x}^{[k]}\text{:`good'}}|\lim_{T\rightarrow\infty}\frac
{1}{T}\int_{t}^{T}dt\,\left\langle e_{\kappa}\right\vert \overline{\rho
}_{\boldsymbol{x}^{[k]}}(t,L^{\boldsymbol{x}^{[k]}})\left\vert e_{\kappa
^{\prime}}\right\rangle -\frac{1}{2}|\leq\left\{
\begin{array}
[c]{cc}%
\frac{1}{2}\alpha+\varepsilon, & \kappa=\kappa^{\prime}=1\,\text{or }2,\\
\varepsilon & \text{otherwise}%
\end{array}
\right.  ,
\]
and
\begin{align*}
\,\max_{\boldsymbol{x}^{[k]}\text{:`good'}}\,\Vert\lim_{T\rightarrow\infty
}\frac{1}{T}\int_{t}^{T}dt\,P_{E}\overline{\rho}_{\boldsymbol{x}^{[k]}%
}(t,L^{\boldsymbol{x}^{[k]}})P_{E}-\frac{1}{2}(\left\vert e_{1}\right\rangle
\left\langle e_{1}\right\vert +\left\vert e_{2}\right\rangle \left\langle
e_{2}\right\vert )\Vert_{1}  &  \leq2\cdot(\frac{1}{2}\alpha+\varepsilon
)+7\varepsilon\\
&  =\alpha+9\varepsilon.
\end{align*}
By (\ref{error-rho-e-2-2}).%
\begin{align}
\lim_{L\rightarrow\infty}\left\Vert \lim_{T\rightarrow\infty}\frac{1}{T}%
\int_{t}^{T}dt\overline{\rho}(t,L)-\frac{1}{2}(\left\vert e_{1}\right\rangle
\left\langle e_{1}\right\vert +\left\vert e_{2}\right\rangle \left\langle
e_{2}\right\vert )\right\Vert _{1}  &  \leq2\sqrt{\alpha+3n^{-1}+2l^{-2}%
}+3n^{-1}+\alpha+9\varepsilon\nonumber\\
&  \leq4\sqrt{\alpha}+9\varepsilon,
\end{align}
where the second inequality is by (\ref{n>8}), (\ref{L>l>n}), and
(\ref{alpha}). As $\varepsilon$ is arbitrary, the proof completes.

\section{More on the initial state (\ref{rho-L}) and (\ref{rho-L-iid})}

So far we had supposed the function $f_{\psi}$ can be computed with the error
at most $\varepsilon$ with time in polynomial of $\left\vert v\right\vert $
and $\log(1/\varepsilon)$. Here we relax the condition, and suppose it is
simply computable up to the arbitrarily specified accuracy by a total
recursive function of $|v|$ and $\left\lceil \log(1/\varepsilon)\right\rceil
$, or a function computed by a TM that halts on any input. Clearly, this
variant of \textsf{SAS} and \textsf{SAH} are \textsf{RE}-complete. So let us
consider the variant of \textsf{SAS-iid} and \textsf{SAH-iid}. Because these
two are essentially the same, below we only discuss the former, and denote it
by \textsf{SAS-iid'}.

Denote by $\mathsf{R}$ the set of decidable problems, or equivalently,
decision problems solved by a TM that halts on any input. We argue
\[
\mathsf{R}\subset\{\text{\textsf{SAS-iiid'}}(d,H,f_{\psi},\frac{1}{2},\frac
{1}{4});d\in\mathbb{N},H,f_{\psi}\},
\]
where $H$ and $f_{\psi}$ runs over all the shift-invariant Hamiltonians with
nearest neighbor interactions and computable functions, respectively.

To this end, we show any $P\in$ $\mathsf{R}$ equals \textsf{SAS-iiid'}%
$(d,H,f_{\psi},\frac{1}{2},\frac{1}{4})$ for some $d$, $H$, and $f_{\psi}$.

Suppose a decision problem $P$ is an instance of $\mathsf{R}$. Then there is
an RTM $M$ that solves $P$ using the space $S(n)$, where $n:=|v|$. Clearly,
the function $S(\cdot)$ can be computed by a TM $M^{\prime}$ that halts on any
$n$: \ Given $n$, it generates a bit string $v$ with $|v|\leq n$, simulate $M$
on it, count the space used by the computation, and take maximum of them for
all $v$ with $|v|\leq n$. So $f_{\psi}$ defined by (\ref{psi-2}) and
(\ref{l=S}) is a computable function. So by the proof of the second main
lemma, the output of the TM $M$ equals the output of the corresponding
\textsf{SAS-iiid'}$(d,H,f_{\psi},\frac{1}{2},\frac{1}{4})$.

It is not clear whether \textsf{SAS-iid} and \textsf{SAH-iid} are decidable
for some set of the parameters. If we use the composition used in the proof of
the second main lemma, it can emulate a TM that solves an instance of
$\mathsf{R}$. However, there can be some dynamics whose behavior in
$L\rightarrow\infty$ cannot be computed by an TM, even if the initial state is
as simple as (\ref{rho-L-iid}).

\section{On finite-size lattices and computational complexity}

\subsection{Problem and settings}

Hereafter, the lattice size $L$ is finite and given as an input. The length of
the input $L$ should be defined to discuss computational complexity: so be the
number of bits in binary expansion, or $L$ itself. First, we consider the
former setting. In this setting, the finite lattice version of $\mathsf{SAH}$
is clearly $\mathsf{EXPSPACE}$-hard by the second main lemma.

We show they are in fact $\mathsf{EXPSPACE}$-complete. Our strategy is to find
an algorithm to compute either $\lim_{T\rightarrow\infty}\frac{1}{T}\int
_{t}^{T}dt\overline{\rho}(t,L)$ or $\overline{\rho}(t,L)$ using exponential
space. The accuracy achieved by finite resource is finite: therefore, a the
parameter $\varepsilon_{1}$, that is linked to the accuracy, should be a
function of input length.

So we formulate the problem as follows:

\begin{description}
\item[{$\mathbf{[[}\mathsf{SAHF}(d,f_{H},\gamma,\eta,\varepsilon
_{1})]\mathbf{]}$}] 

\item[Fixed:] $d=\dim\mathcal{H}$, $\{\left\vert e_{\kappa}\right\rangle
;\left\langle e_{\kappa^{\prime}}\right.  \left\vert e_{\kappa}\right\rangle
=\delta_{\kappa,\kappa^{\prime}}\}_{\kappa=0,1,2}$, $\left\vert \psi
\right\rangle :=\left\vert e_{1}\right\rangle $, real numbers $\varepsilon
_{2}$ and $\eta$ with $0<2\varepsilon_{1}<\eta<1$. The function $f_{H}$ of $v$
to the 1- and 2- body term of the Hamiltonian $H$ such that: $\overline{\rho
}(t,L)$ satisfies either (\ref{sa-1}) or (\ref{sa-2}), and%
\begin{equation}
\left\vert \lambda-\lambda^{\prime}\right\vert \geq2^{-L^{\gamma}}.
\label{sp-gap}%
\end{equation}
Here $\lambda$ and $\lambda^{\prime}$ are distinct eigenvalues of $PHP$, where
$P$ is the projection onto the smallest invariant subspace of $H$ containing
the initial state vector $\left\vert \psi\right\rangle ^{\otimes L+1}$.
\ Also, $\gamma\in\mathbb{N}$. Moreover, it is computable with the error at
most $\varepsilon$ using time in polynomial of the input length $n:=|\nu
|+\lceil\log L\rceil$.

\item[Input:] A natural number $L$ and a bit string $v$.

\item[Question:] (\ref{sa-1}) is true or not.
\end{description}

\begin{theorem}
\label{thm:expspace}Suppose $d:=\dim$ $\mathcal{H}$ is fixed and larger than a
certain threshold $d_{0}$. \ For any $\{\left\vert e_{\kappa}\right\rangle
\}_{\kappa=0,1,2}$ $\mathsf{SAHF}(d,f_{H},\frac{1}{2},\varepsilon_{1})$ with
$\varepsilon_{1}\in(0,1/4)$ is $\mathsf{EXPSPACE}$-complete.
\end{theorem}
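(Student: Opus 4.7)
I split the proof into the two standard halves.

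\emph{Hardness.} This is nearly immediate from Lemma\,\ref{lem:main-2}. Take an RTM $M$ that solves an \textsf{EXPSPACE}-complete language in space $S(n) = \exp(\mathrm{poly}(n))$; the lemma yields $f_\psi$ and $H$ such that, for $L \geq L_0(n) := l(n)^{11}$, the dynamics realises (\ref{sa-1}) iff $M$ accepts $v$. The conjugation trick $H := V_v^{\dagger} H' V_v$ from the proof of Theorem\,\ref{thm:SA}(ii) then turns this into an \textsf{SAH}-style instance with fixed $|\psi\rangle = |e_1\rangle$ and an $f_H$ computable in polynomial time. The spectral-gap lower bound derived in Section\,\ref{subsec:dynamics}, namely $\delta \geq 8/(d^{L+1}+1)^2 \geq 2^{-O(L)}$, implies (\ref{sp-gap}) for any $\gamma \geq 1$ once $L$ is large enough. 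The full reduction is $w \mapsto (L_0(|w|),\, w)$, of output length $\mathrm{poly}(|w|)$ since $L_0(|w|)$ is written in binary.

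\emph{Containment.} I describe an algorithm that, on input $(L, v)$, approximates each of the $d^2$ entries of the $d\times d$ single-site long-term average $\bar\rho_\infty(L) := \lim_{T\to\infty}\tfrac{1}{T}\int_0^T \bar\rho(t,L)\,dt$ to precision $\varepsilon_1/(10 d^2)$, then compares the result in trace norm to the two candidate states $|e_1\rangle\langle e_1|$ and $\tfrac12(|e_1\rangle\langle e_1|+|e_2\rangle\langle e_2|)$. Expanding in the eigenbasis of $H$ shows
\[
\left|\tfrac{1}{T}\!\int_0^T\!\mathrm{tr}(\rho^L(t)B)\,dt - \mathrm{tr}(\rho^L_\infty B)\right| \;\leq\; \frac{2\sqrt{D}\,\Vert B\Vert}{\delta\,T},
\]
where $D := d^{L+1}$ and $\rho^L_\infty := \sum_\lambda P_\lambda \rho^L P_\lambda$. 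With $\delta \geq 2^{-L^\gamma}$ from (\ref{sp-gap}), the choice $T = \Theta(d^4 \sqrt{D}/(\delta\varepsilon_1))$, of bit-length $O(L^\gamma + L\log d) = \exp(\mathrm{poly}(n))$, suffices. I then discretize $[0,T]$ into $K = \Theta(T\Vert H\Vert)$ sample times as in (\ref{dt}), and for each sample $t_i$, each site $j$ and each pair $(a,b)$ evaluate the matrix element
\[
\langle e_b|_j\otimes I \;\cdot\; e^{-it_i H}\,\rho^L\, e^{i t_i H} \;\cdot\; |e_a\rangle_j\otimes I
\]
by Taylor-truncating $e^{\pm i t_i H}$ at order $\nu = O(t_i\Vert H\Vert + L^\gamma + \log(1/\varepsilon_1))$ and computing every $\langle \boldsymbol{y}|H^k|\boldsymbol{y}'\rangle$ by depth-first recursion over the $k-1$ intermediate classical configurations (each stored in $O(L\log d)$ bits). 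Because $H$ is a sum of local terms, each one-step transition $\langle \boldsymbol{c}|H|\boldsymbol{c}'\rangle$ is polynomial-time computable in $L$. At each level of the recursion only the current intermediate index and a running accumulator of $\exp(\mathrm{poly}(n))$ bits are stored; the total working space is $O(\nu\cdot L\log d) = \exp(\mathrm{poly}(n))$, i.e., \textsf{EXPSPACE}. A final $d\times d$ trace-norm comparison is immediate.

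\emph{Main obstacle.} The whole analysis stands or falls on never materializing the $D\times D$ operators, because $D$ is doubly exponential in $n$. The Feynman-path-integral recursion is the standard way to keep the space singly exponential in $n$, but it requires careful propagation of rounding errors: a product of $\nu\cdot K = \exp(\mathrm{poly}(n))$ complex numbers, each stored in $\exp(\mathrm{poly}(n))$ bits, must still yield the answer to constant precision, which is feasible but not automatic. The polynomial form $L^\gamma$ of (\ref{sp-gap}) is exactly what allows $T$, $K$ and $\nu$ to remain singly exponential rather than doubly exponential in $n$; dropping it to a merely total-recursive gap bound would destroy the \textsf{EXPSPACE} upper bound.
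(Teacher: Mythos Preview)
Your hardness half is essentially the paper's argument and is fine: the second main lemma plus the $V_v$-conjugation and the explicit $8/(d^{L+1}+1)^2$ gap bound give the reduction exactly as the paper sketches it.

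The containment half has a genuine gap. You correctly observe that the bit-length of $T$ is $O(L^\gamma+L\log d)=\exp(\mathrm{poly}(n))$, but the Taylor truncation order $\nu$ is the \emph{value} $\Theta(T\Vert H\Vert)$, not its bit-length. With $T=\Theta(\sqrt{D}/\delta)$, $D=d^{L+1}$, $\delta\geq 2^{-L^\gamma}$, you get $\nu=2^{\Theta(L^\gamma)}$, which is doubly exponential in $n$ once $L=2^{\Theta(n)}$. Your depth-first path-integral recursion keeps a stack of $\nu$ frames of $O(L\log d)$ bits each, so the working space is $\nu\cdot O(L\log d)=2^{\Theta(L^\gamma)}$, i.e.\ doubly exponential in $n$, not \textsf{EXPSPACE}. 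The claim ``$O(\nu\cdot L\log d)=\exp(\mathrm{poly}(n))$'' is therefore false.

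The paper avoids this by never unrolling the Taylor series as a path integral. Instead it observes that (i) $H^k$ for $k\leq 2N^2$ can be obtained by $O(\log N)=O(L^\gamma)$ matrix squarings, (ii) each squaring of a $d^{L+1}\times d^{L+1}$ matrix is an $\mathsf{NC}$ operation, i.e.\ a circuit of depth $\mathrm{poly}(L)$ and size $2^{\mathrm{poly}(L)}$, and (iii) such circuits can be evaluated in space $\mathrm{poly}(L)=\exp(\mathrm{poly}(n))$ by the standard depth-times-log-size simulation. The same $\mathsf{NC}$ reasoning covers the remaining matrix additions, the reduced-state computation, and the trace-norm comparison. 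This repeated-squaring/shallow-circuit step is exactly the idea your plan is missing; once you replace the linear-depth path expansion by a log-depth matrix-power circuit, the rest of your outline (time discretisation as in (\ref{dt}), cut-off controlled by (\ref{sp-gap}), comparison of the averaged state to the two candidates) goes through and matches the paper.
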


The proof that the problem is $\mathsf{EXPSPACE}$-hard is almost analogous to
it of Theorem\thinspace\ref{thm:SA}, so we only sketch this part of the proof:
The Hamiltonian constructed in showing the main lemmas has the energy gap
$O(1/J_{\boldsymbol{x}})^{2}$, and $J_{\boldsymbol{x}}=O(\exp(p(l)))\leq
O(\exp(p(L)))$ at most ( $p$ is a polynomial), since the TM using space $p(l)$
runs for at most $\exp(c\cdot p(l))$ steps ($c$ is a constant). Therefore, the
spectrum of the Hamiltonian $H=f_{H}(v)$ satisfies the condition
(\ref{sp-gap}). So we obtain an analogue of the second main lemma, and
$\mathsf{EXPSPACE}$-hardness follows from it.

From the next subsection, we show the problem is in $\mathsf{EXPSPACE}$ by
computing $\overline{\rho}(t,L)$ using exponential space.

\subsection{Computational complexity of linear algebraic operations}

\ The elementary arithmetic and linear algebraic operations can be done by
shallow circuits, and the work space necessary for a TM to simulate a circuit
is related to the depth and the size of the circuit.

Denote by $\mathsf{NC}(s)$ the class of problems that can be solved by space
$O(s)$-\thinspace uniform boolean circuits having size $2^{O(s)}$ and depth
$s^{O(1)}$, where $s(n)$ is any function with $s(n)\geq\log n$. It is known
that elementary arithmetic of $s$-digit numbers, matrix multiplication,
addition, and computation of $\left\Vert A\right\Vert _{1}=\,\mathrm{tr}%
\,\sqrt{A^{\dagger}A}$ of $s\times s$ matrices are all contained in
$\mathsf{NC}(p(s))$, where $p(s)$ is a polynomial function of $s$. So the
composition of these operations for $s^{O(1)}$ times is also contained in
$\mathsf{NC}(2^{s})$. Moreover, it is known that the class $\mathsf{NC}%
(2^{s})$ is contained in $\mathsf{DSPACE}(s^{O(1)})$, which is the class of
problems solved by a deterministic Turing machine with space $s^{O(1)}$.

Therefore, \ linear algebraic operations of $d^{L+1}\times d^{L+1}$- matrices
with the error at most $O(2^{-L^{\gamma}})$ (accuracy up to $O(L^{\gamma})$
digits) are contained in $\mathsf{NC}(p(2^{L}))$ for some polynomial function
$p$, so in $\mathsf{DSPACE}(p^{\prime}(L))$ for some polynomial function
$p^{\prime}$. Since $L$ is an exponential function of the input length $n$,
$\mathsf{DSPACE}(p^{\prime}(L))$\ is contained in $\mathsf{EXPSPACE}$.

\subsection{Proof of Theorem\thinspace\ref{thm:expspace}}

The following argument is more or less similar to it of Sec.\thinspace
\ref{sec:re-complete}: we check the condition (\ref{check-cond-approx})
against the alternative (\ref{sa-2}) at all $K$ with $K\leq\left\lceil
T_{0}/(t_{i}-t_{i-1})\right\rceil $, where $t_{i}$ is as of (\ref{dt}), \
\begin{align*}
T_{0}  &  :=2^{2(L+1)+2L^{\gamma}+1},\\
1  &  \leq i\leq\left\lceil \frac{4T_{0}\left\Vert H\right\Vert }%
{\eta-\varepsilon_{1}}\right\rceil =O(L\,2^{2L}2^{2L^{\gamma}}).
\end{align*}
This discretization of the time is justified by the argument in Sec.\thinspace
\ref{sec:re-complete}. The cut-off $T\leq T_{0}$ is justified by%

\begin{align*}
\left\Vert \lim_{T\rightarrow\infty}\frac{1}{T}\int_{0}^{T}dt\,\overline{\rho
}(t,L)-\frac{1}{T_{0}}\int_{0}^{T_{0}}dt\,\overline{\rho}(t,L)\right\Vert
_{1}  &  \leq2^{2(L+1)}\frac{1}{T_{0}}\left\vert \int_{0}^{T_{0}}%
e^{\iota2^{-L^{\gamma}}t}dt\right\vert \\
&  =\frac{2^{2(L+1)}2^{L^{\gamma}}}{T_{0}}\left\vert e^{-\iota2^{-\gamma}%
T_{0}}-1\right\vert \leq2^{-L^{\gamma}},
\end{align*}
where we had used that the energy gap is at least $2^{-L^{\gamma}}$.

Given the out come of each verification, the problem is computed by a
additional circuit having size $p(T_{0}\left\Vert H\right\Vert )$ and depth
$p^{\prime}(\log T_{0}\left\Vert H\right\Vert )$ circuit ($p$ and $p^{\prime}$
are polynomials.).

To check the condition (\ref{check-cond-approx}) against the alternative
(\ref{sa-2}), it suffices to compute $\left\Vert \overline{\rho}%
(t_{i},L)-\left\vert e_{1}\right\rangle \left\langle e_{1}\right\vert
\right\Vert _{1}$ with the error at most $\frac{3}{16}(\eta-\varepsilon_{1})$.
To this end, we approximate $\rho^{L}(t_{i})$ by
\[
\rho_{\mathrm{ap}\,}^{L}:=\sum_{k=1}^{2N^{2}}\frac{(-\iota tH_{\mathrm{ap}%
\,})^{k}}{k!}\rho^{L}\sum_{k=1}^{2N^{2}}\frac{(\iota tH_{\mathrm{ap}\,})^{k}%
}{k!},
\]
where $H_{\mathrm{ap}\,}$ is a numerical approximation of $H$ with the error
at most
\begin{align*}
\left\Vert H-H_{\mathrm{ap}\,}\right\Vert  &  \leq\frac{\eta-\varepsilon_{1}%
}{16T_{0}},\\
H_{\mathrm{ap}\,} &  :=\sum(H_{\mathrm{ap}\,,i}+H_{\mathrm{ap}\,,i,i+1})\\
&  \left\Vert H_{\mathrm{ap}\,,i}-H_{i}\right\Vert +\left\Vert H_{\mathrm{ap}%
\,,i,i+1}-H_{i,i+1}\right\Vert \leq\frac{\eta-\varepsilon_{1}}{16T_{0}(L+1)},
\end{align*}
and%
\[
N:=\left\lceil T_{0}\left\Vert H_{\mathrm{ap}\,}\right\Vert \right\rceil
\leq\left\lceil T_{0}\left\Vert H\right\Vert \right\rceil +\frac
{\eta-\varepsilon_{1}}{16T_{0}}\leq\left\lceil T_{0}\left\Vert H\right\Vert
\right\rceil +1.
\]
Then as we demonstrate soon,%

\begin{align}
\left\Vert \overline{\rho}_{\mathrm{ap}}(t_{i},L)-\overline{\rho}%
(t_{i},L)\right\Vert _{1}  &  \leq\left\Vert \rho_{\mathrm{ap}\,}^{L}%
(t)-\rho_{\,}^{L}(t)\right\Vert _{1}\nonumber\\
&  \leq\frac{5}{2}\{2^{-(N^{2}-N)}+\frac{\eta-\varepsilon_{1}}{16}\}.
\label{num-error}%
\end{align}
So we can check the condition (\ref{check-cond-approx}) by computing
$\left\Vert \overline{\rho}_{\mathrm{ap}}(t_{i},L)-\left\vert e_{1}%
\right\rangle \left\langle e_{1}\right\vert \right\Vert _{1}$ with the error
at most $\frac{\eta-\varepsilon_{1}}{16\cdot4}$, which is easy (see Sec.6,
\cite{Wat2003}).

Below, we show (\ref{num-error}). Observe
\begin{align*}
\left\Vert \exp(-\iota tH_{\mathrm{ap}\,})-\exp(-\iota tH)\right\Vert  &
=\left\Vert \int_{0}^{1}ds\,e^{-\iota(1-s)tH}(-\iota t(H_{\mathrm{ap}%
\,}-H))e^{-\iota stH_{\mathrm{ap}\,}}\right\Vert \\
&  \leq\int_{0}^{1}ds\left\Vert e^{-\iota(1-s)tH}\right\Vert \left\Vert -\iota
t(H_{\mathrm{ap}\,}-H)\right\Vert \left\Vert e^{-\iota stH_{\mathrm{ap}\,}%
}\right\Vert \\
&  \leq\int_{0}^{1}ds\left\Vert T_{0}(H_{\mathrm{ap}\,}-H)\right\Vert
=T_{0}\left\Vert H_{\mathrm{ap}\,}-H\right\Vert \\
&  \leq T_{0}\cdot\frac{\eta-\varepsilon_{1}}{16T_{0}}=\frac{\eta
-\varepsilon_{1}}{16},
\end{align*}
where the identity in the first line is by X.4.2, p.311, \cite{Bahatia}. Also,%

\begin{align*}
&  \left\Vert \exp(-\iota tH_{\mathrm{ap}\,})-\sum_{k=1}^{2N^{2}}\frac{(-\iota
tH_{\mathrm{ap}\,})^{k}}{k!}\right\Vert \\
&  \leq\sum_{k=2N^{2}+1}^{\infty}\frac{\left(  T_{0}\left\Vert H_{\mathrm{ap}%
\,}\right\Vert \right)  ^{k}}{k!}=\sum_{k=2N^{2}+1}^{\infty}\frac
{T_{0}\left\Vert H_{\mathrm{ap}\,}\right\Vert }{1}\cdot\frac{T_{0}\left\Vert
H_{\mathrm{ap}\,}\right\Vert }{2}\cdot\ldots\cdot\frac{T_{0}\left\Vert
H_{\,\mathrm{ap}\,}\right\Vert }{k}\\
&  \leq\sum_{k=2N^{2}+1}^{\infty}\left(  \frac{T_{0}\left\Vert H_{\mathrm{ap}%
\,}\right\Vert }{1}\right)  ^{N}\cdot\left(  \frac{T_{0}\left\Vert
H_{\mathrm{ap}\,}\right\Vert }{T_{0}\left\Vert H_{\mathrm{ap}\,}\right\Vert
}\right)  ^{N^{2}-N-1}\cdot\left(  \frac{T_{0}\left\Vert H_{\mathrm{ap}%
\,}\right\Vert }{N^{2}}\right)  ^{k-N^{2}+1}\\
&  \leq\sum_{k=2N^{2}+1}^{\infty}\left(  T_{0}\left\Vert H_{\mathrm{ap}%
\,}\right\Vert \right)  ^{-(k-N^{2}-N+1)}=\frac{(T_{0}\left\Vert
H_{\mathrm{ap}\,}\right\Vert )^{-(N^{2}-N+2)}}{1-\left(  T_{0}\left\Vert
H_{\mathrm{ap}\,}\right\Vert \right)  ^{-1}}\\
&  \leq(T_{0}\left\Vert H_{\mathrm{ap}\,}\right\Vert )^{-(N^{2}-N)}%
\leq2^{-(N^{2}-N)},
\end{align*}
so%
\[
\left\Vert \exp(-\iota tH)-\sum_{k=1}^{2N^{2}}\frac{(-\iota tH_{\mathrm{ap}%
\,})^{k}}{k!}\right\Vert _{1}\leq2^{-(N^{2}-N)}+\frac{\eta-\varepsilon_{1}%
}{16}.
\]
Therefore, by triangle inequality we obtain (\ref{num-error}):%

\begin{align*}
&  \left\Vert \rho_{\,\mathrm{ap}\,}^{L}(t)-\rho^{L}(t)\right\Vert _{1}\\
&  \leq\left\Vert \sum_{k=1}^{2N^{2}}\frac{(-\iota tH_{\mathrm{ap}\,})^{k}%
}{k!}-e^{-\iota tH}\right\Vert \left\Vert \rho^{L}\right\Vert _{1}\left\Vert
e^{\iota tH}\right\Vert +\left\Vert \sum_{k=1}^{2N^{2}}\frac{(-\iota
tH_{\mathrm{ap}\,})^{k}}{k!}\right\Vert \left\Vert \rho^{L}\right\Vert
_{1}\left\Vert \sum_{k=1}^{2N^{2}}\frac{(-\iota tH_{\mathrm{ap}\,})^{k}}%
{k!}-e^{-\iota tH}\right\Vert \\
&  \leq2^{-(N^{2}-N)}+\frac{\eta-\varepsilon_{1}}{16}+(1+2^{-(N^{2}%
-N)})(2^{-(N^{2}-N)}+\frac{\eta-\varepsilon_{1}}{16})\\
&  \leq\frac{5}{2}\{2^{-(N^{2}-N)}+\frac{\eta-\varepsilon_{1}}{16}\}.
\end{align*}

Below we discuss computational cost for computing $\left\Vert \overline{\rho
}_{\mathrm{ap}}(t_{i},L)-\left\vert e_{1}\right\rangle \left\langle
e_{1}\right\vert \right\Vert _{1}$. In the following, $p$, $p^{\prime}%
$,$p^{\prime\prime}$ etc., are some polynomial functions.

The computation of constant size matrices $H_{i,i+1}$ and $H_{i}$ are done in
$O(p(\log T_{0}(L+1)))$, and addition of these terms ($i=1,\cdots,L$) is done
by a circuit having size $O(p(L))$ and depth $O(p(\log L))$. The $k$-th power
of the Hamiltonian is computed by a circuit having \ size $O(p(N))$ and depth
$O(p(\log N))$ using the matrix multiplication as a subroutine, which is
implemented by a circuit having size $O(p^{\prime}(d^{L},\log T_{0}(L+1)))$
and depth $O(p^{\prime}(L,\log\log T_{0}(L+1)))$ ($p$ and $p^{\prime}$ are
polynomials.). So the composition of them is computed by a circuit having size
$O(\exp(p^{\prime\prime}(L)))$ and depth $O(p^{\prime\prime}(L,\gamma))$

To sum the $k$-th powers from $k=1$ to $2N^{2}$, we use a circuit having
\ size $O(p(N))$ and depth $O(p(\log N)))$ using the matrix addition and
$k$-th power as a subroutine, which is implemented by a circuit having size
$O(\exp(p(L)))$ and depth $O(p(L))$ : so the composition can be computed by a
circuit having\ size $O(\exp(p^{\prime}(L)))$ and depth $O(p(L))$.

To compute $\overline{\rho}_{\mathrm{ap}}(t_{i},L)$, we multiply the result of
the above computation with $(\left\vert e_{1}\right\rangle \left\langle
e_{1}\right\vert )^{\otimes L+1}$, compute reduced state to the $i$-th site,
and add them from $1$ to $L+1$, and divide the resulting object by $L+1$: this
can be done using a circuit having size $O(\exp(p(L)))$ and depth $O(p(L))$,
that uses the subroutine to compute $\overline{\rho}_{\,\mathrm{ap}\,}%
(t_{i},L)$.

Since the computation of the trace distance can be done $O(\exp(p(L)))$, see
Sec.6, \cite{Wat2003}: \ Our case is in fact much easier, since the size of
the matrix is finite. The characteristic polynomial of the matrix can be
crudely computed, and the eigenvalues are computed by Neff's
algorithm\thinspace\cite{Neff}.

After all, the whole process can be done by a circuit having size
$O(\exp(p(L)))$ and depth $O(p(L))$, so it can be done by a Turing machine
having space $O(p(L))=O(\exp(p^{\prime}(n))$.

\subsection{\textsf{PSPACE}-completeness}

\subsubsection{Statement of the result and difficulties}

We modify the definition of $\mathsf{SAHF}(d,f_{H},\gamma,\eta,\varepsilon
_{1})$: the definition of input length is changed to $n:=\left\vert
v\right\vert +L$. This modified version is called $\mathsf{SAHF}^{\prime
}(d,f_{H},\gamma,\eta,\varepsilon_{1})$

\begin{theorem}
$\mathsf{SAHF}^{\prime}(d,f_{H},\gamma,\frac{1}{2},\varepsilon_{1})$
($\varepsilon_{1}\leq\frac{1}{4}$) is \textsf{PSPACE}-complete.
\end{theorem}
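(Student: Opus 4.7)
The proof has two independent halves. For containment in $\mathsf{PSPACE}$, I would reuse the algorithm of Theorem \ref{thm:expspace} verbatim: its correctness does not depend on the encoding of $L$, only the cost analysis changes. With the new input length $n := |v|+L$, all the quantities that were polynomial in $L$ in the previous proof are now polynomial in $n$. In particular, the matrix arithmetic on $d^{L+1}\times d^{L+1}$ matrices to accuracy $2^{-L^{\gamma}}$ was placed in $\mathsf{NC}(\mathrm{poly}(2^{L}))\subset\mathsf{DSPACE}(\mathrm{poly}(L))$; since $L\leq n$, this is $\mathsf{DSPACE}(\mathrm{poly}(n))=\mathsf{PSPACE}$. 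The cutoff $T_{0}=2^{O(L^{\gamma})}$ and the $O(L\cdot 2^{2L}2^{2L^{\gamma}})$ time samples affect time only, not space, so a polynomial-space Turing machine solves the problem.

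For $\mathsf{PSPACE}$-hardness, I would reduce acceptance of a polynomial-space reversible Turing machine $M$, which is $\mathsf{PSPACE}$-complete after the standard Bennett/Lange--McKenzie--Tapp reversibilization, to $\mathsf{SAHF}^\prime$. The construction is the Feynman--Kitaev-type $H=U+U^{\dagger}$ of Section 4, with $U$ implementing the RTM $M_{A}$ of Lemma \ref{lem:main-2} that simulates $M$ on $v$ in its $M$-cells and, upon acceptance, enters the amplification stage which interleaves $A$-cell flips with fresh backward--forward re-runs of the simulation. Fix a small constant $\alpha>0$ with $4\sqrt{\alpha}\leq\varepsilon_{1}$ and choose $L(n)$ a polynomial large enough that $\alpha L(n)\geq S(n)+O(1)$, where $S(n)=\mathrm{poly}(n)$ is the space bound of $M$. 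Because $J_{\boldsymbol{x}}\leq\exp(\mathrm{poly}(L))$ in this construction, the Feynman--Kitaev energy gap is $\Omega(J_{\boldsymbol{x}}^{-2})=\Omega(2^{-L^{\gamma}})$ for a suitable integer $\gamma$, validating the promise (\ref{sp-gap}). The map $v\mapsto H$ is polynomial-time computable in $n$, so the reduction is polynomial-time many-one.

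The main obstacle is the quantitative step: Lemma \ref{lem:main-2} established the amplified signal only in the limit $L^{[k]}\to\infty$ taken inside an effectively infinite iid lattice, whereas here $L$ is a single fixed polynomial in $n$ and there is no ``good block'' averaging to fall back on. I would redo the derivation of (\ref{mean-N2}) keeping explicit constants, obtaining an error of the form $O((L^{[1]})^{-1/3})+O((L^{[1]})^{2}/j_{0})$; pad $j_{0}$ by an extra input-irrelevant task so that $j_{0}\geq L^{c}$ for a sufficiently large constant $c$, and choose the polynomial $L(n)$ large enough that these errors, together with the $\sqrt{\alpha}$ contribution from (\ref{error-rho-e-2-1}), fall below $\varepsilon_{1}$. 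A minor subsidiary difficulty is that the ``good configuration'' rate estimates in Section 4 of the paper were driven by $L\to\infty$; at a single polynomial $L$ I would replace them with the worst-case analysis of a single block (no superposition over block decompositions is needed because the initial state is $|e_{0}\rangle(|\psi\rangle)^{\otimes L}$ with $|\psi\rangle=|e_{1}\rangle$, as prescribed in $\mathsf{SAHF}^\prime$). Once the quantitative error budget is met, the proof of Lemma \ref{lem:main-2} transfers to the single-lattice setting and yields the required separation between the accept and reject cases, completing $\mathsf{PSPACE}$-hardness.
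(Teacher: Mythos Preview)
Your containment argument is fine and matches the paper: the algorithm of Theorem~\ref{thm:expspace} runs in space polynomial in $L$, which with the new convention $n=|v|+L$ is polynomial in $n$.

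The hardness argument has a genuine gap. You propose to recycle the construction of Lemma~\ref{lem:main-2}, but in that construction the input string $v$ is encoded in the \emph{amplitudes} of $\left\vert\psi\right\rangle$ (via $\beta=0.v_{1}\cdots v_{n}$ and the $n^{-2}$ factor), and the decoding stage of $M_{A}$ recovers $v$ by counting the relative frequency of $1$'s among $2^{4n^{\prime}}\geq 2^{4|v|}$ consecutive $M$-cells. Since the Hamiltonian is shift-invariant with a fixed local dimension $d$, you cannot instead hardcode $v$ into the transition table of $M_{A}$; the only place $v$ can live is in the continuous parameters, and extracting an amplitude to $|v|$ bits of precision by sampling necessarily consumes exponentially many cells. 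Hence any ``good'' block must have length $\Omega(\alpha^{-1}2^{4|v|})$, forcing $L$ exponential in $|v|$ and destroying the polynomial-time reduction. Your list of quantitative worries (the $(L^{[1]})^{-1/3}$ and $(L^{[1]})^{2}/j_{0}$ terms) is about the amplification stage and misses this earlier, fatal bottleneck in the decoding stage. (Incidentally, the initial state in $\mathsf{SAHF}^{\prime}$ is $\left\vert e_{1}\right\rangle^{\otimes L+1}$, not $\left\vert e_{0}\right\rangle\otimes\left\vert\psi\right\rangle^{\otimes L}$; the $\left\vert e_{0}\right\rangle$ seed and the block structure must themselves be generated, as in Section~\ref{sec:good-rate-2}.)

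The paper resolves this by abandoning the amplitude encoding altogether for the $\mathsf{PSPACE}$ result: it encodes $\beta$ as the \emph{rotation angle} of a one-body Hamiltonian term $R_{\pi\beta}$ acting on an auxiliary qubit $\mathcal{H}^{Q_{in}}$, and decodes it by a one-qubit phase-estimation routine that extracts the $k$-th bit of $\beta$ by applying $R_{\pi\beta}$ $2^{k}$ times. This trades space for time: the decoding now uses only $O(\mathrm{poly}(|v|))$ tape cells (for the counters $k_{1},k_{2},k_{3}$) but $2^{O(|v|)}$ steps, which is harmless since only the long-time average matters. The paper also has to argue that the superposition created by these rotations stays confined to $\mathcal{H}^{Q_{in}}$ and does not spoil the orthogonality $\left\langle j;\boldsymbol{x}\right\vert\left. j^{\prime};\boldsymbol{x}\right\rangle=0$ needed for the Nagaj--Wocjan analysis; this is handled by flipping a classical marker ($q_{a}\leftrightarrow q_{b}$) at each rotation. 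None of this machinery appears in your proposal, and without it the reduction does not go through.
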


The proof that $\mathsf{SAHF}^{\prime}$ belongs to \textsf{PSPACE} is simply
scaling down the parameters of the proof of Theorem\thinspace
\ref{thm:expspace}. The proof of the hardness is almost analogous, except
except the second stage, where exponentially many bits are used. We use the
encoding of the input bit string that can be decoded using only polynomially
many qubits. It is based on a $1$-qubit version of phase estimation algorithm
which saves the space at the expense of the time complexity.

Different from the encoding used so far, the input is not encoded to the
eigenvector of the Hamiltonian, so it is not possible to modify the scheme to
the encoding to the initial state.

An apparent difficulty of this scheme involves operations (rotations of a
qubit) which creates superposition. We demonstrate, however, this difficulty
can be circumvented by flipping a bit classical bit at every rotation of a qubit.

\subsubsection{New encoding and decoding process}

We consider the algorithm that encode inputs both to he Hamiltonian and the
initial state. But by the argument used in the proof of Theorem \ref{thm:SA},
(ii), the informations encoded to the state is moved to the Hamiltonian.

The Hilbert space is slightly modified.
\begin{align*}
\mathcal{H}^{Q}  &  :=\mathcal{H}^{Q_{m}}\otimes\mathcal{H}^{Q_{u}}%
\otimes\mathcal{H}^{Q_{in}},\\
\mathcal{H}^{Q_{in}}  &  :=\mathrm{span}\,\{\left\vert \zeta_{0}\right\rangle
,\left\vert \zeta_{1}\right\rangle \},\\
\mathcal{H}^{\Gamma_{1,M}}  &  :=\mathrm{span}\,\{\left\vert b\right\rangle
;b=0,1\},
\end{align*}
and we add more symbols to $\Gamma_{2,M}$ to implement more complicated
decoding process. In addition, for notational simplicity, we add
$\mathcal{H}^{Q_{in}}$ to both $M$- and $A$-cells:
\[
\mathcal{H}^{\Gamma}=\{(\mathcal{H}^{\Gamma_{1,M}}\otimes\mathcal{H}%
^{\Gamma_{2,M}})\oplus(\mathcal{H}^{\Gamma_{1,A}}\otimes\mathcal{H}%
^{\Gamma_{2,A}})\}\otimes\mathcal{H}^{Q_{in}},
\]
though it is set to $\left\vert \zeta_{0}\right\rangle $ and never used.

The initial state is
\begin{align*}
\left\vert \psi\right\rangle  &  :=l^{-1}\left\vert e_{0}\right\rangle
+(1-l^{-2})^{1/2}\sqrt{1-\alpha}\left\vert e_{1}\right\rangle \\
&  +(1-l^{-2})^{1/2}\sqrt{\alpha}(|v|^{-1}\left\vert 1\right\rangle
+(1-|v|^{-2})^{1/2}\left\vert 0\right\rangle \left\vert s_{0}\right\rangle
),\\
\left\vert e_{0}\right\rangle  &  :=\left\vert m_{0}\right\rangle \left\vert
q_{1,init}\right\rangle \left\vert \zeta_{0}\right\rangle .
\end{align*}
so the separation of the blocks, the rate of $M$-cells, and an upper bound
$n^{\prime}$ to $|v|$ are encoded to the state. The input bit string is
encoded to the 1-body term of the Hamiltonian

Also, the Hamiltonian has the 1-body term $H_{i}^{in}=U_{i}^{in}+(U_{i}%
^{in})^{\dagger}$,
\begin{align*}
U^{in}  &  :=U^{in,1}+U^{in,2},\\
U^{in,1}  &  :=\left\vert m_{0}\right\rangle \left\langle m_{0}\right\vert
\otimes\sum_{\kappa=0,1}\left\vert q_{a,1},w_{\kappa}\right\rangle
\left\langle q_{b,1},w_{\kappa}\right\vert \otimes R_{\pi\beta},\\
U^{in,2}  &  :=\left\vert m_{0}\right\rangle \left\langle m_{0}\right\vert
\otimes\left\vert q_{a,2},w_{1}\right\rangle \left\langle q_{b,1}%
,w_{1}\right\vert \otimes R_{-\pi2^{-|v|}},
\end{align*}
where $R_{\theta}$ is the rotation by the angle $\theta$, the fractional part
of $\beta=0.v_{1}v_{2}\cdots v_{|v|}$ equals the input bit strings $v$ (with
the promise $v_{|v|}=1$). Also, $q_{b,1}$ ($q_{b,2}$, resp.) is the state that
indicates $R_{\pi\beta}$ ($R_{-\pi2^{-|v|}}$, resp.) to be applied to
$\mathcal{H}^{Q_{in}}$, and $q_{a,1}$ ($q_{a,2}$, resp.) indicates that the
rotation is just done. The role of $w=w_{0}$,$w_{1}$ be explained soon.

The Hamiltonian is based on the following quantum algorithm. $n^{\prime}$ is
encoded and decoded in the same manner as the previous procedure. Roughly, we
first decode $|v|$ and then $\beta$. Observe $\beta_{|v|}=1$ and $\beta
_{k_{1}}=0$ ($k_{1}>|v|$). Therefore, it holds that
\begin{equation}
(R_{\pi\beta})^{2^{|v|}}\left\vert \zeta_{0}\right\rangle =\pm\left\vert
\zeta_{1}\right\rangle ,\,\,\,\,\,\,(R_{\pi\beta})^{2^{k_{1}}}\left\vert
\zeta_{0}\right\rangle =\pm\left\vert \zeta_{0}\right\rangle \,\,\,(k_{1}%
>|v|), \label{|v|}%
\end{equation}
and by this we can decode $|v|$.

Also, let
\[
\beta^{(k_{1})}:=\beta-\sum_{k^{\prime}=k_{1}+1}^{|v|}2^{-k^{\prime}}%
\beta_{k^{\prime}}.
\]
Then $\beta_{k^{\prime}}^{(k)}=\beta_{k^{\prime}}$ ($k^{\prime}\leq k_{1}$)
and $\beta_{k^{\prime}}=0$ ($k^{\prime}>k_{1}$). Therefore, $\beta_{k_{1}}$
can be decoded using the identities
\begin{align*}
\left\vert \zeta_{\beta_{k_{1}}}\right\rangle  &  =\pm(R_{\pi\beta^{(k_{1})}%
})^{2^{k_{1}}}\left\vert \zeta_{0}\right\rangle ,\\
R_{\pi\beta^{(k_{1})}}  &  =(R_{-\pi2^{-|v|}})^{\sum_{k^{\prime}=|v|}%
^{k_{1}+1}2^{|v|-k^{\prime}}\beta_{k^{\prime}}}R_{\pi\beta}.
\end{align*}
\ \ 

There are three counter tracks in the tape, corresponding to the variables
$k_{1}$, $k_{2}$, and $k_{3}$. Also there is a register in the finite control
that stores variable $w$ ($=w_{0}$ or $w_{1}$), which is initially $w_{0}$.
$k_{1}$ indicates that the algorithm is working at the $k_{1}$-th digit of
$\beta_{k_{1}}$, and it runs from $n^{\prime}$ to $0$. In each loop, we
multiply $R_{\pi\beta}$ for $2^{k_{1}}$ times. $k_{2}$ indicates the times of
application, so it runs from $1$ to $2^{k_{1}}$ in each loop. If $k_{1}=|v|$,
that is detected by (\ref{|v|}), rewrite $w_{0}$ to $w_{1}$. Hereafter,
between the $k_{2}$-th and $k_{2}+1$-th application of $R_{\pi\beta}$,
$R_{-\pi2^{-|v|}}$ is applied for $\sum_{k^{\prime}=|v|}^{k_{1}+1}%
2^{|v|-k^{\prime}}\beta_{k^{\prime}}$ times. $k_{3}$ indicates the times of
application of $R_{-\pi2^{-|v|}}$, and runs from $1$ to $\sum_{k^{\prime}%
=|v|}^{k_{1}+1}2^{|v|-k^{\prime}}\beta_{k^{\prime}}$ in each sub-loop. Here,
upon single application of $R_{\pi\beta}$ ($R_{-\pi2^{-|v|}}$, resp.),
$\mathcal{H}^{Q_{u}}$ changes from $q_{a,1}$ to $q_{b,1}$ (from $q_{a,2}$ to
$q_{b,2}$, resp.).

\begin{tabular}
[c]{l}%
Set $w:=w_{0}$, $\mathcal{H}^{Q_{in}}$ to $\left\vert \zeta_{0}\right\rangle
$, $k_{1}:=n^{\prime}$.\\
For $k_{1}=n^{\prime}$ to $1$ do\\
\ \ \ If $w=w_{0}$ do\\
\ \ \ \ \ \ \ For $k_{2}=1$ to $2^{k_{1}}$ do\\
\ \ \ \ \ \ \ \ \ \ \ Apply $R_{\pi\beta}$ to $\mathcal{H}^{Q_{in}}$. \ \\
\ \ \ \ \ \ \ If the content of $\mathcal{H}^{Q_{in}}$ is $\zeta_{1}$, output
$|v|=k_{1}$. Set $w:=w_{1}$.\\
\ \ \ \ \ \ Decrease $\ k_{1}$ by $1$, and go to the next loop\\
\ \ \ If $w=w_{1}$ do\ \ \ \ \\
\ \ \ \ \ \ \ For $k_{2}=1$ to $2^{k_{1}}$ do\\
\ \ \ \ \ \ \ \ \ \ \ \ \ Apply $R_{\pi\beta}$ to $\mathcal{H}^{Q_{in}}$.\\
\ \ \ \ \ \ \ \ \ \ \ \ \ For $k_{2}=1$ to $\sum_{k^{\prime}=|v|}^{k_{1}%
+1}2^{|v|-k^{\prime}}\beta_{k^{\prime}}$, apply $R_{-\pi2^{-|v|}}$ to
$\mathcal{H}^{Q_{in}}$.\\
\ \ \ \ \ \ \ \ \ \ \ \ \ Copy the content of $\mathcal{H}^{Q_{in}}$ to a
cell, and refresh it to $\zeta_{0}$.\\
\ \ \ \ \ \ \ \ \ \ \ \ \ Decrease $\ k_{1}$ by $1$, and go to the next loop.
\end{tabular}

Observe that the superposition occurs only in $\mathcal{H}^{Q_{in}}$, since
the content of the qubit is copied to other parts of the system only if it is
not in superposition. Observe also that the configuration at the two different
steps differ not only in $\mathcal{H}^{Q_{in}}$, but also in other parts which
are not in superposition. To see this, suppose the $j$-th and $j^{\prime}$-th
steps do not differ in either $k_{1}$, $k_{2}$, or $k_{3}$ -track of the tape,
but differ in the state of $\mathcal{H}^{Q_{in}}$. So at the $j$-th step
$R_{\pi\beta}$ or $R_{-\pi2^{-|v|}}$ had been applied but the counter $k_{2}$
or $k_{3}$ is not yet increased, and at the $j^{\prime}$-th step a rotation is
not yet applied. So the coalification at the $j$-th and $j^{\prime}$-th step
differ in $Q_{u}$ at least.

Therefore, define
\[
\left\vert j;\boldsymbol{x}\right\rangle \left\langle j;\boldsymbol{x}%
\right\vert :=\mathrm{tr}_{\otimes_{i=1}^{L+1}\mathcal{H}_{i}^{Q_{in}}}%
U^{j-1}\left\vert \boldsymbol{x}\right\rangle \left\langle \boldsymbol{x}%
\right\vert (U^{\dagger})^{j-1}.
\]
Then $\left\langle j;\boldsymbol{x}\right.  \left\vert j^{\prime
};\boldsymbol{x}\right\rangle =0$ if $j\neq j^{\prime}$, and Lemma
\ref{lem:dephase-3} is verified as well. So we can apply the arguments in the
proof of the second main lemma to the reduced density $\,\mathrm{tr}%
_{\mathcal{H}^{Q_{in}}}\,\rho^{L}(t)$, leading to the analogous statement.

\end{document}